\pgfplotsset{compat=1.12}
\theoremstyle{definition}
\newtheorem{theorem}{Theorem}
\newtheorem{lemma}{Lemma}
\newtheorem{corollary}{Corollary}
\newtheorem{example}{Example}
\newtheorem{remark}{Remark}
\newtheorem{definition}{Definition}
\def\BibTeX{{\rm B\kern-.05em{\sc i\kern-.025em b}\kern-.08em
    T\kern-.1667em\lower.7ex\hbox{E}\kern-.125emX}}
\renewcommand*\env@matrix[1][*\c@MaxMatrixCols c]{%
  \hskip -\arraycolsep
  \let\@ifnextchar\new@ifnextchar
  \array{#1}}
\newcommand{\threevdots}{%
  \vbox{\baselineskip1ex\lineskiplimit0pt%
  \hbox{.}\hbox{.}\hbox{.}}}
\newcommand\bovermat[2]{%
    \makebox[0pt][l]{$\smash{\overbrace{\phantom{%
                    \begin{matrix}#2\end{matrix}}}^{\text{#1}}}$}#2}
\newcommand{\calT}{\mathcal{T}}
\newcommand{\calI}{\mathcal{I}}
\newcommand{\bfU}{\mathbf{U}}
\newcommand{\bfC}{\mathbf{C}}
\newcommand{\bfy}{\mathbf{y}}
\newcommand{\bfZ}{\mathbf{Z}}
\newcommand{\bfD}{\mathbf{D}}
\newcommand{\bfE}{\mathbf{E}}
\newcommand{\bfX}{\mathbf{X}}
\newcommand{\bfY}{\mathbf{Y}}
\newcommand{\bfA}{\mathbf{A}}
\newcommand{\bfG}{\mathbf{G}}
\newcommand{\bfu}{\mathbf{u}}
\newcommand{\bfx}{\mathbf{x}}
\newcommand{\bfc}{\mathbf{c}}
\newcommand{\bfB}{\mathbf{B}}
\newcommand{\bfI}{\mathbf{I}}
\newcommand{\bfz}{\mathbf{z}}
\newcommand{\bfM}{\mathbf{M}}
\newcommand{\bfR}{\mathbf{R}}
\newcommand{\bfL}{\mathbf{L}}
\begin{document}
%
\title{Efficient and Robust Distributed Matrix Computations via Convolutional Coding}

\definecolor{mygr}{rgb}{0.6,0.4,0.0}
\definecolor{my1color}{rgb}{0.6,0.4,0.0}
\definecolor{mycolor1}{rgb}{0.00000,0.44700,0.74100}%
\definecolor{mycolor2}{rgb}{0.85000,0.32500,0.09800}%
\tikzset{
block/.style    = {draw, thick, rectangle, minimum height = 2em, minimum width = 2em},
sum/.style      = {draw, circle, node distance = 1cm},
input/.style    = {coordinate},
output/.style   = {coordinate},
}


\author{\IEEEauthorblockN{Anindya Bijoy Das, Aditya Ramamoorthy and
Namrata Vaswani} \\
\IEEEauthorblockA{Department of Electrical and Computer Engineering,\\
Iowa State University, Ames, IA 50011 USA.\\
$\{$abd149,adityar,namrata$\}$@iastate.edu
}
\thanks{
This work was supported in part by the National Science
Foundation (NSF) under Grant CCF-1718470 and Grant CCF-1910840.}}

%



\IEEEtitleabstractindextext{%
\begin{abstract}
Distributed matrix computations -- matrix-matrix or matrix-vector multiplications -- are well-recognized to suffer from the problem of stragglers (slow or failed worker nodes). Much of prior work in this area is (i) either sub-optimal in terms of its straggler resilience, or (ii) suffers from numerical problems, i.e., there is a blow-up of round-off errors in the decoded result owing to the high condition numbers of the corresponding decoding matrices. Our work presents convolutional coding approach to this problem that removes these limitations. It is optimal in terms of its straggler resilience, and has excellent numerical robustness as long as the workers' storage capacity is slightly higher than the fundamental lower bound. Moreover, it can be decoded using a fast peeling decoder that only involves add/subtract operations. Our second approach has marginally higher decoding complexity than the first one, but allows us to operate arbitrarily close to the lower bound. Its numerical robustness can be theoretically quantified by deriving a computable upper bound on the worst case condition number over all possible decoding matrices by drawing connections with the properties of large Toeplitz matrices.  All above claims are backed up by extensive experiments done on the AWS cloud platform.
\end{abstract}

\begin{IEEEkeywords}
Distributed computing, Straggler, Convolutional coding, Toeplitz matrix, Vandermonde matrix.
\end{IEEEkeywords}}

\maketitle

\IEEEdisplaynontitleabstractindextext

%
\IEEEpeerreviewmaketitle

\section{Introduction}
%
%
%
%

\label{sec:intro}
Distributed computing clusters are heavily used in domains such as machine learning where datasets are often so large that they cannot be stored in a single computer. The widespread usage of such clusters presents several opportunities and advantages over traditional computing paradigms. However, they also present newer challenges. Large scale clusters which can be heterogeneous in nature suffer from the issue of stragglers (slow or failed workers in the system). Fig. \ref{strtime} shows the variation of speed of different {\tt t2.micro} machines in AWS (Amazon Web Services) cluster, and it can be seen that for a particular job, a slow worker node may require around $40\% -50\%$ more time than the average.


\begin{figure}[t]
\centering
\definecolor{mycolor1}{rgb}{0.80000,0.50000,0.60000}%
\definecolor{mycolor2}{rgb}{0.40000,0.40000,0.80000}%
\centering
\captionsetup{justification=centering}
\resizebox{0.69\linewidth}{!}{
\begin{tikzpicture}

\begin{axis}[%
width=5.1in,
height=3.203in,
at={(2.6in,0.756in)},
scale only axis,
xmin=0.5,
xmax=40.5,
xlabel style={font=\color{white!15!black}, font=\LARGE},
xlabel={Worker Index},
ymin=5.75,
ymax=8.75,
ylabel style={font=\color{white!15!black}, font=\LARGE},
ylabel={Time Required},
ytick={6, 6.5,7,7.5,8,8.5},
xtick={5,10,15,20,25,30,35,40},
tick label style={font=\Large},
axis background/.style={fill=white},
ymajorgrids,
legend style={nodes={scale=1.7}, at={(0.96,0.95)}, legend cell align=left, align=left, draw=white!15!black}
]
\addplot [color=gray, draw=none, mark=o, mark options={solid, gray}]
 plot [error bars/.cd, y dir = both, y explicit]
 table[row sep=crcr, y error plus index=2, y error minus index=3]{%
1	5.98581658840179	0.959149250984192	0.0465415620803835\\
2	6.00359764814377	0.839434473514557	0.0741278243064878\\
3	6.00716423034668	1.24595690727234	0.0672843360900881\\
4	5.97045059204102	0.631181526184082	0.045343589782715\\
5	5.97794937849045	0.582720758914948	0.039126393795013\\
6	5.97529013156891	0.923119711875915	0.0508352041244509\\
7	5.98533693313599	0.743480052947998	0.072195920944214\\
8	6.58506104230881	1.42492607355118	0.572207181453705\\
9	5.97961959838867	0.851500415802002	0.0456934928894039\\
10	5.980953540802	0.220931444168091	0.0493646526336669\\
11	5.98451259851456	0.546078412532807	0.0332155537605283\\
12	5.92907848358154	0.136641597747802	0.0791985034942631\\
13	6.0029022192955	1.23922281503677	0.0634812808036802\\
14	6.17343792200089	0.344864995479583	0.10947188615799\\
15	5.98008903503418	1.77345195770264	0.0548890495300292\\
16	5.98003795862198	0.804897038936615	0.0608618569374082\\
17	5.96088913917542	0.907340965270996	0.0543842697143555\\
18	6.01432245492935	1.43969160795212	0.0612913918495179\\
19	5.96912997484207	0.49222094297409	0.0366241288185121\\
20	5.9702387046814	0.4424183177948	0.0319636058807369\\
21	5.96883571147919	0.422864317893982	0.0247136354446411\\
22	5.99740345239639	1.02502148389816	0.0413964486122129\\
23	5.97340108156204	0.767411935329437	0.0416791558265688\\
24	5.98424435377121	0.63846277475357	0.0466013026237491\\
25	5.96148981332779	0.432592256069183	0.0372907090187073\\
26	5.94988328456879	0.330657658576965	0.0523052835464481\\
27	6.00984258890152	2.70222930669785	0.0753967308998105\\
28	5.9754453420639	0.991303610801697	0.0492685556411745\\
29	5.97305377244949	0.527073304653168	0.0423777890205379\\
30	5.96291655302048	0.345107614994049	0.039221465587616\\
31	5.98049085855484	0.163976018428802	0.0347806763648988\\
32	6.01313875675201	1.89541222095489	0.0715987539291385\\
33	5.98265320062637	0.366771876811981	0.0487893223762512\\
34	6.01112633943558	1.5211336016655	0.056864321231842\\
35	6.00164761781692	0.753265473842621	0.0405026936531065\\
36	5.95397742509842	0.773488442897797	0.0572324585914608\\
37	5.97822363615036	0.491628334522248	0.0391076350212094\\
38	5.96405220508575	0.256363863945007	0.0437552976608275\\
39	5.95139958620071	0.876449325084686	0.0582823967933654\\
40	5.96683704376221	0.959806089401245	0.0413140010833741\\
};
\addlegendentry{Average}

\addplot [color=mycolor1, line width=6.0pt]
  table[row sep=crcr]{%
1	5.93927502632141\\
1	6.94496583938599\\
};

\addplot [color=mycolor1, line width=6.0pt]
  table[row sep=crcr]{%
2	5.92946982383728\\
2	6.84303212165833\\
};

\addplot [color=mycolor1, line width=6.0pt]
  table[row sep=crcr]{%
3	5.93987989425659\\
3	7.25312113761902\\
};

\addplot [color=mycolor1, line width=6.0pt]
  table[row sep=crcr]{%
4	5.9251070022583\\
4	6.6016321182251\\
};

\addplot [color=mycolor1, line width=6.0pt]
  table[row sep=crcr]{%
5	5.93882298469543\\
5	6.5606701374054\\
};

\addplot [color=mycolor1, line width=6.0pt]
  table[row sep=crcr]{%
6	5.92445492744446\\
6	6.89840984344482\\
};

\addplot [color=mycolor1, line width=6.0pt]
  table[row sep=crcr]{%
7	5.91314101219177\\
7	6.72881698608398\\
};

\addplot [color=mycolor1, line width=6.0pt]
  table[row sep=crcr]{%
8	6.0128538608551\\
8	8.00998711585999\\
};

\addplot [color=mycolor1, line width=6.0pt]
  table[row sep=crcr]{%
9	5.93392610549927\\
9	6.83112001419067\\
};

\addplot [color=mycolor1, line width=6.0pt]
  table[row sep=crcr]{%
10	5.93158888816833\\
10	6.20188498497009\\
};

\addplot [color=mycolor1, line width=6.0pt]
  table[row sep=crcr]{%
11	5.95129704475403\\
11	6.53059101104736\\
};

\addplot [color=mycolor1, line width=6.0pt]
  table[row sep=crcr]{%
12	5.84987998008728\\
12	6.06572008132935\\
};

\addplot [color=mycolor1, line width=6.0pt]
  table[row sep=crcr]{%
13	5.93942093849182\\
13	7.24212503433228\\
};

\addplot [color=mycolor1, line width=6.0pt]
  table[row sep=crcr]{%
14	6.0639660358429\\
14	6.51830291748047\\
};

\addplot [color=mycolor1, line width=6.0pt]
  table[row sep=crcr]{%
15	5.92519998550415\\
15	7.75354099273682\\
};

\addplot [color=mycolor1, line width=6.0pt]
  table[row sep=crcr]{%
16	5.91917610168457\\
16	6.78493499755859\\
};

\addplot [color=mycolor1, line width=6.0pt]
  table[row sep=crcr]{%
17	5.90650486946106\\
17	6.86823010444641\\
};

\addplot [color=mycolor1, line width=6.0pt]
  table[row sep=crcr]{%
18	5.95303106307983\\
18	7.45401406288147\\
};

\addplot [color=mycolor1, line width=6.0pt]
  table[row sep=crcr]{%
19	5.93250584602356\\
19	6.46135091781616\\
};

\addplot [color=mycolor1, line width=6.0pt]
  table[row sep=crcr]{%
20	5.93827509880066\\
20	6.4126570224762\\
};

\addplot [color=mycolor1, line width=6.0pt]
  table[row sep=crcr]{%
21	5.94412207603455\\
21	6.39170002937317\\
};

\addplot [color=mycolor1, line width=6.0pt]
  table[row sep=crcr]{%
22	5.95600700378418\\
22	7.02242493629456\\
};

\addplot [color=mycolor1, line width=6.0pt]
  table[row sep=crcr]{%
23	5.93172192573547\\
23	6.74081301689148\\
};

\addplot [color=mycolor1, line width=6.0pt]
  table[row sep=crcr]{%
24	5.93764305114746\\
24	6.62270712852478\\
};

\addplot [color=mycolor1, line width=6.0pt]
  table[row sep=crcr]{%
25	5.92419910430908\\
25	6.39408206939697\\
};

\addplot [color=mycolor1, line width=6.0pt]
  table[row sep=crcr]{%
26	5.89757800102234\\
26	6.28054094314575\\
};

\addplot [color=mycolor1, line width=6.0pt]
  table[row sep=crcr]{%
27	5.93444585800171\\
27	8.71207189559937\\
};

\addplot [color=mycolor1, line width=6.0pt]
  table[row sep=crcr]{%
28	5.92617678642273\\
28	6.9667489528656\\
};

\addplot [color=mycolor1, line width=6.0pt]
  table[row sep=crcr]{%
29	5.93067598342896\\
29	6.50012707710266\\
};

\addplot [color=mycolor1, line width=6.0pt]
  table[row sep=crcr]{%
30	5.92369508743286\\
30	6.30802416801453\\
};

\addplot [color=mycolor1, line width=6.0pt]
  table[row sep=crcr]{%
31	5.94571018218994\\
31	6.14446687698364\\
};

\addplot [color=mycolor1, line width=6.0pt]
  table[row sep=crcr]{%
32	5.94154000282288\\
32	7.90855097770691\\
};

\addplot [color=mycolor1, line width=6.0pt]
  table[row sep=crcr]{%
33	5.93386387825012\\
33	6.34942507743835\\
};

\addplot [color=mycolor1, line width=6.0pt]
  table[row sep=crcr]{%
34	5.95426201820374\\
34	7.53225994110107\\
};

\addplot [color=mycolor1, line width=6.0pt]
  table[row sep=crcr]{%
35	5.96114492416382\\
35	6.75491309165955\\
};

\addplot [color=mycolor1, line width=6.0pt]
  table[row sep=crcr]{%
36	5.89674496650696\\
36	6.72746586799622\\
};

\addplot [color=mycolor1, line width=6.0pt]
  table[row sep=crcr]{%
37	5.93911600112915\\
37	6.46985197067261\\
};

\addplot [color=mycolor1, line width=6.0pt]
  table[row sep=crcr]{%
38	5.92029690742493\\
38	6.22041606903076\\
};

\addplot [color=mycolor1, line width=6.0pt]
  table[row sep=crcr]{%
39	5.89311718940735\\
39	6.8278489112854\\
};

\addplot [color=mycolor1, line width=6.0pt]
  table[row sep=crcr]{%
40	5.92552304267883\\
40	6.92664313316345\\
};

\addplot [color=black, draw=none, mark=*, mark options={solid, fill=mycolor2, black}]
  table[row sep=crcr]{%
1	5.98581658840179\\
2	6.00359764814377\\
3	6.00716423034668\\
4	5.97045059204102\\
5	5.97794937849045\\
6	5.97529013156891\\
7	5.98533693313599\\
8	6.58506104230881\\
9	5.97961959838867\\
10	5.980953540802\\
11	5.98451259851456\\
12	5.92907848358154\\
13	6.0029022192955\\
14	6.17343792200089\\
15	5.98008903503418\\
16	5.98003795862198\\
17	5.96088913917542\\
18	6.01432245492935\\
19	5.96912997484207\\
20	5.9702387046814\\
21	5.96883571147919\\
22	5.99740345239639\\
23	5.97340108156204\\
24	5.98424435377121\\
25	5.96148981332779\\
26	5.94988328456879\\
27	6.00984258890152\\
28	5.9754453420639\\
29	5.97305377244949\\
30	5.96291655302048\\
31	5.98049085855484\\
32	6.01313875675201\\
33	5.98265320062637\\
34	6.01112633943558\\
35	6.00164761781692\\
36	5.95397742509842\\
37	5.97822363615036\\
38	5.96405220508575\\
39	5.95139958620071\\
40	5.96683704376221\\
};
\addlegendentry{Bounds}

\end{axis}

\end{tikzpicture}
}
\caption{\small Variation of worker speeds for the same job over 100 runs across $40$ workers within AWS; the job involves multiplying two random matrices of size $4000 \times 4000$ twice. The average time is shown by the small circle for each worker. The upper and lower edges indicate the maximum and minimum time over the 100 runs. The required time exhibits a wide variation from $5.85$ seconds to $8.71$ seconds.}
\label{strtime}
\end{figure}
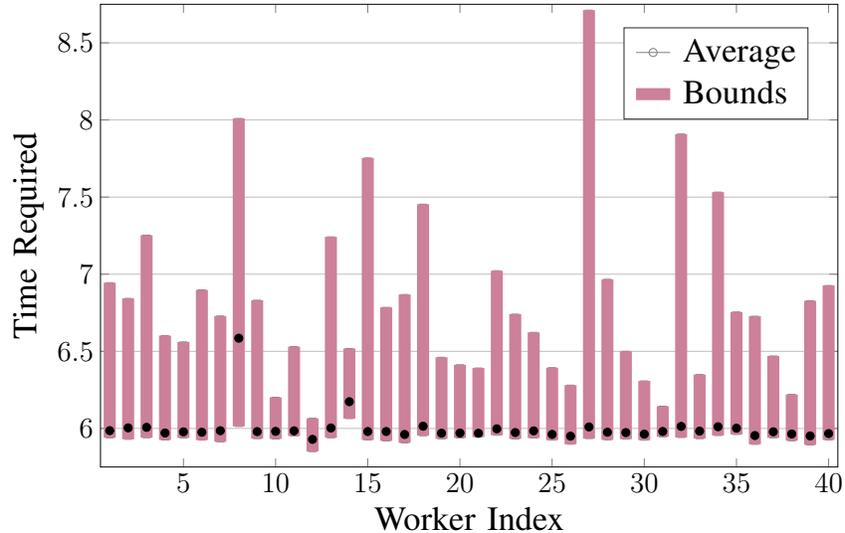 

The conventional approach \cite{zaharia2008improving} to tackle stragglers has been to run multiple copies of tasks on various machines, with the hope that at least one copy finishes on time. For instance, consider matrix-vector multiplication with a matrix $\bfA$ and vector $\bfx$, where our goal is to obtain the product $\bfA^T \bfx$ in a distributed fashion. Fig. \ref{matvecfig} shows an example where we partition $\bfA$ into four block-columns and we assign two block-columns to each of the four worker nodes. Thus each block column has been assigned twice over all four workers and we can verify that we recover the final result if {\it any} three workers finish their respective jobs. In other words, we can say that this scheme is resilient to one straggler.

However, this toy example can be made even more efficient in terms of resource utilization by dividing $\bfA$ into two block-columns $\bfA_0$ and $\bfA_1$ and assigning the worker nodes appropriate linear combinations of $\bfA_0$ and $\bfA_1$ so that the required result can be decoded from any {\it two} workers. This is the basic idea underlying ``coded computation" (introduced in the work of Lee et al. \cite{lee2017high}).
It leverages ideas from erasure coding to introduce redundancy in the computation performed by the worker nodes. Roughly speaking, as long as enough worker nodes complete their tasks, the master node can decode the intended result by appropriate post-processing.

The central problem within coded distributed matrix computation can be explained as follows. Suppose that we have large matrices $\bfA \in \mathbb{R}^{t\times r}, \bfB \in \mathbb{R}^{t \times w}$ and a vector $\bfx \in \mathbb{R}^{t}$. The goal is to either compute $\bfA^T \bfx$ (matrix-vector multiplication) or $\bfA^T \bfB$ (matrix-matrix multiplication) in a distributed fashion using $n$ worker nodes while being resistant to any $s$ stragglers. Redundancy is introduced in the computation by coding across appropriately chosen submatrices of $\bfA$ and $\bfB$ and assigning the worker nodes appropriate computation responsibilities.

The main finding of several recent works in this area is that it is possible to embed distributed matrix computations into the structure of an equivalent erasure code, where the failed nodes play the role of erasures \cite{lee2018speeding,yu2020straggler,yu2017polynomial,dutta2016short,dutta2019optimal,mallick2018rateless, wang2018coded} (we discuss related work in detail shortly). A given coded computation scheme is said to have threshold $\tau$ if the desired result can be decoded as long as any $\tau$ worker nodes return their results to the master node. This has been the focus of many works in the literature.

In this work, we consider the important issue of numerical stability within coded computation (in addition to threshold). We point out that several of the existing schemes in the literature suffer from significant numerical issues in the decoding process. In particular, the system of equations that is solved by the master node in the decoding step can have a very high condition number which in turn results in a large error in the decoded result. We present a novel scheme based on convolutional codes (operating over the reals) that simultaneously addresses numerical stability, the threshold, and possesses easy encoding/decoding. An overview of the properties of most of the known schemes in the literature is presented in Table \ref{comparison}.

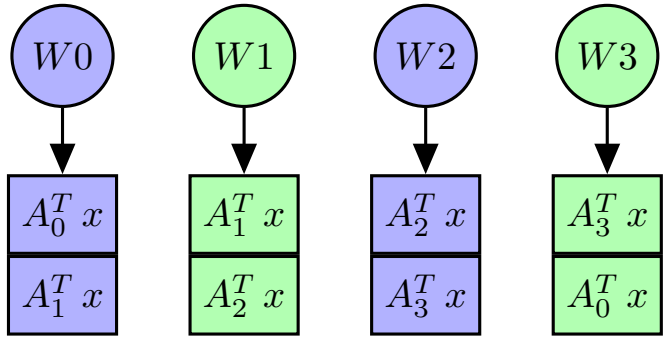
\begin{figure}[t]
\centering
\captionsetup{justification=centering}
\resizebox{0.55\linewidth}{!}{
\begin{tikzpicture}[auto, thick, node distance=2cm, >=triangle 45]

\draw
	node at (0,0)[right=-3mm]{}
	node [sum, fill=blue!30] (blk1) {$W0$}
    node [sum, fill=green!30,right = 0.7cm of blk1] (blk2) {$W1$}
    node [sum, fill=blue!30,right = 0.7cm of blk2] (blk3) {$W2$}
    node [sum, fill=green!30,right = 0.7cm of blk3] (blk4) {$W3$}
    
    node [block, fill=blue!30,below = 0.6 cm of blk1] (blk11) {$A_{0}^T \, x$}
    node [block, fill=blue!30,below = 0.0005 cm of blk11] (blk12) {$A_{1}^T\, x$}
    node [block, fill=green!30,below = 0.6 cm of blk2] (blk21) {$A_{1}^T\, x$}
    node [block, fill=green!30,below = 0.0005 cm of blk21] (blk22) {$A_{2}^T\, x$}
    node [block, fill=blue!30,below = 0.6 cm of blk3] (blk31) {$A_{2}^T\, x$}
    node [block, fill=blue!30,below = 0.0005 cm of blk31] (blk32) {$A_{3}^T\, x$}
    node [block, fill=green!30,below = 0.6 cm of blk4] (blk41) {$A_{3}^T\, x$}
    node [block, fill=green!30,below = 0.0005 cm of blk41] (blk42) {$A_{0}^T\, x$}
    ; 
\draw[->](blk1) -- node{} (blk11);
\draw[->](blk2) -- node{} (blk21);
\draw[->](blk3) -- node{} (blk31);
\draw[->](blk4) -- node{} (blk41);

\end{tikzpicture}
}
\caption{\small Matrix $A$ is divided into four submatrices. Each worker is assigned two of the submatrices and the vector  $x$.}
\label{matvecfig}
\end{figure} 

\newcommand{\cmark}{\ding{51}}%
\newcommand{\xmark}{\ding{55}}

\begin{table*}[t]
\caption{{\small Comparison with existing works \cite{mallick2018rateless,lee2017high,yu2017polynomial,8849468} and parallel works \cite{8919859,ramamoorthy2019numerically} in terms of different properties of the algorithms. Decoding complexity is mentioned for $s$ stragglers with recovery threshold $k$ where $\bfA \in \mathbb{R}^{t \times r}$ and $\bfB \in \mathbb{R}^{t \times w}$. $T$ and $q$ are decoding algorithm parameters for the random conv. code, discussed in Section \ref{sec:cond_no_discussion}, where $T, q \ll r , w$.}} 
\label{comparison}

\begin{center}
\begin{small}
\begin{sc}
\begin{tabular}{c c c c c}
\hline
\toprule
\multirow{2}{1 cm}{Codes} & Mat-Mat & Optimal & Numerical & Decoding Complexity\\
 & Mult? & Threshold? & Stability? & for Mat-Mat Mult\\
 \midrule
Repetition Codes & \cmark & \xmark & \cmark &  Zero  \\ \hline
Rateless Codes \cite{mallick2018rateless}  & \xmark & \xmark & \cmark &  \xmark \\ \hline
Product Codes \cite{lee2017high} & \cmark &\xmark & \xmark  & $O(r^3)$, assuming $r = w$\\ \hline

Polynomial Codes \cite{yu2017polynomial} & \cmark  & \cmark & \xmark & $O(rwk)$  \\ \hline
Ortho-Poly Codes \cite{8849468} & \cmark & \cmark & \cmark & $O(rwk)$\\ \hline
Circulant and Rotation Matrix \cite{ramamoorthy2019numerically} & \cmark  & \cmark & \cmark & $O(rwk)$  \\ \hline
Random Khatri-Rao Codes \cite{8919859} & \cmark & \cmark & \cmark & $O\left( \frac{rw}{k} s^2 \right)$\\ \hline
\textbf{All-Ones-Conv Code (Proposed)} & \cmark & \cmark & \cmark & $O(rws)$ {\footnotesize (add/subtract ops)} \\ \hline
\textbf{Random-Cov Code (Proposed)} & \cmark & \cmark & \cmark & $ \min(T,q) \times O\left( \frac{rw}{k} s^2 \right)$\\
%
\bottomrule
\end{tabular}
\end{sc}
\end{small}
\end{center}
\end{table*}%

This paper is organized as follows. Section \ref{sec:prob_form} explains the problem formulation and Section \ref{sec:lit} describes the background and related work and summarizes of the contributions of our work. Section \ref{sec:conv_code_intro} discusses our main ideas on how convolutional codes can be used to address distributed matrix computations, Section \ref{sec:cond_no_discussion} overviews the analysis of numerical stability for our codes and Section \ref{sec:numerical_exp} discusses the experimental performance of our proposed methods and shows the comparison with other available approaches. We conclude the paper with a discussion about future work in Section \ref{sec:conclusion}. For the sake of readability several of the proofs appear in the Appendix.

\section{Problem Formulation}
\label{sec:prob_form}
In the matrix-vector case we partition $\bfA$ into submatrices of equal size and $\bfx$ into subvectors and distribute a certain number of ``coded'' versions of these submatrices to the $n$ workers (subject to a storage constraint). Every worker computes the product of its assigned submatrices and subvectors  and sends the computed result back to the master node. The master then ``decodes'' to recover $\bfA^T \bfx$.

In the matrix-matrix multiplication scenario, each worker node receives coded versions of submatrices of $\bfA$ and  coded versions of the submatrices of $\bfB$ \footnote{A general formulation need not restrict the assignment to coded submatrices of $\bfA$ and $\bfB$. Nevertheless, all known schemes thus far and our proposed schemes work with equal-sized submatrices, so we present the formulation in this way.}. It computes pairwise products (either all or some subset thereof) of these and sends them to the master node which needs to decode to recover $\bfA^T \bfB$.

In the discussion below we discuss the matrix-matrix scenario; it applies in a natural way to the matrix-vector case as well. We consider a $p \times u$ and $p \times v$ block decomposition of $\bfA$ and $\bfB$ respectively as shown below.
\begin{align*}
\bfA = \begin{bmatrix}
\bfA_{0,0} &\dots& \bfA_{0,u-1}\\
\vdots & \ddots & \vdots \\
\bfA_{p-1,0} & \dots & \bfA_{p-1,u-1}
\end{bmatrix}  ;  \;\;\; \textrm{and} \; \; \; \bfB = \begin{bmatrix}
\bfB_{0,0} &\dots& \bfB_{0,v-1}\\
\vdots & \ddots & \vdots \\
\bfB_{p-1,0} & \dots & \bfB_{p-1,v-1}
\end{bmatrix}. 
\end{align*} The master node encodes by computing appropriate scalar linear combinations of the $\bfA_{i,j}$ matrices and respectively the $\bfB_{i,j}$ submatrices. This implies that the master node only performs scalar multiplications and additions. It is not responsible for any of the computationally intensive matrix operations.  Following this, it sends the corresponding coded submatrices to each of the workers.


We assume that a worker node cannot store the whole matrix $\bfA$ or $\bfB$. Each worker can store the equivalent of $\gamma_A$ fraction of matrix $\bfA$ and $\gamma_B$ fraction of matrix $\bfB$; this is referred to as the storage fraction.

The assumption is that some nodes will fail or will be too slow, the maximum number of such nodes is assumed to be $s$ or less. The goal is to design the coding scheme so that (i) the decoding is possible using the output of any $k = (n-s)$ workers ($k$ is often called the recovery threshold of the scheme),  (ii) it is robust to noise (both numerical precision errors and other sources of noise); and (iii) it is efficiently decodable. We say that the threshold of a scheme is {\it optimal} if it is the lowest possible given the storage constraints.


\section{Background, Related Work and Summary of Contributions}
\label{sec:lit}

In recent years, several coded computation schemes have been proposed for matrix multiplication \cite{lee2018speeding, yu2020straggler, yu2017polynomial, dutta2016short, dutta2019optimal, mallick2018rateless, wang2018coded, c3les, TangKR19,  kiani2018exploitation}. We illustrate the basic idea below using the polynomial code approach of \cite{yu2017polynomial}. These ideas are presented in a tutorial fashion in \cite{ramamoorthyDTMag20}.

Consider a scenario with $n = 5$ workers where each of these worker nodes can store $\gamma_A = \frac{1}{2}$ fraction of matrix $\bfA$ and $\gamma_B = \frac{1}{2}$ fraction of matrix $\bfB$. Consider $u = v = 2$ and $p = 1$, thus we partition both $\bfA$ and $\bfB$ into two block-columns $\bfA_0, \bfA_1$ and $\bfB_0, \bfB_1$ respectively. Next, we define two matrix polynomials as
\begin{align*}
\bfA(z) &= \bfA_0 + \bfA_1 z  \; \; \; \textrm{and} \; \; \; \bfB(z) = \bfB_0 + \bfB_1 z^2  ;\\
\textrm{so} \; \bfA^T(z) \bfB(z) & = \bfA^T_0 \bfB_0 + \bfA_1^T \bfB_0 z + \bfA_0^T \bfB_1 z^2 + \bfA_1^T \bfB_1 z^3 .
\end{align*} The master node evaluates these polynomial $\bfA(z)$ and  $\bfB(z)$ at distinct real values $z_0, z_1, \dots, z_{n-1}$, and sends the corresponding matrices to worker node $W_i$ (see Fig. \ref{matmatfig} where $z_i = i+1$). Each worker node computes the product of its assigned submatrices. 
It follows that decoding at the master node is equivalent to decoding a degree-3 real-valued polynomial. Thus, the master node can recover $\bfA^T \bfB$ as soon as it receives the results from {\textit any} four workers. Thus, in this example, the recovery threshold is, $k = 4$ and the system is resilient to $s = 1$ straggler.

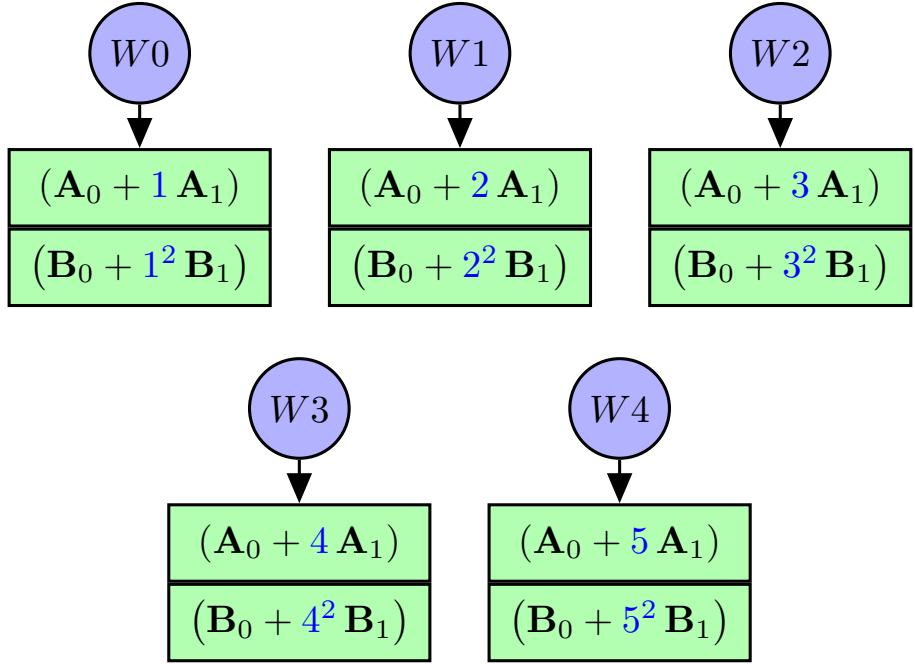
\begin{figure}[t]
\centering
\captionsetup{justification=centering}
\resizebox{0.75\linewidth}{!}{
\begin{tikzpicture}[auto, thick, node distance=2cm, >=triangle 45]

\draw
	node at (0,0)[right=-3mm]{}
    	node [sum,fill=blue!30] (blk1) {$W0$}
    node [sum, fill=blue!30,right = 2cm of blk1] (blk2) {$W1$}
    node [sum, fill=blue!30,right = 2cm of blk2] (blk3) {$W2$}
    node [sum, fill=blue!30,below right = 2.6 cm and 0.8 cm of blk1] (blk4) {$W3$}
    node [sum, fill=blue!30,below right = 2.6 cm and 0.8 cm of blk2] (blk5) {$W4$}

    node [block, fill=green!30, minimum width = 2.4 cm, below = 0.4 cm of blk1] (blk11) {$\left( \bfA_{0} +  \textcolor{blue}{1} \, \bfA_{1} \right)$}
    node [block, fill=green!30, minimum width = 2.4 cm,below = 0.0005 cm of blk11] (blk12) {$\left( \bfB_{0} +  \textcolor{blue}{1^2} \, \bfB_{1} \right)$}
    node [block, fill=green!30, minimum width = 2.4 cm,below = 0.4 cm of blk2] (blk21) {$\left( \bfA_{0} +  \textcolor{blue}{2} \, \bfA_{1} \right)$}
    node [block, fill=green!30, minimum width = 2.4 cm,below = 0.0005 cm of blk21] (blk22) {$\left( \bfB_{0} +  \textcolor{blue}{2^2} \, \bfB_{1} \right)$}
    node [block, fill=green!30, minimum width = 2.4 cm,below = 0.4 cm of blk3] (blk31) {$\left( \bfA_{0} +  \textcolor{blue}{3} \, \bfA_{1} \right)$}
    node [block, fill=green!30, minimum width = 2.4 cm,below = 0.0005 cm of blk31] (blk32) {$\left( \bfB_{0} +  \textcolor{blue}{3^2} \, \bfB_{1} \right)$}
	node [block, fill=green!30, minimum width = 2.4 cm,below = 0.4 cm of blk4] (blk41) {$\left( \bfA_{0} +  \textcolor{blue}{4} \, \bfA_{1} \right)$}
	node [block, fill=green!30, minimum width = 2.4 cm,below = 0.0005 cm of blk41] (blk42) {$\left( \bfB_{0} +  \textcolor{blue}{4^2} \, \bfB_{1} \right)$}
    node [block, fill=green!30, minimum width = 2.4 cm,below = 0.4 cm of blk5] (blk51) {$\left( \bfA_{0} +  \textcolor{blue}{5} \, \bfA_{1} \right)$}
    node [block, fill=green!30, minimum width = 2.4 cm,below = 0.0005 cm of blk51] (blk52) {$\left( \bfB_{0} +  \textcolor{blue}{5^2} \, \bfB_{1} \right)$}
    ;
\draw[->](blk1) -- node{} (blk11);
\draw[->](blk2) -- node{} (blk21);
\draw[->](blk3) -- node{} (blk31);
\draw[->](blk4) -- node{} (blk41);
\draw[->](blk5) -- node{} (blk51);

\end{tikzpicture}
}
\caption{\small Matrices $\bfA$ and $\bfB$ are divided into two block-columns each. Each worker is assigned one coded submatrix from $\bfA$ and another coded submatrix from $\bfB$.}
\label{matmatfig}
\end{figure} 

A different solution can be obtained using the approach in \cite{dutta2019optimal} for the same example. Let $u = v = 1$ and $p = 2$, so we can write $\bfA^T \bfB = \bfA_0^T \bfB_0 + \bfA_1^T \bfB_1$. Now we define two matrix polynomials as
\begin{align*}
  \bfA(z) &= \bfA_0 z + \bfA_1  \; \; \; \textrm{and} \; \; \; \bfB(z) = \bfB_0 + \bfB_1 z ; \\
\textrm{so} \; \bfA^T(z) \bfB(z) & = \bfA^T_1 \bfB_0 + \left( \bfA_0^T \bfB_0 + \bfA_1^T \bfB_1 \right) z + \bfA_0^T \bfB_1 z^2  .
\end{align*} As before, the master node will evaluate the polynomial $\bfA(z)$ and  $\bfB(z)$ at $z_0, z_i, \dots, z_{n-1}$, and send the corresponding matrices to worker node $W_i$. It follows that the master can recover all the unknowns $\left(\textrm{including} (\bfA_0^T \bfB_0 + \bfA_1^T \bfB_1 ) \right)$ as soon as it receives the results from {\it any} three workers. Thus, in this example, the recovery threshold is, $k = 3$ and the system is resilient to $s = 2$ stragglers.

It should be noted that the latter approach can lead to more straggler resilience, but the computational load per worker has doubled compared to the first approach. Moreover the communication load from the worker nodes to the master node is also higher by a factor of $4$ compared to the first approach.

For both schemes above, it can be shown that worker node computation time depends on $t$, whereas the decoding complexity is independent of it (see for instance \cite{ramamoorthyDTMag20}). Thus, for scenarios where $t$ is very large, the decoding time can be neglected. Nevertheless, a low decoding complexity is desirable from a practical standpoint.


\subsection{Related Work}
\label{relwork}

As discussed above, \cite{yu2020straggler,yu2017polynomial, dutta2019optimal} convert distributed matrix computation into polynomial evaluation/interpolation, i.e., the coded submatrices correspond to polynomial evaluation maps. We remark here that as far as we are aware, the idea of embedding matrix multiplication using polynomial maps goes back even further to Yagle \cite{365287} (the motivation there was fast matrix multiplication).


For fixed storage constraints $\gamma_A = \frac{1}{u}$ and $\gamma_B = \frac{1}{v}$ and for fixed computation overhead per worker with $p=1$ and arbitrary $u$ and $v$, the optimal threshold $\tau$ is shown to be $uv$ \cite{yu2017polynomial} using the polynomial approach. When $p \geq 2$, the work of \cite{yu2020straggler} demonstrates a threshold of $puv + p-1$. They also present a converse argument which demonstrates that this is within a factor of two of the optimal threshold. 

While the computation threshold is somewhat well understood at this point, the issue of numerical stability has received much less attention. When operating over finite fields, proving the invertibility of an appropriate submatrix of the coding matrix suffices to guarantee correct decoding. However, in decoding a real system of equations, errors in the input can get amplified by the condition number (ratio of maximum and minimum singular values) of the associated matrix; hence, a low condition number is critical.
For instance, in solving a square system of equations $\bfy = \bfM \bfx$, suppose that $\bfy$ is perturbed to $\tilde{\bf{y}}$ (owing to round-off errors) and that the estimate of $\bf{x}$ is $\hat{\bf{x}}:= \bf{M}^{-1} \tilde{\bf{y}}$. Then, the normalized error in $\hat{\bf{x}}$ is given by
\begin{align*}
\frac{\|\hat{\bf{x}} - \bf{x}\|}{\|\bf{x}\|} =  \frac{ \|\bf{M}^{-1} (\tilde{\bf{y}} - \bf{y})\| }{ \|\bf{M}^{-1} \bf{y}\| } \le
 \frac{\sigma_{\max}(\bf{M}^{-1} )}{\sigma_{\min}(\bf{M}^{-1})} \frac{\|\tilde{\bf{y}} - \bf{y}\|}{\|\bf{y}\|} =  \frac{\sigma_{\max}(\bf{M})}{\sigma_{\min}(\bf{M})} \frac{\|\tilde{\bf{y}} - \bf{y}\|}{\|\bf{y}\|} =  \kappa(\bf{M}) \frac{\|\tilde{\bf{y}} - \bf{y}\|}{\|\bf{y}\|} ,
\end{align*}
where $ \sigma_{\max}(\bfM)$ and $\sigma_{\min}(\bfM)$ denote the maximum and minimum singular values of $\bfM$ and their ratio $\kappa(\bf{M})$ is the condition number of the decoding matrix $\bfM$. Thus, it is clear that a small condition number of the decoding matrix leads to less amplification of the round-off error in $\hat{\bf{x}}$.

This issue is especially relevant since it is well recognized that polynomial interpolation over the reals suffers from significant numerical issues since the corresponding Vandermonde matrices have very high condition numbers (that are exponential in their size \cite{Pan16}).  In fact, even for clusters with around $n=30$ nodes, the condition number of the polynomial approach \cite{yu2017polynomial} is so large that the decoded result is essentially useless (see Section \ref{sec:numerical_exp}). We note here that Section VII of \cite{yu2020straggler} remarks that the numerical issues can be handled by embedding all operations within a finite field. In Section \ref{sec:numerical_exp}, we demonstrate that the performance of this method is strongly dependent on the entries of matrices $\bfA$ and $\bfB$ and the resultant normalized MSE can be quite bad \cite{Tang2020Auth}.

Some recent works have highlighted and considered the issue of numerical stability in this context. The work of \cite{8849395, 8849451} presented strategies for distributed matrix-vector multiplication and demonstrated some schemes that empirically have better numerical performance than polynomial based schemes for some values of $n$ and $s$. The work in \cite{8849395}  considers a convolutional coding approach, but from a parity check matrix perspective and the work in \cite{8849451} uses universally decodable matrices which further allows to utilize the partial computations of the stragglers. However, both these approaches work only for the matrix-vector problem and do not provide a computable bound on the condition number of the decoding submatrices.


The work of \cite{8849468} presents an alternate approach that works within the basis of orthogonal polynomials. They demonstrate that the worst case condition number of their schemes is at most $O(n^{2s})$ and their numerical experiments demonstrate improvements with respect to \cite{yu2017polynomial}. Our experimental evaluation in Section \ref{sec:numerical_exp} clearly demonstrates that our proposed schemes have condition numbers that are orders of magnitude lower than \cite{8849468}.  \cite{8919859} present an approach where the encoded matrices are generated by taking random linear combinations of the block-columns of the respective matrices (this was also suggested in Remark $8$ of \cite{yu2017polynomial}). We note here that their approach can be considered as a subclass of our methods, as discussed in Section \ref{sec:numerical_exp}. Table \ref{comparison} shows a comparison of the features of several well-known approaches for distributed matrix computations. Our results in Section \ref{sec:numerical_exp} show that the underlying structure of our codes consistently results in lower worst case condition numbers than \cite{8919859}. Finally, the parallel work of \cite{ramamoorthy2019numerically} presents an approach that leverages the properties of rotation matrices and circulant permutation matrices. They demonstrate that the worst case condition number of their recovery matrices grow at most as $O(n^{s+6})$. While their numerical results are better than ours, our work has the advantage of easy encoding and decoding and explores a convolutional approach to this problem which has not been considered before.

\subsection{Summary of Contributions}
In this paper we present an efficient and robust scheme for coded matrix computations that is inspired by convolutional codes. Our codes operate over the reals, unlike the majority of convolutional codes that are considered over finite fields \cite{lincostello}. Crucially, they exploit the Vandermonde property of the recovery matrices, where the matrices are defined over a different field (formal Laurent series over $\mathbb{R}$) than the real numbers. This naturally allows for simple encoding and decoding in addition to ensuring the threshold properties.


\begin{itemize}[wide, labelwidth=!, labelindent=0pt]
    \item  Our work is among the first to provide an efficient coded computation approach for {\em both} matrix-vector and matrix-matrix multiplications that provably works in the (i) essentially noise-free regime where numerical precision issues dominate, and (ii) the noisy regime where noise is significant.

    \item We present two classes of codes in this work. Our first approach can be decoded using a peeling decoder using only add/subtract operations and has excellent numerical performance when the storage capacity of the nodes is slightly higher than the fundamental lower bound.

    When operating very close to the storage capacity lower bound, we propose an alternative random convolutional coding strategy for which we can provide a ``computable'' upper bound ({\it cf.} Theorem \ref{theorem:regalia_stuff} in Section \ref{sec:upper_bd_kappa}) on the worst case condition number of the recovery matrices. This naturally leads to a random sampling algorithm to pick a coding matrix with good performance. Our work draws novel connections with this problem and the asymptotic analysis of large Toeplitz matrices \cite{gray2006toeplitz}.

    \item An exhaustive comparison of our work with other approaches in the literature shows that the condition numbers of our work are orders of magnitude below all the comparable approaches (except \cite{ramamoorthy2019numerically}) and have fast decoding times. Fig. \ref{error_18s3_intro} depicts a comparison of the performance of the different schemes considered in our work. 

    \item As far as we are aware, most previous work has approached coded computation by exploiting its link with block codes under erasures. Our work is the first to investigate a convolutional coding approach to this problem. This in turn opens up newer problems for investigation in this area.
\end{itemize}

\begin{figure}[t]
\centering
\captionsetup{justification=centering}
\resizebox{0.68\linewidth}{!}{

\definecolor{mycolor6}{rgb}{0.92941,0.69412,0.12549}%
\definecolor{mycolor7}{rgb}{0.74902,0.00000,0.74902}%
\definecolor{mycolor8}{rgb}{0.60000,0.20000,0.00000}%

\begin{tikzpicture}
\begin{axis}[%
width=5.1in,
height=3.603in,
at={(2.6in,0.85in)},
scale only axis,
xmin=50,
xmax=150,
xlabel style={font=\color{white!15!black}, font=\LARGE},
xlabel={SNR (in dB)},
ytick={1e-09,1e-05,1e-01,1e+03,1e+07,1e+11,1e+15,1e+19},
ymode=log,
ymin=1e-13,
ymax=1e+22,
yminorticks=true,
axis background/.style={fill=white},
xmajorgrids,
ymajorgrids,
yminorgrids,
tick label style={font=\LARGE} ,
ylabel style={font=\color{white!15!black}, font=\LARGE},
ylabel={Normalized Squared Error},
axis background/.style={fill=white},
legend style={legend cell align=left, align=left, draw=white!15!black,font = \Large}
]
\addplot [color=mycolor6, line width=3.0pt, mark=o, mark options={solid, mycolor6}]
  table[row sep=crcr]{%
30	469800000000\\
40	47010000000\\
50	4701000000\\
60	470300000\\
70	47010000\\
80	4701000\\
90	470000\\
100	47060\\
110	4697\\
120	470.3\\
130	47.11\\
140	4.692\\
150	0.4715\\
};
\addlegendentry{Polynomial Code \cite{yu2017polynomial} $(1.61 sec)$}

\addplot [color=mycolor7, dashed, line width=3.0pt, mark=triangle, mark options={solid, rotate=180, mycolor7}]
  table[row sep=crcr]{%
30	1652000\\
40	165100\\
50	16520\\
60	1650\\
70	164.9\\
80	16.5\\
90	1.65\\
100	0.1652\\
110	0.0165\\
120	0.00165\\
130	0.0001649\\
140	1.653e-05\\
150	1.646e-06\\
};
\addlegendentry{Ortho Poly Code\cite{8849468} $(1.60 sec)$}

\addplot [color=mycolor8, dashdotted, line width=3.0pt, mark=triangle, mark options={solid, rotate=270, mycolor8}]
  table[row sep=crcr]{%
30	36870\\
40	3684\\
50	368.9\\
60	36.86\\
70	3.684\\
80	0.3687\\
90	0.03686\\
100	0.003685\\
110	0.0003685\\
120	3.686e-05\\
130	3.682e-06\\
140	3.689e-07\\
150	3.683e-08\\
};
\addlegendentry{Random KR Code \cite{8919859} $(0.29 sec)$}

\addplot [color=blue, line width=3.0pt, mark=diamond, mark options={dotted, blue}]
  table[row sep=crcr]{%
50	0.0117699248109080\\
60	0.00118158341272913\\
70	0.000117292925170720 \\
80	1.18994613870255e-05\\
90	1.17847196896449e-06\\
100	1.18574728081043e-07\\
110	1.18053045862444e-08\\
120	1.17727488258126e-09\\
130	1.16901752716387e-10\\
140	1.18488927974027e-11\\
150	1.18770185559698e-12\\
};
\addlegendentry{Circulant and Rotation Matrix \cite{ramamoorthy2019numerically} $(1.61 sec)$}

\addplot [color=black, dotted, line width=3.0pt, mark=square, mark options={solid, black}]
  table[row sep=crcr]{%
30	412.1\\
40	41.24\\
50	4.127\\
60	0.4115\\
70	0.04126\\
80	0.004109\\
90	0.0004115\\
100	4.108e-05\\
110	4.106e-06\\
120	4.125e-07\\
130	4.128e-08\\
140	4.116e-09\\
150	4.131e-10\\
};
\addlegendentry{Proposed All Ones Conv Codes $(0.34 sec)$}

\addplot [color=red, line width=3.0pt, mark=diamond, mark options={solid, red}]
  table[row sep=crcr]{%
30	114.4\\
40	11.53\\
50	1.143\\
60	0.1152\\
70	0.01164\\
80	0.00116\\
90	0.0001166\\
100	1.167e-05\\
110	1.156e-06\\
120	1.152e-07\\
130	1.157e-08\\
140	1.163e-09\\
150	1.169e-10\\
};
\addlegendentry{Proposed Random ConvCodes $(0.57 sec)$}

\end{axis}
\end{tikzpicture}%
}
\vspace{-0.1 in}
\caption{\small Normalized MSE vs. SNR for different coded computation schemes for distributed matrix-matrix multiplication over $n = 18$ workers and $s = 3$ stragglers. The decoding time is reported for the different approaches in parentheses in the legend.}
\label{error_18s3_intro}
\vspace{-0.1 in}
\end{figure}
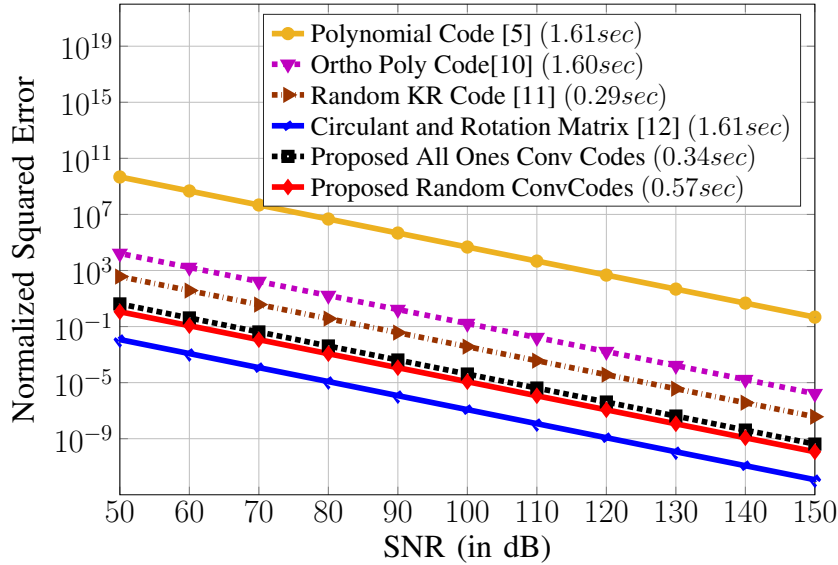 

\section{Convolutional Coding for Distributed Matrix Computation}
\label{sec:conv_code_intro}

\subsection{Simple Illustrative Example}
\label{sec:simple_illus}
We explain our key idea by means of the following example.
Consider two row vectors in $\mathbb{R}^q$, $\mathbf{u}_0 = [u_{00}~ u_{01}~ \dots~ u_{0(q-1)}]$ and $\mathbf{u}_1 = [u_{10}~ u_{11}~ \dots~ u_{1(q-1)}]$. These vectors can also be represented as polynomials in the {\it indeterminate} $D$, $\mathbf{u}_i(D) = \sum\limits_{j=0}^{q-1}u_{ij} D^j$ for $i = 0,1$. As explained in Appendix \ref{sec:invertibility_mat}, these polynomials can be treated as elements in the ring of formal Laurent series in $D$ \cite{niven1969formal}. Moreover, it can be shown that this ring is in fact a field, i.e., each element has a corresponding inverse.
Consider the following encoding of $[\mathbf{u}_0(D) \; \; \mathbf{u}_1(D)]$.
\begin{align*}
[\bfc_0(D) \;\; \bfc_1(D) \;\; \bfc_2(D) \;\; \bfc_3(D)]
 = \;  \left[\mathbf{u}_0(D) \; \mathbf{u}_1(D) \right] \;
\underbrace{\begin{bmatrix}
1 & 0 & 1 & 1 \\
0 & 1 & 1 & D \\
\end{bmatrix}.
}_{\bfG(D)}
\end{align*}

It is not too hard to see that the polynomials $\mathbf{u}_0(D)$ and $\mathbf{u}_1(D)$ (equivalently the vectors $\mathbf{u}_0, \mathbf{u}_1$) can be recovered (or ``decoded'') from any two entries of the vector $[\bfc_0(D) ~\bfc_1(D)~\bfc_2(D)~\bfc_3(D)]$. For instance, suppose that we only receive $\bfc_2(D)$ and $\bfc_3(D)$. Notice that
\vspace{-0.1 in}
\begin{align*}
 \bfc_2(D) \;  & = \;  \sum_{j=0}^{q-1} (u_{0j} + u_{1j})D^j \; \;\; \textrm{and} \\
 \bfc_3(D) \;  & = \;  u_{00} + \sum_{j=0}^{q-2} (u_{0(j+1)} + u_{1j})D^j + u_{1(q-1)}D^q.
\end{align*}

Starting with $u_{00}$ from the constant term of $\bfc_3(D)$, one can iteratively recover each of the coefficients of $\mathbf{u}_0(D)$ and $\mathbf{u}_1(D)$, with only one new variable to recover in each iteration. A similar argument applies if we consider a different set of two entries from $[\bfc_0(D) ~\bfc_1(D)~\bfc_2(D)~\bfc_3(D)]$. We refer to such a decoding scheme as a ``peeling decoder''.

Observe that the encoded polynomial $\bfc_3(D)$ has degree $q$, while the others have degree $q-1$. Thus, if the coefficients of the polynomials $\bfc_i$ correspond to encoded data that were sent to node $i$ for processing, then node 3 would need slightly higher storage/processing capacity than nodes 0, 1, 2. Secondly, observe that the above idea can also be equivalently understood by replacing the $2 \times 4$ matrix of polynomials $\bfG(D)$ by a larger matrix of size $2q \times (4q+1)$ and rewriting all the scalar polynomials as row vectors. Let $\bfc_0, \bfc_1, \bfc_2$ be row vectors of length $q$ and $\bfc_3$ be a row vector of length $q+1$. Then,

\begin{align*}
& \begin{bmatrix}
\bfc_0 \;\; \bfc_1 \;\; \bfc_2 \;\; \bfc_3
\end{bmatrix} =  \begin{bmatrix}
\bfu_0 \;\; \bfu_1
\end{bmatrix} \; \begin{bmatrix}
\bfI_q & \mathbf{0}_{q\times q} & \bfI_q \;\;\; [\bfI_{q} \;\;\mathbf{0}] \\
\mathbf{0}_{q\times q} & \bfI_q & \bfI_q \;\;\;  [\mathbf{0}\;\; \bfI_q]
\end{bmatrix}
\end{align*}where $\mathbf{0}_{q \times q}$ is  a $q \times q$ matrix of zeroes, $\bfI_q$ is a $q \times q$ identity matrix, and $\mathbf{0}$ is a column of zeroes. In what follows, we consider generalizations of this basic example where the $\bfu_i$'s will correspond to block-columns of $\bfA$ and $\bfB$.

\subsection{Proposed matrix-vector multiplication scheme}
\label{sec:matvec_section}
The above idea can naturally be adapted to the distributed matrix-vector multiplication setting. We show an example in Fig. \ref{matvec} with $n = 4$ workers and $s = 2$ stragglers, so $k = n - s = 2.$. Suppose that matrix $\bfA$ is partitioned into $k q$ block-columns (the choice of $q$ will be discussed shortly). In our work, the presentation follows more naturally if we index the block-columns of $\bfA$ using two indices instead of one. In particular, they are indexed as $\bfA_{\langle i,j \rangle}, i \in [k], j \in [q]$ (where $[\ell]$ denotes the set $\{0, \dots, \ell -1\}$) and each worker node stores at most $\gamma r$ columns of length-$t$ ($\gamma$ is called the storage fraction).

Let $\bfU_i(D) = \sum_{j=0}^{q-1} \bfA^T_{\langle i,j \rangle} D^{j}$ for $0 \leq i \leq k-1$. Furthermore, let $\bfY_{k,s}$ denote a $k \times s$ matrix whose $(i,j)$-th submatrix is  $(\bfY_{k,s})_{i,j} = (D^j)^{i}$, for $i \in [k], j \in [s]$, i.e., $\bfY_{k,s}$ has the Vandermonde structure. We define
\begin{align}
\label{GmvD}
\bfG_{mv}(D) \; = \; \begin{bmatrix}
     \underbrace{\bfI_k}_{\textrm{message part}} \; \;\bigg{|} \; \; \underbrace{\bfY_{k,s}(D)}_{\textrm{parity part}}
\end{bmatrix}.
\vspace{-0.05in}
\end{align}
Consider the encoding
\begin{align*}
 [\bfC_0(D) \;\;\; ~\bfC_1(D)~ \; \dots ~ \; \bfC_{n-1}(D)] \nonumber = \;  [\bfU_0(D) \;\;\; ~ \bfU_1(D)~ \; \dots~ \; \bfU_{k-1}(D)]~ \bfG_{mv}(D).
\end{align*}

To arrive at the distributed matrix-vector multiplication scheme, we simply interpret the coefficients of the powers of $D$ in $\bfC_i(D)$ as the encoded submatrices assigned to worker $i$ (see Fig. \ref{matvec} for an example). With this assignment, worker $i$ computes the inner product of its assigned matrices and $\bfx$. We say that a $k \times n$ matrix is maximum-distance-separable (MDS) if any of its $k \times k$ submatrices is nonsingular. This property further implies that $\bfA^T \bfx$ can be recovered as long as any $k$ workers complete their tasks. The following result shows that $\bfG_{mv}(D)$ is MDS; the proof appears in the Appendix.

\begin{corollary}[Corollary of upcoming Theorem \ref{theorem:nonsingular_G} given in Section \ref{sec:matmat_section}]
\label{thm1_cor}
Any $k \times k$ submatrix of $\bfG_{mv}(D)$ has a determinant which is a non-zero polynomial in $D$, i.e., it is non-singular.
\end{corollary}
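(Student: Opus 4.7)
The statement will ultimately be stated as a corollary of Theorem \ref{theorem:nonsingular_G}, so one clean route is to instantiate that upcoming theorem with the matrix-vector parameters (effectively setting the $\bfB$ side to trivial, i.e.\ $v=1$ and $p=1$ in the matrix-matrix setup) and read the claim off. Since that theorem has not been stated yet, I will also outline a direct argument that parallels the one the general theorem will use, so the corollary can be understood independently.

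The plan is the following. Fix an arbitrary $k\times k$ submatrix $\bfM(D)$ of $\bfG_{mv}(D)$. Because $\bfG_{mv}(D) = [\bfI_k \mid \bfY_{k,s}(D)]$, the $k$ columns chosen to form $\bfM(D)$ split into $a$ columns drawn from the identity block (indexed by some $\calC_I\subseteq[k]$ with $|\calC_I|=a$) and $b:=k-a$ columns drawn from the parity block $\bfY_{k,s}(D)$ (indexed by $j_1<\cdots<j_b$ in $[s]$). If $a=k$, then $\bfM(D)=\bfI_k$ and $\det\bfM(D)=1\ne 0$. Otherwise, each chosen identity column is a standard basis vector, so after a suitable row/column permutation $\bfM(D)$ is block-triangular with an $a\times a$ identity block on the diagonal. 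A cofactor expansion along these identity columns then gives
\begin{equation*}
\det\bfM(D) \;=\; \pm\,\det \bfM'(D),
\end{equation*}
where $\bfM'(D)$ is the $b\times b$ submatrix of $\bfY_{k,s}(D)$ obtained by keeping the rows $i_1<\cdots<i_b$ indexing $[k]\setminus\calC_I$ and the columns $j_1<\cdots<j_b$. By the definition of $\bfY_{k,s}(D)$, the $(l,m)$ entry of $\bfM'(D)$ is $D^{\,i_l j_m}$, i.e.\ $\bfM'(D)$ is a generalized Vandermonde matrix in the ``nodes'' $y_m:=D^{j_m}$ with exponents $i_1<\cdots<i_b$.

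It therefore remains to verify that this generalized Vandermonde has a non-zero determinant as a polynomial (equivalently as an element of the Laurent-series field discussed in Appendix \ref{sec:invertibility_mat}). The cleanest way I would argue this is by specialization: choose any real $d>1$ (so that $1,d,d^2,\dots,d^{s-1}$ are pairwise distinct and strictly positive) and substitute $D=d$. Then $\bfM'(d)=[d^{\,i_l j_m}]_{l,m}$ is a generalized Vandermonde matrix over the reals with distinct positive nodes $d^{j_1},\dots,d^{j_b}$ and strictly increasing exponents; by the classical Schur/Jacobi formula (its determinant factors as the Vandermonde $\prod_{m<m'}(d^{j_{m'}}-d^{j_m})$ times a Schur polynomial in the $d^{j_m}$ with non-negative integer coefficients), this determinant is strictly positive, and in particular non-zero. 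Hence the polynomial $\det\bfM'(D)$ is not the zero polynomial, and neither is $\det\bfM(D)$.

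The main obstacle is the last step, i.e.\ certifying non-singularity of the generalized Vandermonde minor without assuming that the selected rows $i_1,\dots,i_b$ are the consecutive indices $0,1,\dots,b-1$ (a plain Vandermonde). Working with a formal indeterminate rather than with fixed real evaluation points makes this painless: it is enough to exhibit a single real substitution at which the minor is nonsingular, and the Schur-polynomial expansion (alternatively, a straightforward induction on $b$ using the fact that $d^{j_m}$ are distinct positive reals) provides that substitution. Once Theorem \ref{theorem:nonsingular_G} is proved in Section \ref{sec:matmat_section}, the corollary will also follow immediately by instantiation, which is the route I would take in the final write-up to avoid duplicating the Vandermonde argument.
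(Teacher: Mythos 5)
Your proposal is correct and follows essentially the same route as the paper: both reduce an arbitrary $k\times k$ submatrix (via cofactor expansion along the chosen identity columns) to a square minor of the Vandermonde-type parity block and then certify nonsingularity of that minor through the Jacobi/Schur bialternant factorization, which is exactly the content of Lemma~\ref{lemma:schur_poly} used to prove Theorem~\ref{theorem:nonsingular_G}. The only cosmetic differences are that you specialize at a real $d>1$ to conclude the determinant is nonzero, whereas the paper argues directly that the formal Vandermonde factor and the Schur polynomial in $D^{a_0},\dots,D^{a_{v-1}}$ are each nonzero polynomials in $D$, and that you spell out the cofactor-expansion reduction more explicitly than the paper (which glosses over the identity columns when citing the lemma).
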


Analogous to convolutional coding, we call the first $k$ workers the message workers and the last $s$ workers the parity workers.
Each of the first $k$ message workers receives $q$ submatrices $\bfA_{\langle i,j \rangle}, j=0,1,\dots,q-1$, each of which is a matrix of size  $t \times r/(kq)$. The rest of the $s$ parity workers will receive $ \geq q$ such submatrices.
The highest exponent of $D$ in the generator matrix $\bfG_{mv}(D)$ is $(s-1)(k-1)$. Thus, the maximum storage needed by a worker is $q+(s-1)(k-1)$ submatrices. When $q$ is large enough, this imbalance is not significant. If we assume a bound of $\gamma$ on the storage capacity fraction of any worker, we need

\begin{align}
\bigg( q + (s-1)(k-1) \bigg) \frac{r}{kq} \; &\leq \; \gamma r, \nonumber\\
\implies q \geq \frac{(s-1)(k-1)}{k(\gamma - \frac{1}{k})}. \label{eq:lower_bd_Delta}
\end{align}

\label{eg:comps}

For example, in Fig. \ref{matvec}, $\gamma$ is set to $\frac{5}{8}$ which leads to $q = 4$.

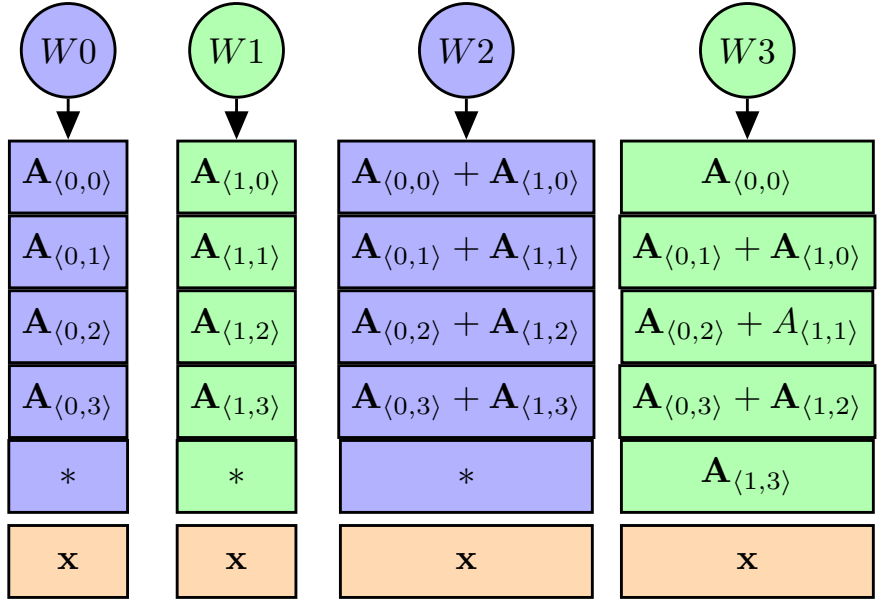
\begin{figure}[t]
\centering
\captionsetup{justification=centering}
\resizebox{0.72\linewidth}{!}{
\begin{tikzpicture}[auto, thick, node distance=2cm, >=triangle 45]

\draw

    node [sum, fill=blue!30] (blk1) {$W0$}
    node [sum, fill=green!30,right = 0.7cm of blk1] (blk2) {$W1$}
    node [sum, fill=blue!30,right = 1.3cm of blk2] (blk3) {$W2$}
    node [sum, fill=green!30,right = 1.8cm of blk3] (blk4) {$W3$}

    node [block, fill=blue!30, minimum width = 2.8em, below = 0.4 cm of blk1] (blk11) {$\bfA_{\langle 0,0\rangle}$}
    node [block, fill=blue!30, minimum width = 2.8em, below = 0.0005 cm of blk11] (blk12) {$\bfA_{\langle 0,1\rangle}$}
    node [block, fill=blue!30, minimum width = 2.8em, below = 0.0005 cm of blk12] (blk13) {$\bfA_{\langle 0,2\rangle}$}
    node [block, fill=blue!30, minimum width = 2.8em, below = 0.0005 cm of blk13] (blk14) {$\bfA_{\langle 0,3\rangle}$}
    node [block, fill=blue!30, minimum width = 3.3em, below = 0.0005 cm of blk14] (blk16) {$*$}
    node [block, fill=orange!30,minimum width = 3.3em,below =  0.1 cm of blk16] (blk17) {$\bfx$}

    node [block, fill=green!30,minimum width = 2.8em,below = 0.4 cm of blk2] (blk21) {$\bfA_{\langle 1,0\rangle}$}
    node [block, fill=green!30,minimum width = 2.8em,below = 0.0005 cm of blk21] (blk22) {$\bfA_{\langle 1,1\rangle}$}
    node [block, fill=green!30,minimum width = 2.8em,below = 0.0005 cm of blk22] (blk23) {$\bfA_{\langle 1,2\rangle}$}
    node [block, fill=green!30,minimum width = 2.8em,below = 0.0005 cm of blk23] (blk24) {$\bfA_{\langle 1,3\rangle}$}
	node [block, fill=green!30,minimum width = 3.3em,below = 0.0005 cm of blk24] (blk26) {$*$}
	node [block, fill=orange!30,minimum width = 3.3em,below =  0.1 cm of blk26] (blk27) {$\bfx$}
	
    node [block, fill=blue!30,below = 0.4 cm of blk3] (blk31) {$ \bfA_{\langle 0,0\rangle} + \bfA_{\langle 1,0\rangle} $}
    node [block, fill=blue!30,below = 0.0005 cm of blk31] (blk32) {$ \bfA_{\langle 0,1\rangle} + \bfA_{\langle 1,1\rangle}  $}
    node [block, fill=blue!30,below = 0.0005 cm of blk32] (blk33) {$ \bfA_{\langle 0,2\rangle} + \bfA_{\langle 1,2\rangle}  $}
    node [block, fill=blue!30,below = 0.0005 cm of blk33] (blk34) {$ \bfA_{\langle 0,3\rangle} + \bfA_{\langle 1,3\rangle}  $}
    node [block, fill=blue!30,below = 0.0005 cm of blk34, minimum width = 7 em] (blk36) {$*$}
	node [block, fill=orange!30,minimum width = 7em, below =  0.1 cm of blk36] (blk37) {$\bfx$}
	
    node [block, fill=green!30,minimum width = 7em, below = 0.4 cm of blk4] (blk41) {$\bfA_{\langle 0,0\rangle} $}
    node [block, fill=green!30,minimum width = 5.6em,below = 0.0005 cm of blk41] (blk42) {$\bfA_{\langle 0,1\rangle} + \bfA_{\langle 1,0\rangle}$}
    node [block, fill=green!30,minimum width = 5.6em,below = 0.0005 cm of blk42] (blk43) {$\bfA_{\langle 0,2\rangle} + A_{\langle 1,1\rangle}$}
    node [block, fill=green!30,minimum width = 5.6em,below = 0.0005 cm of blk43] (blk44) {$ \bfA_{\langle 0,3\rangle} + \bfA_{\langle 1,2\rangle}$}
    node [block, fill=green!30,minimum width = 7em, below = 0.0005 cm of blk44] (blk46) {$ \bfA_{\langle 1,3\rangle} $}
	node [block, fill=orange!30,minimum width = 7em, below =  0.1 cm of blk46] (blk47) {$\bfx$}
;
\draw[->](blk1) -- node{} (blk11);
\draw[->](blk2) -- node{} (blk21);
\draw[->](blk3) -- node{} (blk31);
\draw[->](blk4) -- node{} (blk41);
\end{tikzpicture}
}
\caption{\small Matrix-vector case with $n = 4$ workers and $s = 2$ stragglers, with $\gamma = \frac{5}{8}$.}
\label{matvec}
\vspace{-0.1in}
\end{figure} 

\subsection{Proposed matrix-matrix multiplication scheme}
\label{sec:matmat_section}
The matrix-matrix multiplication case requires the generalization of the above ideas.
Let $\bar{a} = [a_0~a_1~ \dots ~ a_{s-1}]$ and $\bar{b} = [b_0~b_1~\dots~b_{k-1}]$ be vectors of non-negative integers such that $0 \leq a_0 < a_1 < \dots < a_{s-1}$ and $0 \leq b_0 < b_1 < \dots < b_{k-1}$. Let $\bfY_{\bar{b}, \bar{a}}(D)$ denote a $k \times s$ matrix whose $(i,j)$-th entry is given by
\begin{align}
[\bfY_{\bar{b}, \bar{a}}(D)]_{i,j} = (D^{a_j})^{b_i}. \label{eq:gen_form_Y}
\end{align}
Using this matrix, define a generalization of $\bfG_{mv}(D)$ as follows
\vspace{-0.13in}
\begin{align}
\label{GnkD}
\bfG (D) = \begin{bmatrix}
    \; \bfI_k \; \;\; \big{|}  \;  \; \; \bfY_{\bar{b},\bar{a}}(D) \;
\end{bmatrix}.
\end{align}

Observe that we obtain $\bfG_{mv}(D)$ by setting  $a_j = j, 0 \leq j \leq s-1$ and $b_i = i, 0 \leq i \leq k-1$, which corresponds to $\bfY_{k,s}(D)$. We will design an encoding scheme for matrix-matrix multiplication whose equivalent generator matrix is of the form in \eqref{GnkD}. Before we explain the design, we show that this matrix also satisfies the MDS property (the proof appears in the Appendix).

\begin{theorem}
\label{theorem:nonsingular_G}
{\it Any} $k \times k$ submatrix of the generator matrix $\bfG(D)$ defined in \eqref{GnkD} is non-singular.
\end{theorem}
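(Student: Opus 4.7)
The plan is to show that the determinant of any $k \times k$ submatrix of $\bfG(D)$ is a nonzero polynomial in $D$ (hence a nonzero element of the field of formal Laurent series referenced in Appendix \ref{sec:invertibility_mat}, which makes the matrix invertible). The strategy has two steps: first reduce an arbitrary $k \times k$ submatrix to a square submatrix of the ``generalized Vandermonde'' block $\bfY_{\bar{b},\bar{a}}(D)$ via cofactor expansion along the standard-basis columns, and then argue that this smaller determinant has a monomial of unique maximum degree using the strict rearrangement inequality.

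First I would fix an arbitrary $k \times k$ submatrix $\bfM(D)$ obtained by choosing $k$ of the $n = k + s$ columns of $\bfG(D)$. Each column chosen from the identity block is a standard basis vector $e_{i_0}$; expanding $\det \bfM(D)$ along such a column deletes row $i_0$ and contributes a factor of $\pm 1$. Iterating this for every identity column selected, we get $\det \bfM(D) = \pm \det \bfM'(D)$, where $\bfM'(D)$ is a square submatrix of $\bfY_{\bar{b},\bar{a}}(D)$ whose rows are indexed by some $i_1 < i_2 < \cdots < i_m$ (the indices not killed during reduction) and whose columns are indexed by some $j_1 < j_2 < \cdots < j_m$ (the selected parity columns). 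Crucially, both index sequences remain strictly increasing, since $b_0 < b_1 < \cdots < b_{k-1}$ and $a_0 < a_1 < \cdots < a_{s-1}$ by hypothesis.

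Using \eqref{eq:gen_form_Y}, the $(p,q)$-entry of $\bfM'(D)$ is $D^{a_{j_q} b_{i_p}}$, so
\begin{equation*}
\det \bfM'(D) \;=\; \sum_{\sigma \in S_m} \mathrm{sgn}(\sigma)\, D^{E(\sigma)}, \qquad E(\sigma) := \sum_{p=1}^{m} a_{j_{\sigma(p)}}\, b_{i_p}.
\end{equation*}
The key step is to identify a monomial of unique maximum degree. I would apply the strict rearrangement inequality to the two strictly increasing sequences $b_{i_1} < \cdots < b_{i_m}$ and $a_{j_1} < \cdots < a_{j_m}$: for any transposition of positions $p < p'$, the change in $E$ produced by swapping the $a$-values equals $\pm(a_{j_{p'}} - a_{j_p})(b_{i_{p'}} - b_{i_p})$, which is strictly positive in the direction that pairs the larger $a$ with the larger $b$. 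Iterating this through any sequence of adjacent transpositions shows that the identity permutation is the unique maximizer of $E(\sigma)$. Hence the monomial $D^{E(\mathrm{id})}$ occurs with coefficient $\mathrm{sgn}(\mathrm{id}) = +1$, so $\det \bfM'(D) \not\equiv 0$ and therefore $\det \bfM(D) \not\equiv 0$, establishing non-singularity of $\bfM(D)$ over the Laurent series field.

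The proof is essentially a two-line combinatorial argument; there is no serious obstacle. The only point that requires attention is bookkeeping in the cofactor reduction, to confirm that the induced row/column indices of $\bfM'(D)$ remain strictly ordered so that the rearrangement inequality applies with strict inequality. This is automatic from the strict monotonicity of $\bar{a}$ and $\bar{b}$, and the same argument gives Corollary \ref{thm1_cor} as the special case $a_j = j$, $b_i = i$.
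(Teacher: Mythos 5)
Your proof is correct, and it follows a genuinely different route from the paper's. The paper reduces (implicitly) to the same square submatrix $\bfM'(D)$ of $\bfY_{\bar b,\bar a}(D)$ and then invokes the bialternant (Jacobi--Trudi) identity: after a row permutation, $\det\bfM'(D)$ factors as a Vandermonde determinant in the variables $D^{a_{j_1}},\dots,D^{a_{j_m}}$ times a Schur polynomial $\mathcal{S}_\lambda(D^{a_{j_1}},\dots,D^{a_{j_m}})$; since the Vandermonde factor is nonzero (the $a_{j_q}$ are distinct) and the Schur polynomial is a nonempty sum of monomials with positive integer coefficients, the determinant cannot vanish identically. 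You instead avoid the Schur machinery altogether by working directly with the Leibniz expansion and the strict rearrangement inequality to isolate a unique leading monomial $D^{E(\mathrm{id})}$. Both arguments are valid; yours is self-contained and elementary (requiring only that the sequences $\bar a$ and $\bar b$ are strictly increasing), while the paper's buys generality and brevity from a standard symmetric-function identity at the cost of importing nontrivial background. One small point in your favor: you handle the cofactor elimination of the identity-block columns explicitly, whereas the paper's phrase ``any $k\times k$ submatrix of $\bfG(D)$ is of the form $\bfX(D)$'' elides that step. For future reference, the uniqueness of the maximizer in the rearrangement argument also immediately yields that $\det\bfM'(D)$, evaluated at any sufficiently large real $D$, has sign $+1$, a slight refinement the Schur-polynomial route also gives but that neither approach needs here.
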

While non-singularity by itself does not reveal information about the corresponding condition numbers, Theorem \ref{theorem:nonsingular_G} provides a class of schemes with a specific structure that have excellent numerical stability (see Fig. \ref{error_18s3_intro} ``All Ones" curve) and can be modified and analyzed for condition number using the techniques discussed in Theorem \ref{theorem:regalia_stuff} within Section \ref{sec:cond_no_discussion}. The structure of $\bfG(D)$ in \eqref{GnkD} also allows for an efficient peeling decoder.

In the matrix-matrix case, we design generator matrices $\bfG_A(D)$ of size $k_A \times n$ and $\bfG_B(D)$ of size $k_B \times n$ such that $s = n - k_A k_B$. Each worker stores fractions $\gamma_A$ and $\gamma_B$ of matrices $\bfA$ and $\bfB$ respectively. Let $z$ be a large enough positive integer and let
\begin{align}
    \bfU^A_i(D) &= \sum_{j=0}^{q_A-1} \bfA^T_{\langle i,j \rangle} D^{zj}, i \in [k_A], \text{~and}\\
    \bfU^B_i(D) &= \sum_{j=0}^{q_B-1} \bfB_{\langle i,j \rangle} D^{j}, i \in [k_B].
\end{align}
Furthermore, we let
$\bfU^A(D) = [\bfU^A_0(D) ~\dots ~ \bfU^A_{k_A -1}(D)]$ and $\bfU^B(D) = [\bfU^B_0(D) ~\dots ~ \bfU^B_{k_B -1}(D)]$. The final goal of the master node is to recover all products of the form $\bfA^T_{\langle i_1,j_1 \rangle} \bfB_{\langle i_2,j_2 \rangle}$ for $i_1 \in [k_A], j_1 \in [q_A], i_2 \in [k_B], j_2 \in [q_B]$. Once again by forming
\begin{align*}
[\bfC^{A}_0(D) \;\; \bfC^{A}_1(D) ~\dots~ \bfC^{A}_{n-1}(D)] &= \bfU^A(D) \bfG_{A}(D), \text{~and}\\
[\bfC^{B}_0(D) \;\; \bfC^{B}_1(D) ~\dots~ \bfC^B_{n-1}(D)] &= \bfU^B(D) \bfG_{B}(D),
\end{align*} we can represent the assignment of coded submatrices of $\bfA$ and $\bfB$ to worker node $i$ by the coefficients of $\bfC^{A}_i(D)$ and $\bfC^{B}_i(D)$ respectively. Following this step, each worker node computes the pairwise product of each coded submatrix of $\bfA$ and coded submatrix of $\bfB$ assigned to it.

The matrices $\bfG_{A}(D)$ and $\bfG_{B}(D)$ will be picked in such a way so that the pairwise product of each coefficient of $\bfC^{A}_i(D)$ and each coefficient of $\bfC^{B}_i(D)$ appears in  $\bfC^{A}_i(D) \times \bfC^{B}_i(D)$, i.e., each worker node equivalently computes $\bfC^{A}_i(D) \times \bfC^{B}_i(D)$.
Using MATLAB notation and Kronecker product properties, for $i= 1, 2, \dots, n$, we have
\begin{align*}
\bfC^{A}_i(D) \times \bfC^{B}_i(D) = \; & \left[\bfU^A(D) \bfG_{A}(D)(:,i)\right] \times \left[\bfU^B(D) \bfG_{B}(D)(:,i)\right] \\
= \; & \left[\bfU^A(D) \otimes \bfU^B(D)\right] \times \left[\bfG_{A}(D)(:,i) \otimes \bfG_{B}(D)(:,i)\right],
\end{align*}
where $\otimes$ denotes the Kronecker product. Therefore, the computation peformed by the worker nodes can be compactly represented using the Khatri-Rao product \cite{zhang_matrix_anal} (denoted by $\odot$)\footnote{For two matrices with the same column dimension, the Khatri-Rao product corresponds to the matrix obtained by taking the Kronecker product of the corresponding columns.} Moreover, using the properties of the Khatri-Rao product, we have
\vspace{-0.05in}
\begin{align}
\label{uaub}
 \left[\bfU^A(D) \bfG_{A}(D) \right] \odot  \left[\bfU^B(D) \bfG_{B}(D)\right]
 = \;  \left[\bfU^A(D) \otimes \bfU^B(D) \right] \; \left[\bfG_{A}(D) \odot \bfG_{B}(D) \right] .
\end{align}

The key idea at this point is to ensure that $\bfG_{A}(D) \odot \bfG_{B}(D)$ has the structure of a matrix as in (\ref{GnkD}). Towards this end, we choose
\begin{align*}
\bfG_A(D) \; &= \; \begin{bmatrix}[cccc|cc]
\; \; \bovermat{$k_A$}{\mathbf{1}_{k_B}  & \mathbf{0} &\dots & \mathbf{0} \; & }  \\
\; \;\mathbf{0} & \mathbf{1}_{k_B}  &\dots & \mathbf{0} \;& \\
\; \;\mathbf{0} & \mathbf{0}  &\dots & \mathbf{0} \; & \; \bfY_{k_A,s}(D^z)\\
\; \;\threevdots & \threevdots & \dots & \threevdots \; &  \\
\; \;\mathbf{0} & \mathbf{0}  & \dots & \mathbf{1}_{k_B} \; &
\end{bmatrix}, \\\\
\bfG_B(D)\; &= \; \begin{bmatrix}[cccc|cc]
\; \; \bovermat{$k_A$}{\bfI_{k_B} & \bfI_{k_B} & \dots & \bfI_{k_B}} & \; \bfY_{k_B,s}(D) \;
\end{bmatrix},
\end{align*} where $\mathbf{1}_{k_B}$ is an all-ones row vector of length $k_B$, and the total number of rows in $\bfG_A(D)$ and $\bfG_B(D)$ are $k_A$ and $k_B$ respectively.
This implies that
\begin{equation}
\bfG_{A}(D) \odot \bfG_{B}(D) = [\bfI_{k} \; ~|~ \; \bfY_{k_A,s}(D^z) \odot  \bfY_{k_B,s}(D)]
\label{eq:khatri_eq}
\end{equation} where $k = k_A k_B$. The following lemma shows that the RHS of (\ref{eq:khatri_eq}) has the structure of the matrix in (\ref{GnkD}).

\begin{lemma}
\label{lemma:khatri}
The Khatri-Rao product $\bfY_{k_A,s}(D^z) \odot \bfY_{k_B,s}(D)$ is a matrix in the form of (\ref{eq:gen_form_Y}).
\vspace{-0.05 in}
\end{lemma}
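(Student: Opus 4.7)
The plan is to compute the Khatri-Rao product column by column, read off the exponents of $D$ entrywise, and identify the resulting $(k_A k_B)\times s$ matrix as one of the form (\ref{eq:gen_form_Y}). First I would write the $j$-th columns of the two Vandermonde-style factors explicitly; second I would take their Kronecker product using the standard lexicographic indexing and simplify the exponents; third I would match against the template $(D^{a_j})^{b_i}$ and pin down precisely what ``$z$ large enough'' must mean for the required strict monotonicity of $(b_i)$.

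Concretely, the $j$-th column of $\bfY_{k_A,s}(D^z)$ is $(1,\,D^{zj},\,D^{2zj},\ldots,D^{(k_A-1)zj})^T$ and the $j$-th column of $\bfY_{k_B,s}(D)$ is $(1,\,D^{j},\,D^{2j},\ldots,D^{(k_B-1)j})^T$. By definition of the Khatri-Rao product (Kronecker product of corresponding columns), the $j$-th column of $\bfY_{k_A,s}(D^z)\odot \bfY_{k_B,s}(D)$ has its entry indexed by $i=i_1 k_B + i_2$, with $i_1\in[k_A]$ and $i_2\in[k_B]$ in the standard lexicographic Kronecker order, equal to $D^{zji_1}\cdot D^{ji_2} = D^{j\,(zi_1+i_2)}$. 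Reading this as $(D^{a_j})^{b_i}$ suggests the identifications $a_j = j$ for $j\in[s]$ and $b_i = zi_1+i_2$ for $i=i_1 k_B+i_2$.

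The only thing left to verify is that $0=b_0<b_1<\cdots<b_{k_A k_B-1}$, as is required by (\ref{eq:gen_form_Y}); the sequence $(a_j)=0,1,\ldots,s-1$ is already strictly increasing and starts at $0$. This is where I would make the earlier qualitative hypothesis ``let $z$ be a large enough positive integer'' quantitative, by requiring $z\geq k_B$. Under this choice, as $i=i_1 k_B+i_2$ ranges over $0,1,\ldots,k_Ak_B-1$, the pair $(i_1,i_2)$ ranges over $[k_A]\times[k_B]$ in lexicographic order, and the map $(i_1,i_2)\mapsto zi_1+i_2$ is strictly increasing in that order: if $i_1'>i_1$, then $zi_1'\geq z(i_1+1)\geq zi_1+k_B> zi_1+i_2$, while for $i_1'=i_1$ and $i_2'>i_2$ strict increase is immediate. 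Hence $\bfY_{k_A,s}(D^z)\odot \bfY_{k_B,s}(D)$ has exactly the form (\ref{eq:gen_form_Y}) with the $(a_j),(b_i)$ described above.

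There is no real obstacle here beyond this indexing bookkeeping; the content of the lemma is combinatorial. The exponent ``spacing'' $z$ inserted into $\bfY_{k_A,s}(D^z)$ is precisely the mechanism that prevents collisions between the two geometric-progression exponent sequences when they are interleaved by the Kronecker product, and $z\geq k_B$ is the minimal threshold at which the interleaved exponents remain strictly ordered. Any larger $z$ (as may be needed elsewhere in the construction) is also fine and only introduces gaps, not collisions, in the $b_i$'s.
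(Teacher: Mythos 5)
Your proof is correct and follows essentially the same approach as the paper: compute the Kronecker product of corresponding columns and read off that all exponents are multiples of $j$, matching the template $(D^{a_j})^{b_i}$ with $a_j=j$ and $b_i=zi_1+i_2$. The only thing you add is the explicit verification that the $b_i$'s are strictly increasing under $z\geq k_B$, which the paper treats implicitly (it only later remarks, after deriving the separate bound \eqref{eq:z} on $z$, that this value is ``large enough'' for the lemma to hold).
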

\begin{proof}
Note that the Kronecker product of $\ell$-th column of $\bfY_{k_A,s}(D^z)$  and $\ell$-th column of $\bfY_{k_B,s}(D)$ can be expressed as
\begin{align}
\label{lem1eq}
\begin{bmatrix}
1\\
D^{zl}\\
D^{2zl}\\
\vdots\\
D^{(k_A -1)zl}
\end{bmatrix}
\otimes
\begin{bmatrix}
1\\
D^{l}\\
D^{2l}\\
\vdots\\
D^{(k_B -1)l}
\end{bmatrix}
= \begin{bmatrix}
1\\
\vdots\\
D^{(k_B -1)l}\\
D^{zl}\\
\vdots\\
D^{(k_B-1 +z)l}\\
\vdots\\
D^{(k_A-1)zl}\\
\vdots\\
D^{(k_B-1+ (k_A -1)z)l}\\
\end{bmatrix}
\end{align}
The vector on the RHS above consists of powers of $D^l$ and can be seen to be in the form of (\ref{eq:gen_form_Y}).
\end{proof}

Lemma \ref{lemma:khatri} explains why Theorem \ref{theorem:nonsingular_G} is applicable to the coding scheme used for matrix-matrix multiplication. Thus, this lemma, along with Theorem \ref{theorem:nonsingular_G} implies that the proposed convolutional code based matrix-matrix multiplication scheme is MDS.


Now, we need to choose such a value of $z$ which ensures that $\left[\bfU^A(D) \otimes \bfU^B(D)\right]$ in \eqref{uaub} contains all the distinct pairwise products that we are interested. We know that worker $i$ will be assigned the jobs according to the  column $i$ of the RHS in (\ref{eq:khatri_eq}). Now by examining the structure of the RHS in (\ref{eq:khatri_eq}), it can be verified that for $i = 0, 1, 2, \dots, k-1$, worker $i$ will be assigned $q_A$ submatrices from $\bfA$ and $q_B$ submatrices from $\bfB$. And for $i = k, k+1, k+2, \dots, n-1$, any worker $i$ will be assigned $q_A + (i-k) \times(k_A-1)$ submatrices from $\bfA$ and $q_B + (i-k) \times(k_B-1)$ submatrices from $\bfB$. Thus the maximum number of submatrices will be assigned to worker $n-1$, which will have $q_A + (s-1) \times(k_A-1)$ submatrices from $\bfA$ and $q_B + (s-1) \times(k_B-1)$ submatrices from $\bfB$, since $s = n - k$. For the assignment of this worker,
\begin{align*}
C^{\bfA}_{n-1} (D) = \bfU^A_0(D) & + \bfU^A_1(D) \; D^{z(s-1)}  + \dots  + \bfU^A_{k_A-1}(D) \; D^{z(k_A-1)(s-1)} \; ; \; \; \textrm{and} \\
C^{\bfB}_{n-1} (D) = \bfU^B_0(D) & + \bfU^B_1(D) \; D^{s-1} + \dots  + \bfU^B_{k_B-1}(D) \;  D^{(k_B-1)(s-1)} \; .
\end{align*}
It can be verified that $\bfC^{\bfA}_{n-1}(D)$ is a polynomial in $D$ where the exponent of $D$ at any term is an integer multiple of $z$. Since each $\bfU^B_i(D)$ has a degree $q_B - 1$, the degree of $C^{\bfB}_{n-1} (D)$ is $q_B - 1 + (s - 1)(k_B - 1)$, and thus we conclude that
\begin{align}
\label{eq:z}
z \; \geq \; q_B + (s - 1)(k_B - 1).
\end{align} It should be noted that this value of $z$ is large enough for \eqref{lem1eq} to hold.

Next, using an approach similar to (\ref{eq:lower_bd_Delta}), we can derive
 \begin{align*}
q_A \geq \frac{(s-1)(k_A-1)}{k_A(\gamma_A - \frac{1}{k_A})} \; \; \; \textrm{and} \; \; \; q_B \geq \frac{(s-1)(k_B-1)}{k_B(\gamma_B - \frac{1}{k_B})} .
\end{align*}
\begin{example}
\label{matmatexample}

Consider the computation of $\bfA^T \bfB$ over $n=6$ workers and $s=2$ stragglers.
Assume that each worker can store/process $\gamma_A = 5/8$ fraction of matrix $\bfA$ and $\gamma_B =2/3$ fraction of matrix $\bfB$.  We set $k_A = k_B = 2$, so that $q_A = 4$ and $q_B = 3$. By setting $z = q_B+(s-1)(k_B-1) = 4$, we obtain
\begin{align*}
\bfU^A_i(D) & = \sum_{j=0}^{3} \bfA^T_{\langle i,j \rangle} D^{4j}, \; \textrm{for} \; i = 0, 1 ; \\
\textrm{and} \; \; \; \bfU^B_i(D) & = \sum_{j=0}^{2} \bfB_{\langle i,j \rangle} D^{j}, \; \textrm{for} \; i = 0, 1.
\end{align*}
Furthermore,
\begin{align*}
\bfG_A(D) \; & = \; \begin{bmatrix}
1 & 1 & 0 & 0 & 1 & 1 \\
0 & 0 & 1 & 1 & 1 & D^4 \\
\end{bmatrix}  \; \; \; \textrm{and} \\
\bfG_B(D) \; & = \; \begin{bmatrix}
1 & 0 & 1 & 0 & 1 & 1 \\
0 & 1 & 0 & 1 & 1 & D \\
\end{bmatrix}.
\end{align*}
The assignment of jobs to all the workers can be obtained from $[ \, \bfU^A_0(D) \;\; \bfU^A_1(D) \, ] \, \bfG_A(D)$ and $[\bfU^B_0(D) \;\;\;\; \bfU^B_1(D)] \, \bfG_B(D)$. This is shown in Fig. \ref{matmat}.%
\end{example}

\begin{figure}[t]
\centering
\captionsetup{justification=centering}
\resizebox{0.7\linewidth}{!}{
\begin{tikzpicture}[auto, thick, node distance=2cm, >=triangle 45]
\draw
    node [sum, fill=green!30] (blk1) {$W0$}
    node [sum, fill=blue!30,right = 0.35 cm of blk1] (blk2) {$W1$}
    node [sum, fill=green!30,right = 0.35 cm of blk2] (blk3) {$W2$}
    node [sum, fill=blue!30,right = 0.35 cm of blk3] (blk4) {$W3$}
    node [sum, fill=green!30,right = 1.1 cm of blk4] (blk5) {$W4$}
    node [sum, fill=blue!30,right = 1.7 cm of blk5] (blk6) {$W5$}

    node [block, fill=green!30,below = 0.4 cm of blk1] (blk11) {$\bfA_{\langle 0,0 \rangle}$}
    node [block, fill=green!30,below = 0.0005 cm of blk11] (blk12) {$\bfA_{\langle 0,1 \rangle}$}
    node [block, fill=green!30,below = 0.0005 cm of blk12] (blk13) {$\bfA_{\langle 0,2 \rangle}$}
    node [block, fill=green!30,below = 0.0005 cm of blk13] (blk14) {$\bfA_{\langle 0,3 \rangle}$}
    node [block, fill=green!30,below = 0.0005 cm of blk14,minimum width = 3.25 em] (blk19) {$*$}
    
    node [block, fill=orange!50,below = 0.2 cm of blk19] (blk15) {$\bfB_{\langle 0,0 \rangle}$}
    node [block, fill=orange!50,below = 0.0005 cm of blk15] (blk16) {$\bfB_{\langle 0,1 \rangle}$}
    node [block, fill=orange!50,below = 0.0005 cm of blk16] (blk17) {$\bfB_{\langle 0,2 \rangle}$}
    node [block, fill=orange!50,below = 0.0005 cm of blk17,minimum width = 3.25 em] (blk18) {$*$}

    node [block, fill=blue!30,below = 0.4 cm of blk2] (blk21) {$\bfA_{\langle 0,0 \rangle}$}
    node [block, fill=blue!30,below = 0.0005 cm of blk21] (blk22) {$\bfA_{\langle 0,1 \rangle}$}
    node [block, fill=blue!30,below = 0.0005 cm of blk22] (blk23) {$\bfA_{\langle 0,2 \rangle}$}
    node [block, fill=blue!30,below = 0.0005 cm of blk23] (blk24) {$\bfA_{\langle 0,3 \rangle}$}
    node [block, fill=blue!30,below = 0.0005 cm of blk24,minimum width = 3.25 em] (blk29) {$*$}

    node [block, fill=mycolor2!30,below = 0.2 cm of blk29] (blk25) {$\bfB_{\langle 1,0 \rangle}$}
    node [block, fill=mycolor2!30,below = 0.0005 cm of blk25] (blk26) {$\bfB_{\langle 1,1 \rangle}$}
    node [block, fill=mycolor2!30,below = 0.0005 cm of blk26] (blk27) {$\bfB_{\langle 1,2 \rangle}$}
    node [block, fill=mycolor2!30,below = 0.0005 cm of blk27,minimum width = 3.25 em] (blk28) {$*$}

    node [block, fill=green!30,below = 0.4 cm of blk3] (blk31) {$\bfA_{\langle 1,0 \rangle}$}
    node [block, fill=green!30,below = 0.0005 cm of blk31] (blk32) {$\bfA_{\langle 1,1 \rangle}$}
    node [block, fill=green!30,below = 0.0005 cm of blk32] (blk33) {$\bfA_{\langle 1,2 \rangle}$}
    node [block, fill=green!30,below = 0.0005 cm of blk33] (blk34) {$\bfA_{\langle 1,3 \rangle}$}
    node [block, fill=green!30,below = 0.0005 cm of blk34,minimum width = 3.25 em] (blk39) {$*$}

    node [block, fill=orange!50,below = 0.2 cm of blk39] (blk35) {$\bfB_{\langle 0,0 \rangle}$}
    node [block, fill=orange!50,below = 0.0005 cm of blk35] (blk36) {$\bfB_{\langle 0,1 \rangle}$}
    node [block, fill=orange!50,below = 0.0005 cm of blk36] (blk37) {$\bfB_{\langle 0,3 \rangle}$}
    node [block, fill=orange!50,below = 0.0005 cm of blk37,minimum width = 3.25 em] (blk38) {$*$}

    node [block, fill=blue!30,below = 0.4 cm of blk4] (blk51) {$\bfA_{\langle 1,0 \rangle}$}
    node [block, fill=blue!30,below = 0.0005 cm of blk51] (blk52) {$\bfA_{\langle 1,1 \rangle}$}
    node [block, fill=blue!30,below = 0.0005 cm of blk52] (blk53) {$\bfA_{\langle 1,2 \rangle}$}
    node [block, fill=blue!30,below = 0.0005 cm of blk53] (blk54) {$\bfA_{\langle 1,3 \rangle}$}
    node [block, fill=blue!30,below = 0.0005 cm of blk54,minimum width = 3.25 em] (blk59) {$*$}
    
    node [block, fill=mycolor2!30,below = 0.2 cm of blk59] (blk55) {$\bfB_{\langle 1,0 \rangle}$}
    node [block, fill=mycolor2!30,below = 0.0005 cm of blk55] (blk56) {$\bfB_{\langle 1,1 \rangle}$}
    node [block, fill=mycolor2!30,below = 0.0005 cm of blk56] (blk57) {$\bfB_{\langle 1,2 \rangle}$}
    node [block, fill=mycolor2!30,below = 0.0005 cm of blk57,minimum width = 3.25 em] (blk58) {$*$}

    node [block, fill=green!30,below = 0.4 cm of blk5] (blk41) {$\bfA_{\langle 0,0 \rangle} + \bfA_{\langle 1,0 \rangle}$}
    node [block, fill=green!30,below = 0.0005 cm of blk41] (blk42) {$\bfA_{\langle 0,1 \rangle} + \bfA_{\langle 1,1 \rangle}$}
    node [block, fill=green!30,below = 0.0005 cm of blk42] (blk43) {$\bfA_{\langle 0,2 \rangle} + \bfA_{\langle 1,2 \rangle}$}
    node [block, fill=green!30,below = 0.0005 cm of blk43] (blk44) {$\bfA_{\langle 0,3 \rangle} + \bfA_{\langle 1,3 \rangle}$}
    node [block, fill=green!30,below = 0.0005 cm of blk44,minimum width = 7.05em] (blk49) {$*$}

    node [block, fill=orange!50,minimum width = 4.5em,below = 0.2 cm of blk49] (blk45) {$\bfB_{\langle 0,0 \rangle} + \bfB_{\langle 1,0 \rangle}$}
    node [block, fill=orange!50,minimum width = 4.5em,below = 0.0005 cm of blk45] (blk46) {$\bfB_{\langle 0,1 \rangle} + \bfB_{\langle 1,1 \rangle}$}
    node [block, fill=orange!50,minimum width = 4.5em,below = 0.0005 cm of blk46] (blk47) {$\bfB_{\langle 0,2 \rangle} + \bfB_{\langle 1,2 \rangle}$}
    node [block, fill=orange!50,minimum width = 7em,below = 0.0005 cm of blk47] (blk48) {$*$}

    node [block, fill=blue!30,minimum width = 4.6em, below = 0.4 cm of blk6,minimum width = 7.1em] (blk61) {$\bfA_{\langle 0,0 \rangle}$}
    node [block, fill=blue!30,minimum width = 4.6em,below = 0.0005 cm of blk61] (blk62) {$\bfA_{\langle 0,1 \rangle} + \bfA_{\langle 1,0 \rangle}$}
    node [block, fill=blue!30,minimum width = 4.6em,below = 0.0005 cm of blk62] (blk63) {$\bfA_{\langle 0,2 \rangle} + \bfA_{\langle 1,1 \rangle}$}
    node [block, fill=blue!30,minimum width = 4.6em,below = 0.0005 cm of blk63] (blk64) {$\bfA_{\langle 0,3 \rangle} + \bfA_{\langle 1,2 \rangle}$}
    node [block, fill=blue!30,minimum width = 7.05em, below = 0.0005 cm of blk64] (blk69) {$\bfA_{\langle 1,3 \rangle}$}

    node [block, fill=mycolor2!30,minimum width = 4.6em, below = 0.2 cm of blk69,minimum width = 7.1em] (blk65) {$\bfB_{\langle 0,0 \rangle}$}
    node [block, fill=mycolor2!30,minimum width = 4.6em,below = 0.0005 cm of blk65] (blk66) {$\bfB_{\langle 0,1 \rangle}+ \bfB_{\langle 1,0 \rangle}$}
    node [block, fill=mycolor2!30,minimum width = 4.6em,below = 0.0005 cm of blk66] (blk67) {$\bfB_{\langle 0,2 \rangle} + \bfB_{\langle 1,1 \rangle}$}
    node [block, fill=mycolor2!30,minimum width = 7em, below = 0.0005 cm of blk67] (blk68) {$\bfB_{\langle 1,2 \rangle}$}
    ;

\draw[->](blk1) -- node{} (blk11);
\draw[->](blk2) -- node{} (blk21);
\draw[->](blk3) -- node{} (blk31);
\draw[->](blk4) -- node{} (blk51);
\draw[->](blk5) -- node{} (blk41);
\draw[->](blk6) -- node{} (blk61);

\end{tikzpicture}
}
\caption{\small Matrix-matrix multiplication with $n = 6$ workers and $s = 2$ stragglers with $\gamma_A = \frac{5}{8}$ and $\gamma_B = \frac{2}{3}$.}
\vspace{-0.12in}
\label{matmat}
\end{figure} 
\begin{remark}
Our proposed encoding process is very simple and involves only additions at the master node.
\end{remark}

\subsection{Decoding algorithm: Peeling decoder}
\label{onespeel}

Suppose that we obtain results from workers in $\calI \subset \{0, 1, \dots, n-1\}$, with $|\calI| \geq k$. We describe the decoding process below in detail for the matrix-vector case; the discussion is quite similar for the matrix-matrix case.

In the matrix-vector case our unknowns are $\bfu_{il} = \bfA^T_{\langle i,l \rangle}\bfx, i \in [k], l \in [q]$; each of these is a vector of length $r/(kq)$. Let row-vector $\bfz_j$ denote the collection of the $j$-th entries of each of these unknowns, where $j \in [r/(kq)]$. Let the output of the worker nodes corresponding to $\bfz_j$ be denoted by $\bfy_j$. The length of $\bfy_j$ depends on $\calI$.

We assume that the master node obtains results from a subset of the message workers, $\calI_1 \subset \{0, 1, \dots, k-1\}$, so that $|\calI_1| \leq k$. This implies that it can recover $|\calI_1|q$ unknowns directly. Moreover, it obtains results from the parity workers indexed by $\calI_2 \subset \{k, k+1, \dots, n-1\}$, where $|\calI_2|  = k - |\calI_1|$. Thus, it needs to recover the remaining $kq - |\calI_1|q$ unknowns.

The underlying structure of the convolutional code allows for a very simple peeling decoder whereby, at each step, the algorithm is guaranteed to find an equation with only one unknown.
We demonstrate this by means of an example in Appendix \ref{aux:peeling_eg}. Crucially, the scheme can be decoded purely with add/subtract operations and can thus be highly optimized. This algorithm is very fast and has excellent numerical stability ({\it cf.} Fig. \ref{error_18s3_intro}) in experiments.

{\bf Decoding Complexity:}  We consider the worst case where $|\calI_2| = s$. According to the design of this scheme, each of the $kq$ unknowns appears once in every parity worker, and thus the system of equations has at most $kqs$ non-zero entries. Furthermore, in a peeling decoder one variable can be decoded and substituted in the remaining equations at each iteration. Therefore, the time complexity of solving this sparse system is $O(kq s)$. As we solve a total of $r/(kq)$ such systems of equations, the total time taken is $O(rs)$ which is {\it independent} of $q$ and thus does not grow with it; similarly it can be shown that for the matrix-matrix case the time is $O(rws)$.


It should be noted that the matrices $\bfA$ and $\bfB$ are of sizes $t \times r$ and $t \times w$ respectively, thus the computational complexity of computing $\bfA^T \bfB$ is $O(rwt)$. In a distributed system, this job is distributed over $n$ workers with $s$ stragglers, so, on average, the computational complexity of each of the workers is $O\left(\frac{rwt}{k}\right)$, where $k = n - s$. On the other hand, to get the final result, we need to recover $rw$ unknowns, which is the size of $\bfA^T \bfB$. Thus the decoding complexity does not depend on the parameter $t$ which indicates that the decoding time can be often considered negligible in comparison to the worker computation time when $t$ is very large \cite{ramamoorthyDTMag20}. Nevertheless, fast decoding is a desirable feature of any coded computation scheme.


\subsection{Effect of $q$: storage fraction, imbalance in task assignment}

Our presented scheme thus far is provably MDS, efficiently decodable and has excellent numerical stability in experiments. Note that our schemes require lower bounds on the value of $q$ which have an inverse dependence on $\gamma - 1/k$. Thus, if one wants to reduce the imbalance between the task assignments to the message nodes and the parity nodes, then $q$ needs to be chosen large enough.
It turns out that for large values of $q$, the worst case condition number of our scheme can be very large.
We present a theoretical treatment of this phenomenon in the upcoming Section \ref{sec:cond_no_discussion} and discuss techniques for mitigating this effect.

\section{Numerical stability analysis}
\label{sec:cond_no_discussion}

To understand numerical stability, we first introduce a modified encoding scheme and then discuss the matrix representation of the coding ideas described above.

\newcommand{\tG}{\tilde{\bfG}}

\begin{definition}[Randomly scaled generator matrix]
Let $\bfR$ be a $k \times s$ matrix of real numbers. Consider the generator matrix $\bfG(D)$ defined in (\ref{GnkD}). Replace $\bfY_{\bar{b},\bar{a}}(D)$ by $\bfR \circ \bfY_{\bar{b},\bar{a}}(D)$. Here, $\circ$ denotes Hadamard product (.* operation in MATLAB).%
\label{defGD_random}
\end{definition}
Note that if we set $r_{ij}=1$ for all entries of the matrix $\bfR$, we recover the old generator matrix $\bfG(D)$ (the ``All-Ones'' case).

\subsubsection{Understanding the matrix representation}
It is not hard to see that the matrix representation of the transformation induced by the $k \times n$ generator polynomial matrix $\bfG(D)$ from Definition \ref{defGD_random}
can be understood as right multiplying a $kq$-length row vector of input data by the following matrix. An example of this was given in Section \ref{sec:simple_illus}

\begin{definition}[$\tG$: matrix representation of $\bfG(D)$]
\label{defn:tilde_G}
We first define a $q \times (q+h)$ shift matrix that takes a $q$-length row vector and returns a $q+h$-length row vector, where the original vector is shifted to the right by $j$ components. This is the matrix $\tilde{\bfD}^{h;j} \triangleq \begin{bmatrix} \mathbf{0}_{q \times j} & \bfI_{q} & \mathbf{0}_{q \times (h-j)} \end{bmatrix}$.
The $(i, \ell)$-th block matrix of $\tG$ for $\ell=0,1,\dots, k-1$ and $i = 0, 1, \dots, k-1$ is

\[
(\tG)_{i,\ell} =  
\begin{cases}
\bfI_q  \ &\text{ if } \ i=\ell  \\
\bf{0}_{q\times q}  \ &\text{ if } \ \it{i} \neq \it{\ell}
\end{cases}
\vspace{-0.05 in}
\]
and for $\ell=k+j$, $j=0,1,\dots (s-1)$,
\[
(\tG)_{i,\ell}  =  r_{ij} \tilde{\bfD}^{a_j b_{k-1};a_j b_i}.
\]

Thus, $\tG$ is a $kq \times (nq+\delta)$ matrix where
\begin{align}
\label{delta}
\delta \; = \; b_{k-1} \sum\limits_{j=0}^{s-1} \; a_j.
\end{align}

\end{definition}

With the above definition, decoding can be understood as inverting the specific $k \times k$ block submatrix of $\tG$, denoted $\tG_\calI$ where $\calI$ is the set of indices of the $k$ workers that have returned their jobs.

\subsubsection{Quantifying round-off error amplification}
When assuming perfectly noise-free computations, invertibility of the decoding matrix, $\tG_\calI$, is sufficient to guarantee perfect recovery/decoding of the desired matrix-matrix product. However, since all computing devices are finite precision, matrix multiplications will frequently result in bit overflow/underflow and hence round-off errors.
As explained earlier ({\it cf.} Section \ref{relwork}), the decoding process amplifies the round-off error by a factor that can at most be as large as the condition number of the decoding matrix. Thus, the numerical stability of our scheme is quantified by the largest condition number over all block submatrices $\tG_\calI$, i.e., by



\[
\kappa_{worst} \triangleq \max_{\calI \subset [n], |\calI|=k} \kappa( \tG_\calI).
\vspace{-0.15 in}
\]

\subsection{Upper bounding $\kappa_{worst}$}
\label{sec:upper_bd_kappa}
Observe that the matrix $\tG$, and consequently the decoding submatrix $\tG_\calI$ with $|\calI|=k$, has a very specific structure. Because of this, it  is possible to show that the matrix $\tilde{\bfG}_\calI \tilde{\bfG}_\calI^T$ is a $k \times k$ block matrix with Toeplitz blocks of size $q \times q$, see in Appendix \ref{sec:proof_regalia_stuff}. This fact is useful since the asymptotics of $\lambda_{\max}(\tilde{\bfG}_\calI \tilde{\bfG}_\calI^T)$ and $\lambda_{\min}(\tilde{\bfG}_\calI \tilde{\bfG}_\calI^T)$ when $q$ is large have been studied in \cite{gazzah2001asymptotic}. In particular, Theorem $3$ of \cite{gazzah2001asymptotic} shows that using Fourier transform ideas, one can bound the eigenvalues of such matrices by computing the minimum (and maximum) of the smallest (and largest) eigenvalues of a much smaller $k \times k$ matrix that is a function of a scalar parameter $\omega$ which lies in  $[-\pi, \pi]$.

With some abuse of notation, let $\bfG_\calI(e^{\textrm{i} \omega})$ represent the matrix obtained by extracting $\bfG_\calI(D)$ (from $\bfG(D)$ in (\ref{GnkD})) and then substituting $D = e^{\textrm{i} \omega}$ (where $\textrm{i} = \sqrt{-1}$). By adapting the results of \cite{gazzah2001asymptotic} (see Appendix \ref{sec:proof_regalia_stuff} for a detailed description), we have the following theorem.

\begin{theorem}
\label{theorem:regalia_stuff}
For  $\calI \subset \{0, \dots, n-1\}$ such that $|\calI| = k$, we have
\begin{align*}
\lim_{q\to\infty} \lambda_{\min} (\tilde{\bfG}_\calI \tilde{\bfG}_\calI^*) & = \min_{\omega \in [-\pi,\pi]} \lambda_{\min} [(\bfG_\calI(e^{\textrm{i}\omega}))(\bfG_\calI(e^{\textrm{i}\omega}))^*] ; \\
\textrm{and} \; \; \; \; \lim_{q\to\infty} \lambda_{\max} (\tilde{\bfG}_\calI \tilde{\bfG}_\calI^*) & = \max_{\omega \in [-\pi,\pi]} \lambda_{\max} [(\bfG_\calI(e^{\textrm{i} \omega}))(\bfG_\calI(e^{\textrm{i} \omega}))^*].
\end{align*}

Moreover, for any $q$
\begin{align*}
 \lambda_{\max} (\tilde{\bfG}_\calI \tilde{\bfG}_\calI^*) &\leq \max_{\omega \in [-\pi,\pi]} \lambda_{\max} [(\bfG_\calI(e^{\textrm{i} \omega}))(\bfG_\calI(e^{\textrm{i}\omega}))^*] ; \\
\textrm{and} \; \; \; \; \lambda_{\min} (\tilde{\bfG}_\calI \tilde{\bfG}_\calI^*) &\geq  \min_{\omega \in [-\pi,\pi]} \lambda_{\min} [(\bfG_\calI(e^{\textrm{i} \omega}))(\bfG_\calI(e^{\textrm{i}\omega}))^*].
\end{align*}
\end{theorem}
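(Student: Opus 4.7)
The plan is to verify that Theorem~3 of \cite{gazzah2001asymptotic} applies to our matrix $\tilde{\bfG}_\calI \tilde{\bfG}_\calI^*$ by identifying its block-Toeplitz structure and then identifying the corresponding matrix-valued symbol as $\bfG_\calI(e^{\mathrm{i}\omega})\bfG_\calI(e^{\mathrm{i}\omega})^*$.

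First I would expand $\tilde{\bfG}_\calI \tilde{\bfG}_\calI^*$ block-by-block. By Definition~\ref{defn:tilde_G} each block of $\tilde{\bfG}_\calI$ is either an identity $\bfI_q$, a zero block, or a scalar multiple of a shift matrix $\tilde{\bfD}^{h;j}$. The $(i,\ell)$-block of $\tilde{\bfG}_\calI \tilde{\bfG}_\calI^*$ is a sum over the shared workers of products of the form $r_{i,\cdot} r_{\ell,\cdot}\,\tilde{\bfD}^{h;j_1}(\tilde{\bfD}^{h;j_2})^T$. A direct computation shows that $\tilde{\bfD}^{h;j_1}(\tilde{\bfD}^{h;j_2})^T$ is a $q\times q$ matrix whose entries depend only on the difference $j_1-j_2$; in other words each such product is a (banded) Toeplitz matrix. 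Hence each $q\times q$ block $(\tilde{\bfG}_\calI \tilde{\bfG}_\calI^*)_{i,\ell}$ is Toeplitz, so $\tilde{\bfG}_\calI \tilde{\bfG}_\calI^*$ is a $k\times k$ block matrix with Toeplitz blocks of size $q\times q$.

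Next I would identify the generating symbol. For a block matrix of this form, the matrix-valued symbol $\bfS(\omega)$ is obtained by taking the discrete-time Fourier transform (i.e.\ the $z$-transform evaluated at $z=e^{\mathrm{i}\omega}$) of the sequence of entries along each Toeplitz diagonal in each block. Because the shift $\tilde{\bfD}^{h;j}$ corresponds, under this Fourier identification, to multiplication by $e^{\mathrm{i}\omega j}$, the $(i,\ell)$-entry of $\bfS(\omega)$ equals the inner product of row $i$ and row $\ell$ of $\bfG_\calI(e^{\mathrm{i}\omega})$. Therefore $\bfS(\omega) = \bfG_\calI(e^{\mathrm{i}\omega})\bfG_\calI(e^{\mathrm{i}\omega})^*$. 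With this symbol identified, Theorem~3 of \cite{gazzah2001asymptotic} yields the two asymptotic limits by Szeg\H o-type convergence of the extreme eigenvalues of block-Toeplitz matrices to the essential extrema of the eigenvalues of their symbol. The finite-$q$ one-sided inequalities are the standard monotonicity statement that truncated Hermitian block-Toeplitz matrices have all eigenvalues contained in the closed interval $[\operatorname{ess\,inf}_\omega \lambda_{\min}\bfS(\omega),\ \operatorname{ess\,sup}_\omega \lambda_{\max}\bfS(\omega)]$, which follows from the Rayleigh quotient applied to trigonometric-polynomial test vectors built from plane waves of frequency $\omega$.

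The main obstacle I anticipate is the symbol identification step, because our matrix $\tilde{\bfG}$ is not literally a square block-Toeplitz matrix in the form assumed by \cite{gazzah2001asymptotic}: the shift operators $\tilde{\bfD}^{h;a_j b_i}$ have varying bandwidths across $(i,j)$ and the parity blocks have a wider horizontal extent ($q+h$ rather than $q$). I would handle this by showing that when we form $\tilde{\bfG}_\calI \tilde{\bfG}_\calI^*$ the extra columns cancel cleanly so that each $(i,\ell)$ block becomes a genuine $q\times q$ Toeplitz matrix, and that the generating function of that block is exactly the $(i,\ell)$ entry of $\bfG_\calI(z)\bfG_\calI(z^{-1})^T$ evaluated at $z=e^{\mathrm{i}\omega}$. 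Once this symbol computation is justified (we defer the detailed verification to Appendix~\ref{sec:proof_regalia_stuff}), the conclusions of Theorem~\ref{theorem:regalia_stuff} follow directly from the cited block-Toeplitz asymptotic theorem and its finite-size corollary.
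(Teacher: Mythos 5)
Your proposal follows essentially the same route as the paper: show $\tilde{\bfG}_\calI \tilde{\bfG}_\calI^*$ is a $k\times k$ block matrix with Toeplitz $q\times q$ blocks (via the observation that $\tilde{\bfD}^{h;j_1}(\tilde{\bfD}^{h;j_2})^T$ is a shift matrix depending only on $j_1-j_2$), identify the matrix-valued symbol as $\bfG_\calI(e^{\mathrm{i}\omega})\bfG_\calI(e^{\mathrm{i}\omega})^*$, and invoke Theorem~3 of~\cite{gazzah2001asymptotic}. One small caveat worth noting: your justification of the \emph{finite-$q$} inequalities via ``plane-wave test vectors'' actually describes the argument for the \emph{asymptotic} limits; the one-sided containment comes instead from expressing the Rayleigh quotient of $\tilde{\bfG}_\calI\tilde{\bfG}_\calI^*$ as an integral of $v(\omega)^*\bfS(\omega)v(\omega)\,d\omega$ and bounding the integrand pointwise by $\lambda_{\max}\bfS(\omega)$ (resp.\ $\lambda_{\min}$), but both directions are covered by the cited theorem so this does not affect correctness.
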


Theorem \ref{theorem:regalia_stuff} shows that we can find an upper bound on the condition number of $\tilde{\bfG}_\calI$ based on a scalar optimization over $\omega \in [-\pi,\pi]$.
When $\bfR$ is chosen to be the all-ones matrix, the characterization of Theorem \ref{theorem:regalia_stuff} allows us to conclude that when $s > 1$, there exist choices of $\calI \subseteq \{0, 1, \dots, n-1\}, |\calI| = k$ such that  $\tilde{\bfG}_\calI \tilde{\bfG}_\calI^*$ has a minimum eigenvalue that will go to zero as $q \rightarrow \infty$. In particular, the corresponding $\bfG_{\calI}(e^{\textrm{i}\omega})$ has repeated columns for $\omega=0$.

\begin{example}
Consider the $(n,k) = (4,2)$ example with $G(D) = \begin{bmatrix}
1 & 0 & 1 & 1 \\
0 & 1 & 1 & D
\end{bmatrix}$. Suppose that $\calI = \{2,3\}$. This implies that
\begin{align*}
\tilde{\bfG}_{\calI}\tilde{\bfG}_{\calI}^T &= \tilde{\bfG}_2\tilde{\bfG}_2^T + \tilde{\bfG}_3\tilde{\bfG}_3^T = \begin{bmatrix}
2\bfI_q & \bfI_q + \bfL\\
\bfI_q + \bfU & 2\bfI_q
\end{bmatrix},
\end{align*}
where $\bfU$ and $\bfL$ are $q \times q$ upper shift and lower shift matrices respectively (see, e.g., \eqref{eq:shift_matrix} in the Appendix).

The corresponding $\bfG_\calI(e^{\textrm{i}\omega}) \bfG_\calI(e^{\textrm{i}\omega})^*$ can be obtained as
\begin{align*}
\bfG_\calI(e^{\textrm{i}\omega}) \bfG_\calI(e^{\textrm{i}\omega})^* = \begin{bmatrix}
                 2 & 1 + e^{-\textrm{i}\omega} \\
                 1 + e^{\textrm{i}\omega} & 2
               \end{bmatrix}
\end{align*}
Using Theorem \ref{theorem:regalia_stuff}, we can conclude therefore that $\lim_{q\to\infty} \lambda_{max} [\calT] = 2$ (achieved at $\omega = \pi$) and $\lim_{q\to\infty} \lambda_{min} [\calT] = 0$ (achieved at $\omega = 0$). This implies therefore that as $q$ becomes larger and larger, the matrix $\tilde{\bfG}_{\calI}$ becomes more and more ill-conditioned, though it is nonsingular for any fixed $q$.
\end{example}

Therefore considering a nontrivial scaling of the parity part with a matrix $\bfR$ is essential for well-conditioned behavior when $q$ is very large.

\subsection{Randomly-weighted convolutional coding}
\label{sec:randomconv}
We now show that choosing the matrix $\bfR$ randomly in Definition \ref{defGD_random} results in better numerical stability than the All-Ones scheme in the regime of large $q$ but requires marginally higher decoding complexity.


The following result shows that the MDS property continues to holds with probability 1 when the entries are chosen i.i.d. from a continuous distribution. The proof is an easy consequence of Theorem \ref{theorem:nonsingular_G} and appears in the Appendix.
\begin{corollary}\label{thm2_cor}
If the entries of the matrix $\bfR$ are chosen i.i.d. from any continuous-valued probability distribution, then, any $k \times k$ submatrix of the generator matrix mentioned in Definition \ref{defGD_random} is non-singular with probability one.

\end{corollary}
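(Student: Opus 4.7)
My plan is to reduce the claim to Theorem~\ref{theorem:nonsingular_G} by viewing the determinant of any fixed $k \times k$ submatrix of $\bfG(D)$ as a polynomial expression jointly in the entries of $\bfR$ and in the indeterminate $D$, and then invoking the standard fact that the zero set of a nonzero multivariate polynomial has Lebesgue measure zero.

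Concretely, fix any index set $\calI \subset \{0,1,\dots,n-1\}$ with $|\calI|=k$, and let $\bfG_{\calI}(D)$ denote the corresponding $k\times k$ submatrix. By the Leibniz formula, $P(\bfR;D) \triangleq \det(\bfG_{\calI}(D))$ is a polynomial in $D$ whose coefficients are themselves polynomials in the entries $\{r_{ij}\}$ of $\bfR$. The submatrix $\bfG_{\calI}(D)$ is nonsingular over the field of formal Laurent series exactly when $P(\bfR;D)$ is not the zero polynomial in $D$, i.e., when at least one coefficient of some power of $D$ is nonzero.

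The key observation is that specializing $r_{ij}=1$ for all $i,j$ recovers the ``All-Ones'' generator matrix $\bfG(D)$ of Theorem~\ref{theorem:nonsingular_G}, which asserts that $P(\mathbf{1};D)$ is a nonzero polynomial in $D$. Hence there exists at least one power $D^{\ell}$ whose coefficient, viewed as a polynomial $c_\ell(\bfR)$ in the variables $\{r_{ij}\}$, satisfies $c_\ell(\mathbf{1})\neq 0$. In particular, $c_\ell(\bfR)$ is not the identically zero polynomial on $\mathbb{R}^{ks}$.

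The zero set of a nonzero polynomial on $\mathbb{R}^{ks}$ is an algebraic variety of strictly lower dimension, and therefore has Lebesgue measure zero. Since the entries of $\bfR$ are i.i.d.\ from a continuous distribution, their joint distribution is absolutely continuous with respect to Lebesgue measure, so $\Pr(c_\ell(\bfR)=0)=0$, which implies $\Pr(P(\bfR;D)\equiv 0 \text{ in } D)=0$, and hence $\bfG_{\calI}(D)$ is nonsingular with probability one. Finally, since there are only $\binom{n}{k}$ possible choices of $\calI$, a union bound over this finite collection shows that with probability one, every $k\times k$ submatrix is nonsingular simultaneously. The only subtlety worth flagging is the clean separation between the two roles of the polynomial: nonsingularity is a statement about $P(\bfR;D)\not\equiv 0$ as a polynomial in $D$, which in turn is guaranteed by the nonvanishing of at least one coefficient polynomial in $\bfR$; the ``All-Ones'' witness from Theorem~\ref{theorem:nonsingular_G} is precisely what makes this argument nontrivial.
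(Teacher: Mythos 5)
Your proof is correct and follows essentially the same route as the paper: view $\det(\bfG_{\calI}(D))$ as a polynomial in $D$ whose coefficients are polynomials in the entries of $\bfR$, observe that the all-ones specialization (Theorem~\ref{theorem:nonsingular_G}) shows at least one such coefficient polynomial is not identically zero, and conclude via the measure-zero argument for continuous distributions. You spell out the Leibniz expansion and the union bound over the $\binom{n}{k}$ submatrices more explicitly than the paper's terse two-line argument, but the underlying idea is identical.
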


We now demonstrate that choosing the matrix $\bfR$ randomly allows us to upper bound the worst case condition number (over the recovery matrices) even when $q \rightarrow \infty$.
In the matrix-vector scenario, Theorem \ref{theorem:regalia_stuff} suggests the following algorithm for choosing $\bfR$. We proceed by randomly choosing $\bfR$. Let $\calI \subset \{0, \dots, n-1\}, |\calI|=k$ and let $\Omega = \{0, \pm \frac{\pi}{N}, \pm \frac{2\pi}{N}, \dots,  \pm \frac{(N-1)\pi}{N}, \pm \pi \}$ for a large positive integer $N$ denote a fine enough grid of the interval $[-\pi, \pi]$. Let $\kappa_{\bfR}$ be defined as
\begin{align*}
\max_{\underset{|\calI| = k}{\calI \subset \{0, \dots, n-1\},}} \sqrt{\left( \frac{\underset{\omega \in \Omega}{\max} \; \; \lambda_{\max} [(\bfG_\calI(e^{\textrm{i}\omega}))(\bfG_\calI(e^{\textrm{i}\omega}))^*]}{\underset{\omega \in \Omega}{\min} \; \; \lambda_{\min} [(\bfG_\calI(e^{\textrm{i}\omega}))(\bfG_\calI(e^{\textrm{i}\omega}))^*]} \right)}.
\end{align*}

Thus, $\kappa_{\bfR}$ indicates the maximum condition number of $\bfG_\calI(e^{\textrm{i}\omega})$ over all $\binom{n}{k}$ choices of $\calI$; this is an upper bound on the maximum condition number of $\tilde{\bfG}_\calI$.
The algorithm repeatedly generates choices of $\bfR$ and retains the choice that has the lowest value of $\kappa_{\bfR}$; this denoted by $\bfR^\star$. The matrix-matrix case is similar, except that we generate two random matrices denoted $\bfR_A$ and $\bfR_B$ and consider the worst case condition number of the appropriate submatrices of (\ref{eq:khatri_eq}) to obtain $\bfR_A^\star$ and $\bfR_B^\star$.
We emphasize that even though the search requires optimizing over $\binom{n}{k}=\binom{n}{s}$ choices of $\calI$, this is a one-time cost for designing the coding scheme for a system with $n$ worker nodes which is resilient to $s=n-k$ stragglers. Furthermore, (i) the search  does not have any dependence on $q$, and (ii) the value of $s$ is typically a small constant, that either does not grow or grows very slowly with $n$. Thus the complexity of the above design, $n^s$, grows as polynomial in $n$. Appendix \ref{sec:more_num_exp} presents some numerical results on the amount of time taken to find a good $\bfR$ matrix.

For systems with $n=12,s=3$ and $n=13,s=3$, we conducted $50$ random trials each to find the corresponding $\bfR^\star$ for the matrix vector multiplication case; the entries were sampled i.i.d. from the uniform distribution on $[-1,1]$. Our algorithm also returns the asymptotic upper bound on $\kappa(\bfR^\star)$. By sweeping over values of $q$, we can also compute the actual worst-case condition number for each particular chosen value of $q$. Fig. \ref{ubn12s3} depicts the upper bound and the actual worst case condition numbers for different $n$ and $s$.

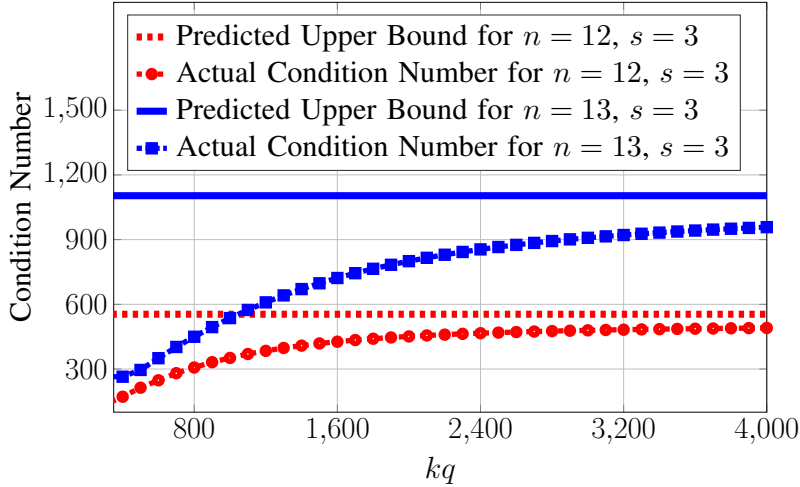
\begin{figure}[t]
\centering
\definecolor{mycolor1}{rgb}{0.00000,0.44700,0.74100}%
\definecolor{mycolor2}{rgb}{0.85000,0.32500,0.09800}%
\definecolor{mycolor3}{rgb}{0.92900,0.69400,0.12500}%
\definecolor{mycolor4}{rgb}{0.49400,0.18400,0.55600}%
\definecolor{mycolor5}{rgb}{0.46600,0.67400,0.18800}%
\captionsetup{justification=centering}
\resizebox{0.65\linewidth}{!}{

\begin{tikzpicture}

\begin{axis}[%
width=5.1in,
height=3.203in,
at={(2.6in,0.756in)},
scale only axis,
xmin=350,
xmax=4000,
xlabel style={font=\color{white!15!black}, font=\LARGE},
xlabel={$k q$},
ymin=100,
ymax=2000,
ylabel style={font=\color{white!15!black}, font= \LARGE},
ylabel={Condition Number},
tick label style={font=\LARGE} ,
axis background/.style={fill=white},
ytick={300,600,900,1200,1500},
xtick={800,1600,2400,3200,4000},
xmajorgrids,
ymajorgrids,
legend style={nodes={scale=1.7}, at={(0.96,0.97)},legend cell align=left, align=left, draw=white!15!black}
]

\addplot [dashed,color=red,, line width = 4.0 pt]
  table[row sep=crcr]{%
100	554.12\\
200	554.12\\
300	554.12\\
400	554.12\\
500	554.12\\
600	554.12\\
700	554.12\\
800	554.12\\
900	554.12\\
1000	554.12\\
1100	554.12\\
1200	554.12\\
1300	554.12\\
1400	554.12\\
1500	554.12\\
1600	554.12\\
1700	554.12\\
1800	554.12\\
1900	554.12\\
2000	554.12\\
2100	554.12\\
2200	554.12\\
2300	554.12\\
2400	554.12\\
2500	554.12\\
2600	554.12\\
2700	554.12\\
2800	554.12\\
2900	554.12\\
3000	554.12\\
3100	554.12\\
3200	554.12\\
3300	554.12\\
3400	554.12\\
3500	554.12\\
3600	554.12\\
3700	554.12\\
3800	554.12\\
3900	554.12\\
4000	554.12\\
};
\addlegendentry{Predicted Upper Bound for $n = 12$, $s = 3$}

\addplot [dashed, mark=o,mark options={solid},color=red, line width = 3.0 pt]
  table[row sep=crcr]{%
100	125.298961123406\\
200	125.298961123407\\
300	135.144643166913\\
400	171.629913371235\\
500	213.419756455019\\
600	247.882702034933\\
700	280.725769067212\\
800	307.105586202819\\
900	331.807092804702\\
1000	351.432628337691\\
1100	369.709123653561\\
1200	384.208969741821\\
1300	397.734115999271\\
1400	408.505568736779\\
1500	418.60702739974\\
1600	426.703698258504\\
1700	434.350552230269\\
1800	440.52569384302\\
1900	446.402288810592\\
2000	451.1843479884\\
2100	455.769659259674\\
2200	459.528711474856\\
2300	463.159078147899\\
2400	466.156088275451\\
2500	469.069931473844\\
2600	471.4909985806\\
2700	473.859435244078\\
2800	475.839026126578\\
2900	477.786547236212\\
3000	479.423169432009\\
3100	481.041603502437\\
3200	482.408417162593\\
3300	483.766421106821\\
3400	484.918485920381\\
3500	486.068060876402\\
3600	487.0473400069\\
3700	488.028357483833\\
3800	488.86721507377\\
3900	489.710600316777\\
4000	490.434276928821\\
};
 \addlegendentry{Actual Condition Number for $n = 12$, $s = 3$}
\addplot [color=blue,, line width = 4.0 pt]
  table[row sep=crcr]{%
100	1103.6\\
200	1103.6\\
300	1103.6\\
400	1103.6\\
500	1103.6\\
600	1103.6\\
700	1103.6\\
800	1103.6\\
900	1103.6\\
1000	1103.6\\
1100	1103.6\\
1200	1103.6\\
1300	1103.6\\
1400	1103.6\\
1500	1103.6\\
1600	1103.6\\
1700	1103.6\\
1800	1103.6\\
1900	1103.6\\
2000	1103.6\\
2100	1103.6\\
2200	1103.6\\
2300	1103.6\\
2400	1103.6\\
2500	1103.6\\
2600	1103.6\\
2700	1103.6\\
2800	1103.6\\
2900	1103.6\\
3000	1103.6\\
3100	1103.6\\
3200	1103.6\\
3300	1103.6\\
3400	1103.6\\
3500	1103.6\\
3600	1103.6\\
3700	1103.6\\
3800	1103.6\\
3900	1103.6\\
4000	1103.6\\
};
\addlegendentry{Predicted Upper Bound for $n = 13$, $s = 3$}

\addplot [dotted,mark=square*,mark options={solid},color=blue, line width = 3.0 pt]
  table[row sep=crcr]{%
100	164.59 \\
200	264.4\\
300	264.4\\
400	264.4\\
500	295.83\\
600	350.36\\
700	402.36\\
800	449.6\\
900	494.41\\
1000	536.18\\
1100	574.01\\
1200	609.13\\
1300	641.44\\
1400	670.65\\
1500	697.5\\
1600	722.06\\
1700	744.31\\
1800	764.67\\
1900	783.29\\
2000	800.2\\
2100	815.69\\
2200	829.88\\
2300	842.82\\
2400	854.7\\
2500	865.62\\
2600	875.63\\
2700	884.85\\
2800	893.35\\
2900	901.18\\
3000	908.43\\
3100	915.13\\
3200	921.34\\
3300	927.11\\
3400	932.47\\
3500	937.45\\
3600	942.09\\
3700	946.43\\
3800	950.47\\
3900	954.25\\
4000	957.80\\
};
 \addlegendentry{Actual Condition Number for $n = 13$, $s = 3$}
\end{axis}

\end{tikzpicture}%
}
\vspace{-0.1 in}
\caption{\small Worst case condition number for the random convolutional code for different $n$ and $s$.}
\label{ubn12s3}
\vspace{-0.25 in}
\end{figure} 

\subsection{Random convolutional coding: decoding algorithm}
In principle, it is possible to use a fast peeling decoder for decoding as done earlier in the all-ones case. Note however that the peeling decoder solves a system of $kq$ equations in $kq$ variables. Thus, it only uses $kq$ columns of the $\tG_{\calI}$ even though $\tG_{\calI}$ is a matrix of size $kq \times (kq + \delta^{'})$ where $\delta^{'}$ is an integer between zero and $\delta$ ({\it cf.} \eqref{delta}), depending on which set of $k$ worker nodes finished their computations (in matrix-vector multiplication).

In particular, the stability of the peeling decoder depends on the condition number of the relevant full rank square submatrix of $\tG_{\calI}$. In general, this condition number is higher than that of $\tG_{\calI}$. In our numerical experiments we have found that for the all-ones case, the worst case condition numbers of both matrices ($\tG_{\calI}$ and full rank square submatrix of $\tG_{\calI}$) are almost the same (see more experimental details in Section \ref{sec:numerical_exp}). This explains the numerically stable behavior of the peeling decoder in the all-ones case.

The situation changes quite a bit when we consider random scaling of the generator matrix. e.g., when the entries of $\bfR$ are i.i.d. random Gaussian, the difference is very large.
In this case, the condition number of the full rank square submatrix of $\tG_{\calI}$ can be very high for certain sets of workers $\calI$ (see in Section \ref{sec:numerical_exp}). But in all cases, $\kappa_{worst}$ over all $\tG_{\calI}$ is significantly smaller than that of the all-ones case. Thus, it is clear that one should use all the columns of $\tG_{\calI}$ for decoding, rather than using only $kq$ equations.


{\bf Decoding Complexity:} Similar to the discussion in Section \ref{onespeel}, we assume that the fastest $k$ workers include the message worker set  $\calI_1$ and the parity worker set $\calI_2$, so that $|\calI_1| + |\calI_2| = k$. We can decode some unknowns directly from the workers in $\calI_1$, and in the worst case, we need to recover the other $sq$ unknowns from the parity workers in $\calI_2$. In this case, one can solve a least square (LS) problem to recover the $sq$ unknowns. This LS problem can be solved in different ways. The most straightforward way would be matrix inversion ($O\left( (sq)^3 \right)$ time) followed by solving $\frac{rw}{kq}$ systems of equations ($O\left( \frac{rw}{kq} (sq)^2 \right)$ time). If $sq \ll r , w$; we can write it as $O\left( \frac{rw}{k} s^2 q \right)$. On the other hand if the value of $q$ is large, then we can use techniques such as conjugate gradient descent to solve the LS problem. This is especially useful when $q$ is large since the underlying system of equations is sparse. Thus, each iteration of conjugate gradient descent can be solved in a fast manner. In particular, if we run it for $T$ iterations to recover these $sq$ unknown blocks, the decoding complexity is $O\left(\frac{rw}{kq} \times sq  \times s \times T\right) = O\left(\frac{rw}{k} s^2 T\right)$. To reach within $\epsilon $ fraction of the solution, the number of iterations scales a $O(\kappa \log(1/\epsilon))$ where $\kappa$ is the condition number of the linear system of equations.

Overall the decoding complexity of the random convolutional code setting is marginally higher than the All-Ones case, depending on which algorithm is used for the LS solution.



\begin{table}[t]
\caption{{\small Comparison of Worst Case Condition Numbers ($\kappa_{worst}$) for Matrix-matrix Multiplication for  $n = 18$ and $s = 3$}}
\label{matmatcondnumber-earlier}

\begin{center}
\begin{small}
\begin{sc}
\begin{tabular}{c c}
\hline
\toprule
Methods & $\kappa_{worst}$\\
 \midrule
Polynomial Code  \cite{yu2017polynomial} &  $4.031 \times 10^7$\\
Ortho-Poly Code  \cite{8849468} & $2.506 \times 10^4$ \\
Random Khatri-Rao Code\cite{8919859} & $5329.3$\\
 Circulant and Rotation Matrix \cite{ramamoorthy2019numerically} & 102 \\
Proposed All-ones Conv Code  & $\mathbf{4417.8}$  \\
Proposed Random Conv Code  & $\mathbf{1829.4}$ \\
\bottomrule
\end{tabular}
\end{sc}
\end{small}
\end{center}
\end{table}%

\section{Comparisons and Numerical Experiments}
\label{sec:numerical_exp}


In this section, we discuss the results of the numerical experiments for our proposed approaches and compare our methods with other available methods. 

The polynomial code approach \cite{yu2017polynomial} suffers from the problem that real Vandermonde matrices have condition numbers that are exponential in their size. This in turn implies that for large number of workers (for example, $30$ workers) the condition number of the decoding matrix is so high that the recovered result by the master node is actually useless. 


To avoid this numerical issue, Section VII of \cite{yu2020straggler} remarks that the real computation can be embedded within a large enough finite field of prime order $p$. It turns out that the performance of this scheme is strongly dependent on the entries of $\bfA$ and $\bfB$ and the resultant normalized MSE can be quite bad. These arguments have appeared in \cite{Tang2020Auth}; we present an outline below.

We note that computations in this method are error-free only when each entry of the product matrix $\bfA^T \bfB$ is an integer in $\{0,1,..., p-1\}$. If this requirement is violated, the proposed mod-$p$ computations can return catastrophically wrong answers \cite{Tang2020Auth}. 
This means that the matrices A and B need to be multiplied by a scalar and quantized so that each entry of the resulting matrix is an integer that is within the appropriate range. Suppose that the absolute values of the entries of $\bfA$ and $\bfB$ are upper bounded by $\alpha$; then we need $\alpha^2 t < p$. This is referred to as the dynamic range constraint in \cite{Tang2020Auth}. For instance, with $64$-bit integers (the standard on present day computers), the largest integer is $\approx 10^{19}$. Thus, even if $t < 10^5$, the method can only support $\alpha \leq 10^7$. Thus, the range is rather limited. 

The work of \cite{Tang2020Auth} constructs adversarial $\bfA$ and $\bfB$ integer matrices for this method as follows. Let $p=2147483647$ (note that this is much larger than the publicly available code of \cite{yu2017polynomial} which uses $p=65537$) so that their method can support higher dynamic range. Next let $r=w=t=400$. This implies that $\alpha$ needs to be  $\leq 1000$ by the dynamic range constraint. 
The matrices have the following block decomposition.
\begin{align*}
    \bfA
= \begin{bmatrix}
\bfA_{0,0} & \bfA_{0,1}\\
\bfA_{1,0} & \bfA_{1,1}
\end{bmatrix}, \;\; \text{~and~} \;\;\; \bfB = \begin{bmatrix}
\bfB_{0,0} & \bfB_{0,1}\\
\bfB_{1,0} & \bfB_{1,1}
\end{bmatrix}.
\end{align*}
Each $\bfA_{i,j}$ and $\bfB_{i,j}$ is a matrix of size $200 \times 200$, with entries chosen from the following distributions. $\bfA_{0,0}$, $\bfA_{0,1}$ distributed $\text{Unif}(0, …,9999)$ and $\bfA_{1,0}$, $\bfA_{1,1}$ distributed $\text{Unif}(0, …,9)$. Next, $\bfB_{0,0}$, $\bfB_{0,1}$ distributed $\text{Unif}(0, …,9)$ and $\bfB_{1,0}, \bfB_{1,1}$ distributed $\text{Unif}(0, …,9999)$. In this scenario, the dynamic range constraint requires us to multiply each matrix by $0.1$ and quantize each entry between $0$ and $999$. Note that this implies that $\bfA_{1,0}, \bfA_{1,1}, \bfB_{0,0}, \bfB_{0,1}$ are all quantized into zero submatrices since the entry in these four submatrices is less than $10$. We emphasize that the finite field embedding technique {\it only} recovers the product of these quantized matrices. However, this product is the all-zeros matrix, i.e., the decoded matrix will also be the all-zeros matrix. Therefore, the normalized MSE in this case will be 100 \%.
There are also significant computational issues as discussed in \cite{Tang2020Auth}. We note here that such adversarial can be found even for larger choices of $p$. It is worth noting that the normalized MSE of the other methods do not depend on the actual values of $\bfA$ and $\bfB$.

The work of \cite{8849468} uses orthogonal polynomials and Chebyshev-Vandermonde matrices for the encoding part, which significantly improves the condition number of the decoding matrices compared to \cite{yu2017polynomial} and \cite{dutta2016short}. The work in \cite{8919859} uses random Khatri-Rao product where random coefficients are used for the encoding, which further improves the numerical stability. The recent preprint \cite{ramamoorthy2019numerically} uses circulant and permutation matrices to improve the numerical stability of the polynomial approach. We compare our approaches with these methods with exhaustive numerical experiments which are performed over a cluster in {\tt AWS} (Amazon Web Services). A {\tt t2.2xlarge} machine is used as the master node and {\tt t2.small} machines are used as the slave nodes. Software code for recreating these experiments can be found at \cite{anindyacode}.

{\bf Comparing $\kappa_{worst}$ and MSE for Matrix-matrix case:} For a system with $n = 18$ workers and $s = 3$ stragglers for matrix-matrix multiplication, we set $\gamma_A = \frac{1}{4}$ and $\gamma_B = \frac{2}{5}$ with $k_A = 5$ and $k_B = 3$, so $k = k_A k_B = n - s = 15$. Table \ref{matmatcondnumber-earlier} reports a comparison of the worst-case condition numbers for different approaches in the literature. It can be observed that the work of \cite{yu2017polynomial} and \cite{8849468} have much higher condition numbers than our proposed schemes (All-ones and Random). Both our approaches are also better than the work of \cite{8919859} in terms of worst case condition number ($\kappa_{worst}$) values. 
We point out that the methods in \cite{8849395} and \cite{mallick2018rateless} are developed for matrix-vector multiplication, so those are not applicable for this comparison.

In our next experiment we compare the mean-squared error (MSE) of the different matrix-matrix multiplication methods for their respective worst case scenarios when $n=18$ and $s=3$. For matrix-matrix case, we define MSE as
\begin{align*}
    \textrm{MSE } = \frac{||\bfA^T \bfB - \widehat{\bfA^T \bfB}||^2}{||\bfA^T \bfB||^2} \times 100 \%
\end{align*} where $\widehat{\bfA^T \bfB}$ is the recovered result and $\bfA^T \bfB$ is the actual result. Here, the matrices $\bfA$ and $\bfB$ are of size $15,000 \times 10080$ and $15,000 \times 12000$ respectively. We simulate errors in the worker node computations by adding white Gaussian noise to the calculated submatrix products obtained from the worker nodes and sweeping the range of SNRs. 
The results appear in Fig. \ref{error_18s3_intro} (for additive Gaussian noise) and Fig. \ref{error_18s3_roundoff} (for round-off errors). In Fig. \ref{error_18s3_intro} we observe that even at $SNR= 70 dB$, our approach is around $9$, $4$ and $2$ orders of magnitude better than \cite{yu2017polynomial}, \cite{8849468} and \cite{8919859}.  The corresponding decoding time is also reported in the legend which shows that the decoding time for our approaches compare quite well with other approaches. The behavior of the curves in Fig.  \ref{error_18s3_roundoff} is similar in nature.

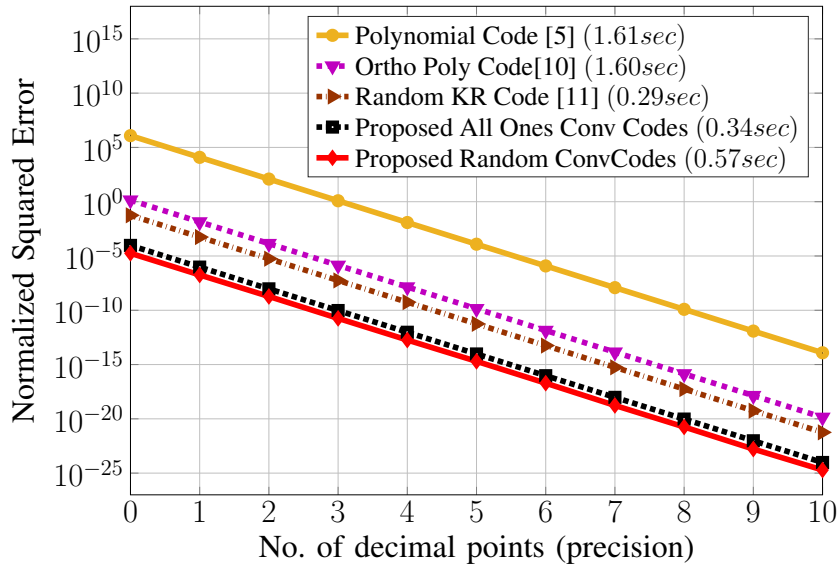
\begin{figure}[t]
\centering
\captionsetup{justification=centering}
\resizebox{0.68\linewidth}{!}{

\definecolor{mycolor6}{rgb}{0.92941,0.69412,0.12549}%
\definecolor{mycolor7}{rgb}{0.74902,0.00000,0.74902}%
\definecolor{mycolor8}{rgb}{0.60000,0.20000,0.00000}%

\begin{tikzpicture}
\begin{axis}[%
width=5.1in,
height=3.603in,
at={(2.6in,0.85in)},
scale only axis,
xmin=0,
xmax=10,
xlabel style={font=\color{white!15!black}, font=\LARGE},
xlabel={No. of decimal points (precision)},
ytick={1e-25,1e-20,1e-15,1e-10,1e-05,1e+00,1e+05,1e+10,1e+15},
ymode=log,
ymin=1e-27,
ymax=1e+18,
yminorticks=true,
axis background/.style={fill=white},
xmajorgrids,
ymajorgrids,
yminorgrids,
tick label style={font=\LARGE} ,
ylabel style={font=\color{white!15!black}, font=\LARGE},
ylabel={Normalized Squared Error},
axis background/.style={fill=white},
legend style={legend cell align=left, align=left, draw=white!15!black,font = \Large}
]
\addplot [color=mycolor6, line width=3.0pt, mark=o, mark options={solid, mycolor6}]
  table[row sep=crcr]{%
0	1.2165e+06\\
1	1.2193e+04\\
2	1.2197e+02\\
3	1.2198e+00\\
4	1.2087e-02\\
5	1.2170e-04\\
6	1.2131e-06\\
7	1.2137e-08\\
8	1.2204e-10\\
9	1.2121e-12\\
10	1.2163e-14\\
};

\addlegendentry{Polynomial Code \cite{yu2017polynomial} $(1.61 sec)$}

\addplot [color=mycolor7, dashed, line width=3.0pt, mark=triangle, mark options={solid, rotate=180, mycolor7}]
  table[row sep=crcr]{%
0	1.397885\\
1	1.373448e-02\\
2	1.383753e-04\\
3	1.373895e-06\\
4	1.391267e-08\\
5	1.389363e-10\\
6	1.375464e-12\\
7	1.386957e-14\\
8	1.386143e-16\\
9	1.377376e-18\\
10	1.373654e-20\\
};
\addlegendentry{Ortho Poly Code\cite{8849468} $(1.60 sec)$}

\addplot [color=mycolor8, dashdotted, line width=3.0pt, mark=triangle, mark options={solid, rotate=270, mycolor8}]
  table[row sep=crcr]{%
0	5.5810e-02\\
1	5.5507e-04\\
2	5.5885e-06\\
3	5.5844e-08\\
4	5.5810e-10\\
5	5.5795e-12\\
6	5.6035e-14\\
7	5.5746e-16\\
8	5.5612e-18\\
9	5.5935e-20\\
10	5.6841e-22\\
};
\addlegendentry{Random KR Code \cite{8919859} $(0.29 sec)$}


\addplot [color=black, dotted, line width=3.0pt, mark=square, mark options={solid, black}]
  table[row sep=crcr]{%
0	9.4461e-05\\
1	9.4746e-07\\
2	9.4004e-09\\
3	9.4636e-11\\
4	9.4472e-13\\
5	9.4443e-15\\
6	9.3930e-17\\
7	9.5131e-19\\
8	9.5191e-21\\
9	9.5338e-23\\
10	9.5436e-25\\
};
\addlegendentry{Proposed All Ones Conv Codes $(0.34 sec)$}

\addplot [color=red, line width=3.0pt, mark=diamond, mark options={solid, red}]
  table[row sep=crcr]{%
0	1.7801e-05\\
1	1.7937e-07\\
2	1.8764e-09\\
3	1.8003e-11\\
4	1.8204e-13\\
5	1.9860e-15\\
6	1.9283e-17\\
7	1.7129e-19\\
8	1.7927e-21\\
9	1.6454e-23\\
10	1.9223e-25\\
};
\addlegendentry{Proposed Random ConvCodes $(0.57 sec)$}

\end{axis}
\end{tikzpicture}%
}
\vspace{-0.1 in}
\caption{\small Normalized MSE vs. number of decimal points of precision for different coded computation schemes for distributed matrix-matrix multiplication over $n = 18$ workers and $s = 3$ stragglers. The decoding time is reported for the different approaches in parentheses in the legend.}
\label{error_18s3_roundoff}
\vspace{-0.1 in}
\end{figure} 


{\bf Comparing $\kappa_{worst}$ and MSE for Matrix-vector case:} We carry out an experiment to compare the worst case condition number of the decoding matrix for different approaches for matrix-vector multiplication. Table \ref{matveccondnumber} shows the worst case condition number for a scenario with $n = 30$ workers, with $s = 2$ stragglers where each worker node can store $\gamma_A = \frac{1}{25}$ fraction of matrix $\bfA$. From the table, it is clear that the approaches in \cite{yu2017polynomial} and \cite{8849395} provide much larger condition numbers in comparison to the others. From the table, we can also see that our proposed     approaches provide lower condition numbers than the approaches \cite{8849468} and \cite{8919859}.

\begin{table}[t]
\caption{{\small Comparison of $\kappa_{worst}$ for Matrix-vector Multiplication for $n = 30$ and $s = 2$ with $\gamma = \frac{1}{25}$}}
\label{matveccondnumber}
\begin{center}
\begin{small}
\begin{sc}
\begin{tabular}{c c}
\hline
\toprule
Methods & $\kappa_{worst}$ \\ 
 \midrule
 Polynomial Code \cite{yu2017polynomial} & $2.293 \times 10^{13}$ \\
 Convolutional Code \cite{8849395}  & $5.124 \times 10^4$  \\ 
 Ortho-Poly Code \cite{8849468} & $7902.6$\\
 Random KR Code \cite{8919859}  & $3642.7$  \\ 
 Circulant and Rotation Matrix \cite{ramamoorthy2019numerically} & 52 \\
 Proposed All-ones Conv. Code  & $\mathbf{2868.3}$  \\ 
 Proposed Rand Conv. Code  & $\mathbf{1374.6}$\\
\bottomrule
\end{tabular}
\end{sc}
\end{small}
\end{center}
\end{table}%

In our next experiment we compare the normalized MSE of the different methods for their respective worst case scenarios. For matrix-vector case, we define MSE as
\begin{align*}
    \textrm{MSE } = \frac{||\bfA^T \bfx - \widehat{\bfA^T \bfx}||^2}{||\bfA^T \bfx||^2} \times 100 \%
\end{align*} where $\widehat{\bfA^T \bfx}$ is the recovered result and $\bfA^T \bfx$ is the actual result. We consider the same scenario with $n = 30$ and $s = 2$ where we have matrix $\bfA$ of size $30,000 \times 31,500$ and a vector $\bfx$ of length $30,000$. We want to compute the product $\bfA^T \bfx$. Fig. \ref{error_30s2} shows the normalized MSE of the different approaches for different SNR. From the figure we can see that our proposed approaches perform significantly better than all other schemes except the scheme of \cite{ramamoorthy2019numerically}. 
This supports our condition number results in Table \ref{matveccondnumber}. For example, at $SNR = 60dB$, the approach in \cite{8919859} provides around $1.6 \%$ error whereas our all-ones and random convolutional code approaches provide only $0.5 \%$ and $0.2 \%$ error, respectively, for the worst case. 

\begin{figure}[t]
\centering
\captionsetup{justification=centering}
\resizebox{0.7\linewidth}{!}{

\definecolor{mycolor6}{rgb}{0.92941,0.69412,0.12549}%
\definecolor{mycolor7}{rgb}{0.74902,0.00000,0.74902}%
\definecolor{mycolor8}{rgb}{0.60000,0.20000,0.00000}%

\begin{tikzpicture}
\begin{axis}[%
width=5.1in,
height=3.203in,
at={(2.6in,0.85in)},
scale only axis,
xmin=50,
xmax=120,
xlabel style={font=\color{white!15!black}, font=\LARGE},
xlabel={SNR (in dB)},
ymode=log,
ymin=1e-11,
ymax=1e+8,
ytick={1e-09,1e-06,1e-03,1e+00,1e+03,1e+06},
xtick={50,60,70,80,90,100,110,120},
tick label style={font=\LARGE} ,
ylabel style={font=\color{white!15!black}, font=\Large},
ylabel={Normalized Squared Error},
axis background/.style={fill=white},
xmajorgrids={true},
ymajorgrids={true},
yminorgrids,
legend style={legend cell align=left, align=left, draw=white!15!black,font = \Large}
]

\addplot [color=mycolor7, dashed, line width=3.0pt, mark=triangle, mark options={solid, rotate=180, mycolor7}]
  table[row sep=crcr]{%
30	209500\\
40	20140\\
50	1898\\
60	207.6\\
70	20.49\\
80	2.052\\
90	0.2213\\
100	0.02184\\
110	0.002072\\
120	0.0001849\\
};
\addlegendentry{Ortho-Poly Code \cite{8849468}}

\addplot [color=mycolor8, dashdotted, line width=3.0pt, mark=triangle, mark options={solid, rotate=270, mycolor8}]
  table[row sep=crcr]{%
30	1534\\
40	150.2\\
50	16.25\\
60	1.516\\
70	0.1657\\
80	0.01705\\
90	0.001586\\
100	0.0001544\\
110	1.587e-05\\
120	1.527e-06\\
};
\addlegendentry{Random KR Code \cite{8919859}}

\addplot [color=blue, line width=3.0pt, mark=diamond, mark options={dotted, blue}]
  table[row sep=crcr]{%
50	0.000959907014771309\\
60	8.11043100141409e-05\\
70	1.01560695764101e-05\\
80	9.09374125174745e-07\\
90	8.95853467134398e-08\\
100	8.24358677158796e-09\\
110	9.27312955214944e-10\\
120	9.88299832399861e-11\\
};
\addlegendentry{Circulant and Rotation Matrix \cite{ramamoorthy2019numerically}}

\addplot [color=black, dotted, line width=3.0pt, mark=square, mark options={solid, black}]
  table[row sep=crcr]{%
30	567.1\\
40	55.21\\
50	5.386\\
60	0.5463\\
70	0.05687\\
80	0.005498\\
90	0.0005566\\
100	5.313e-05\\
110	5.648e-06\\
120	5.391e-07\\
};
\addlegendentry{Proposed All Ones Conv Codes}

\addplot [color=red, line width=3.0pt, mark=diamond, mark options={solid, red}]
  table[row sep=crcr]{%
30	192.1\\
40	20.13\\
50	1.974\\
60	0.1887\\
70	0.01946\\
80	0.002072\\
90	0.0001791\\
100	1.891e-05\\
110	1.989e-06\\
120	1.908e-07\\
};
\addlegendentry{Proposed Random ConvCodes}

\end{axis}
\end{tikzpicture}%
}
\caption{\small Normalized MSE vs SNR plot for matrix-vector multiplication for $n = 30$ and $s = 2$.}
\label{error_30s2}
\end{figure}
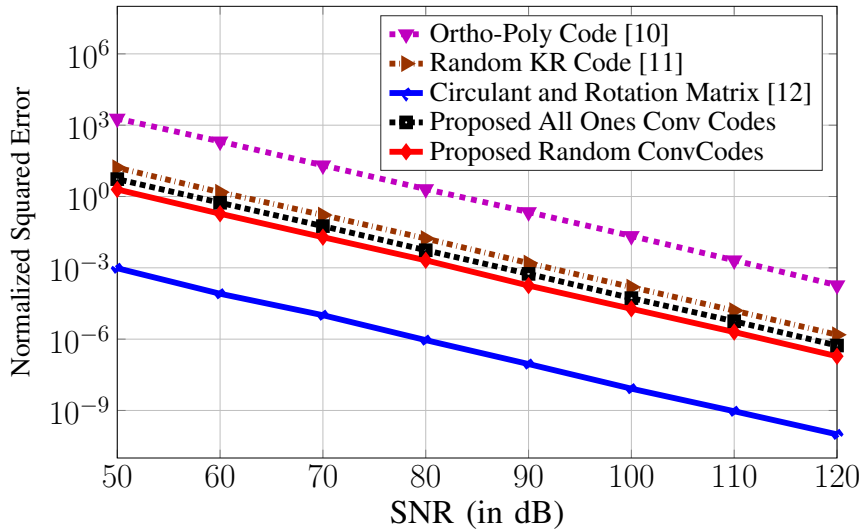 

\begin{table}[t]
\caption{{\small Comparison of our proposed methods. $n = 11, k_A = k_B = 3$ and $\bfA$ and $\bfB$ have size $10000 \times 12600$.}}
\label{convcomp}
\begin{center}
\begin{small}
\begin{sc}
\begin{tabular}{c c c c c}
\hline
\toprule
Metrics & Methods & $\gamma = \frac{2}{5}$ & $\gamma = \frac{5}{14}$ & $\gamma = \frac{7}{20}$ \\ 
 \midrule
Decoding  & All ones & $0.35 s$ & $0.36 s$ & $0.39 s$ \\
\cline{2-5}
Time & Random & $0.39 s$ & $1.16 s$ & $2.89 s$ \\
 \midrule
$\kappa_{worst}$   & All ones & $95.2$ & $275.9$ & $395.6$ \\
\cline{2-5}
for $\tG_{\calI}$ & Random & $76.9$ & $112.2$ & $117.5$ \\
\midrule
$\kappa_{worst}$ for & All ones & $96.5$ & $277.9$ & $397.8$\\
\cline{2-5}
Sqr. Submat. & \multirow{2}{1 cm}{Random} & $7.46$ & $9.64 $  & $1.11$\\
of $\tG_{\calI}$ &  & $\times 10^6$ & $\times 10^{17}$  & $10^{28}$\\

\bottomrule
\end{tabular}
\end{sc}
\end{small}
\end{center}
\end{table}%

{\bf Comparing \cite{ramamoorthy2019numerically} and our approach: }
It can be observed that the recent preprint of \cite{ramamoorthy2019numerically} has the best $\kappa_{worst}$ and MSE numbers for both the matrix-matrix and matrix-vector scenarios. However, our work has much simpler encoding (additions/subtractions in the All-Ones case) and decoding (peeling decoder) than their method. Our work is also the first to propose a convolutional coding strategy for this problem. 

{\bf Comparing \cite{8919859} and our approach} The Random KR approach can be considered as specific instance of our random scaling method where the scaling is applied to a trivial all-ones parity matrix, instead of a carefully designed $\bfY_{\bar{b},\bar{a}}(D)$. As both approaches are random and pick the best choices, we conducted an experiment where we ran 100 trials for both methods (with $n=20$ and $s=3, 4,5$) and picked the respective best choices (see Fig. \ref{condcomps345} for the corresponding worst case condition numbers). It is clear that the structure imposed in our construction definitely improves the condition number as compared to the work of \cite{8919859}.

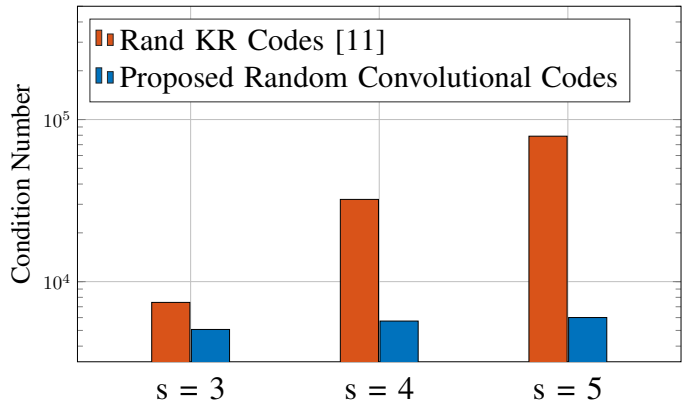
\begin{figure}[t]
\centering
\captionsetup{justification=centering}
\resizebox{0.55\linewidth}{!}{
\begin{tikzpicture}
\begin{axis}[
width=5in,
height=3.203in,
at={(2.6in,0.852in)},
major x tick style = transparent,
ybar=2*\pgflinewidth,
bar width=20pt,
ymajorgrids,
xmajorgrids,
xlabel style={font=\color{white!15!black}, font = \Large},
ylabel style={font=\color{white!15!black}, font = \Large},
ylabel={Condition Number},
ymode=log,
symbolic x coords={{\LARGE s = 3},{\LARGE s = 4},{\LARGE s = 5}},
xtick = data,
scaled y ticks = false,
enlarge x limits= 0.3,
ymin=0,
ymax=500000,
legend cell align=left,
legend style={at={(0.02,0.73)}, nodes={scale=1.6}, anchor=south west, legend cell align=left, align=left, draw=white!15!black}
    ]
    \addplot[style={fill=mycolor2,mark=none}]
            coordinates {({\LARGE s = 3}, 7446.4) ({\LARGE s = 4},32164) ({\LARGE s = 5}, 78974)};
\addlegendentry{Rand KR Codes \cite{8919859}}
    \addplot[style={fill=mycolor1,mark=none}]
             coordinates {({\LARGE s = 3},5070.3) ({\LARGE s = 4},5712.2) ({\LARGE s = 5},6005.5)};
             \addlegendentry{Proposed Random Convolutional Codes}
    \end{axis}

\end{tikzpicture}%
}
\caption{Comparison of $\kappa_{worst}$ for matrix-vector multiplication between the method in \cite{8919859} and our proposed random convolutional code approach for $n = 20$ with $s = 3, 4$ and $5$. To find $\kappa_{worst}$, the proposed method used $\gamma = \frac{1}{15}, \frac{1}{14}, \frac{1}{13}$ for $k = 17, 16, 15$, respectively.}
\label{condcomps345}
\end{figure}

{\bf Comparing our All-ones and random approaches: }
Recall that for our methods $q_A$ and $q_B$ increase when $\gamma_A- 1/k_A$ and $\gamma_B - 1/k_B$ become smaller ({\it cf.} Sections \ref{sec:matvec_section} and \ref{sec:matmat_section}). Table \ref{convcomp}, shows a comparison of our proposed approaches in terms of decoding time and worst case condition number for three different values of $\gamma = \gamma_A=\gamma_B$. The following inferences can be drawn.

\begin{itemize}
\item The decoding time remains more or less constant for the all-ones case, whereas it can increase with decreasing $\gamma$ because of solving LS problem for the random case.
\item The worst case condition number for the all-ones case continues to increase with decreasing $\gamma$, whereas it saturates for the random case.
\item For all-ones case, the worst case condition numbers of both matrices ($\tG_{\calI}$ and full rank square submatrix of $\tG_{\calI}$) are almost the same for different $\gamma$. However, if the entries of $\bfR$ are random Gaussian, then the difference between these two condition numbers is very large.
\end{itemize}

\section{Conclusions and Future Work}
\label{sec:conclusion}
Most current approaches for coded computation work within the framework of block codes. In this work we presented a convolutional approach to coded matrix computation. Our codes possess simple encoding and decoding algorithms. We demonstrated novel connections between the analysis of numerical stability of our codes and the properties of large Toeplitz matrices. The performance of our codes is better than most of the existing known approaches. It would be interesting to consider other classes of convolutional codes for coded computation and attempt to characterize their properties.


%

\appendix
\subsection{Proof of Theorem \ref{theorem:nonsingular_G} and Corollary \ref{thm2_cor} (MDS property of our codes)}
\label{sec:invertibility_mat}

We begin by a formal description of the field in which the polynomials in the indeterminate $D$ lie. Consider the set of real infinite sequences $\{u_r, u_{r+1}, \dots \}$ for $r \in \mathbb{Z}$ that start at some finite integer index $r$, and continue thereafter. These sequences can be treated as elements of the formal Laurent series \cite{fuja1989cross} in indeterminate $D$ with coefficients from $\mathbb{R}$, i.e., $\bfu(D) = \sum\limits_{i=r}^\infty u_i D^i$. Let us denote the ring of formal Laurent series over $\mathbb{R}$ as $\mathbb{R}((D))$ under the normal addition and multiplication of formal power series. It can be shown \cite{niven1969formal}  that $\mathbb{R}((D))$ forms a field, i.e., each non-zero element in it has a corresponding inverse. Thus, the polynomials $\bfu(D) = \sum_{i=0}^\ell u_i D^i$ that we consider in this work are members of $\mathbb{R}((D))$ and can be added, multiplied and divided to obtain other members of $\mathbb{R}((D))$. The zero element and identity element are precisely the real number $0$ and the real number $1$ within this field.

The proof of Theorem \ref{theorem:nonsingular_G} is an immediate consequence of Lemma \ref{lemma:schur_poly} below since any $k \times k$ submatrix of $G(D)$ is of the form $\bfX(D)$ given in the lemma.

\begin{lemma}
\label{lemma:schur_poly}
Consider a square matrix $\bfX(D)$ such that
\begin{equation*}
\bfX(D) = \begin{bmatrix}
    \left( D^{a_0} \right)^{b_0 } & \left( D^{a_1} \right)^{b_0 } & \dots & \left( D^{a_{v-1}} \right)^{b_0 } \\
	\left( D^{a_0} \right)^{b_1} & \left( D^{a_1} \right)^{b_1}  &  \dots & \left( D^{a_{v-1}} \right)^{b_1} \\
	\threevdots & \threevdots & \threevdots & \threevdots \\
	\left( D^{a_0} \right)^{b_{v-1} } & \left( D^{a_1} \right)^{b_{v-1}} & \dots & \left( D^{a_{v-1}} \right)^{b_{v-1} }
\end{bmatrix}\
\end{equation*}
where $a_i$ and $b_j$ are positive integers for $0 \leq i,j \leq v-1$ such that $0 \leq a_0 < a_1 < \dots < a_{v-1}$ and $0 \leq b_0 < b_1 < \dots < b_{v-1}$. Then $\bfX(D)$ is nonsingular, i.e., its determinant is a non-zero polynomial in $D$. Furthermore, if $\bfR$ is a $v \times v$ matrix with entries chosen i.i.d. from a continuous distribution, then $\bfR \circ \bfX(D)$ (where $\circ$ denotes the Hadamard product) is nonsingular with probability 1.
\end{lemma}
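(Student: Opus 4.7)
The plan is to reduce both assertions to the non-vanishing of a generalized (lacunary) Vandermonde determinant, and then use a standard Schwartz--Zippel--style argument for the random scaling.

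First, I would view $\det(\bfX(D))$ via the Leibniz formula as a polynomial in $D$ with real coefficients. To show it is not identically zero, it suffices to exhibit a single real value $D=d$ for which $\det(\bfX(d)) \neq 0$. I would pick any $d>1$, so that the scalars $x_j := d^{a_j}$ are strictly positive and pairwise distinct (because $0\le a_0<a_1<\cdots<a_{v-1}$ and $t \mapsto d^t$ is strictly increasing when $d>1$). After this substitution, $\bfX(d)$ has $(i,j)$-th entry $x_j^{b_i}$, i.e., it is a generalized Vandermonde matrix with distinct positive nodes $x_j$ and distinct non-negative integer exponents $b_i$.

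The key step is then to show this generalized Vandermonde matrix is nonsingular. For this I would argue by contradiction on linear independence of columns: suppose $\sum_{i=0}^{v-1} c_i\, x^{b_i}$ vanishes at the $v$ distinct positive reals $x_0,\ldots,x_{v-1}$, with not all $c_i=0$. By Descartes' rule of signs, a nonzero real polynomial with $m$ nonzero monomials has at most $m-1$ positive real roots. Our expression has at most $v$ nonzero terms yet $v$ distinct positive roots, which forces every $c_i=0$. Hence the columns of $\bfX(d)$ are linearly independent, so $\det(\bfX(d)) \neq 0$, establishing that $\det(\bfX(D))$ is a nonzero element of $\mathbb{R}[D] \subset \mathbb{R}((D))$.

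For the random-scaling statement, I would view $\det(\bfR\circ \bfX(D))$ as a polynomial in the indeterminates $D$ together with the entries $\{R_{ij}\}$. Substituting $R_{ij}\equiv 1$ recovers $\det(\bfX(D))$, which is nonzero by the previous step, so $\det(\bfR\circ \bfX(D))$ is a nonzero element of $\mathbb{R}[D,\{R_{ij}\}]$. Fixing $d$ as above and setting $f(\bfR) := \det(\bfR\circ \bfX(d))$, the same argument (evaluating at $R_{ij}\equiv 1$ gives $\det(\bfX(d))\neq 0$) shows $f$ is a nonzero polynomial in the $v^2$ variables $R_{ij}$. Since the zero set of a nonzero multivariate real polynomial has Lebesgue measure zero, $f(\bfR)\neq 0$ almost surely when the $R_{ij}$ are drawn i.i.d.\ from any continuous distribution. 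Consequently $\det(\bfR\circ \bfX(D))$ is a nonzero polynomial in $D$ with probability one, completing the proof.

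The main obstacle is the generalized Vandermonde non-vanishing; once Descartes' rule of signs is invoked, everything else is a routine ``nonzero polynomial has measure-zero root set'' argument. An alternative route to the same step is the classical Schur polynomial identity $\det(x_j^{b_i}) = s_\lambda(x_0,\ldots,x_{v-1})\prod_{j<k}(x_k-x_j)$ with $\lambda_i = b_{v-1-i} - i$, but the Descartes argument is more self-contained.
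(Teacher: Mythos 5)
Your proof is correct, and the core step is handled by a genuinely different argument than the paper's. The paper proves nonsingularity of $\bfX(D)$ by working formally in $\mathbb{R}((D))$: after a row permutation, it factors $\det(\hat{\bfX}(D))$ as the product of a Vandermonde determinant in $D^{a_0},\dots,D^{a_{v-1}}$ and a Schur polynomial $\mathcal{S}_\lambda(D^{a_0},\dots,D^{a_{v-1}})$, and then notes that the Schur polynomial has nonnegative integer (hence not-all-zero) coefficients by the Young-tableau expansion. Your route instead specializes $D$ to a real number $d>1$ (or any point where the nodes $d^{a_j}$ are distinct positive reals) and invokes Descartes' rule of signs to show that a polynomial $\sum_i c_i x^{b_i}$ with at most $v$ monomials cannot vanish at $v$ distinct positive points unless all $c_i=0$. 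Both establish the same fact about lacunary Vandermonde determinants; the Schur-polynomial factorization has the advantage of giving an explicit closed form for the determinant (which the paper even illustrates in a worked example), while your Descartes argument is more elementary and avoids symmetric-function machinery entirely. For the random-scaling claim your treatment is essentially the paper's (evaluate at $\bfR=\mathbf{1}$ to certify a nonzero polynomial, then use that a nonzero real polynomial has measure-zero zero set), though you organize it a bit more tightly by first fixing $D=d$ so the measure-zero argument lives in a single polynomial $f(\bfR)$ rather than in the collection of $D$-coefficient polynomials. One small slip in your write-up: you say you argue ``linear independence of columns,'' but the relation $\sum_i c_i x_j^{b_i}=0$ for all $j$ is a dependence among the \emph{rows} of the matrix $(x_j^{b_i})$; since row and column rank coincide this is harmless, but the phrasing should be corrected.
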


The proof of Lemma \ref{lemma:schur_poly} involves Schur polynomials that are defined next.
\begin{definition}
\label{defn:schur_poly}
Let $\lambda_0 \geq \lambda_1 \geq \dots \lambda_{v-1}$ be non-negative integers and let $\mathbf{\lambda} = (\lambda_0, \dots, \lambda_{v-1})$. Then,
\begin{align}
\mathcal{S}_\lambda(x_0, \dots, x_{v-1}) = \sum_{T}  x_0^{t_0} x_1^{t_1} \dots x_{v-1}^{t_{v-1}} \label{eq:young_tableau}
\end{align}
where the summation is over all semistandard Young tableaux $T$ of shape $\mathbf{\lambda}$ \cite{macdonald_symm}.
\end{definition}
A Young diagram of shape $\mathbf{\lambda}$ consists of a collection of boxes arranged in left-justified rows. The $i$-th row has $\lambda_i$ boxes. A semistandard Young tableau $T$ is obtained by filling the boxes with the integers $0, \dots, v-1$ such that entries are in ascending order from left to right in the rows and in strictly increasing order from top to bottom in the columns. The $t_i$ values in (\ref{eq:young_tableau}) are obtained by counting the occurrences of the number $i$ in tableau $T$.

\begin{proof}
Matrix $\bfX(D)$ can be written upon permuting some rows as $\hat{\bfX}(D)$ which is given by
\begin{equation*}
\hat{\bfX}(D) \; = \; \begin{bmatrix}
    \left( D^{a_0} \right)^{\lambda_0 + v - 1} & \left( D^{a_1} \right)^{\lambda_0 + v - 1} & \dots & \left( D^{a_{v-1}} \right)^{\lambda_0 + v - 1} \\
	\left( D^{a_0} \right)^{\lambda_1 + v - 2} & \left( D^{a_1} \right)^{\lambda_1 + v - 2}  &  \dots & \left( D^{a_{v-1}} \right)^{\lambda_1 + v - 2} \\
	\threevdots & \threevdots & \threevdots & \threevdots \\
	\left( D^{a_0} \right)^{\lambda_{v-1} } & \left( D^{a_1} \right)^{\lambda_{v-1}} & \dots & \left( D^{a_{v-1}} \right)^{\lambda_{v-1} }
\end{bmatrix}\
\end{equation*} where we can assume that $\lambda_0 \geq \lambda_1 \geq \dots \geq \lambda_{v-1}$.
\noindent We need to prove that the determinant of $\hat{\bfX}(D)$ is non-zero. According to \cite{macdonald_symm} (Chapter 1),
\begin{align*}
\det (\hat{\bfX}(D)) \; = \; \det\left(\bfZ( D^{a_0}, D^{a_1}, \dots, D^{a_{v-1}} )\right) \; \times \; \mathcal{S}_{\mathbf{\lambda}} \left( D^{a_0}, D^{a_1}, \dots, D^{a_{v-1}} \right),
\end{align*}
where
\begin{align}
\bfZ( D^{a_0}, \dots, D^{a_{v-1}} )  = \; \begin{bmatrix}
    \left( D^{a_0} \right)^{v - 1} & \left( D^{a_1} \right)^{v - 1} & \dots & \left( D^{a_{v-1}} \right)^{v - 1} \\
	\left( D^{a_0} \right)^{v - 2} & \left( D^{a_1} \right)^{v - 2}  &  \dots & \left( D^{a_{v-1}} \right)^{v - 2} \\
	\threevdots & \threevdots & \threevdots & \threevdots \\
	 D^{a_0}  & D^{a_1}  & \dots &  D^{a_{v-1}}  \\
	1 & 1 & \dots & 1
\end{bmatrix} .
\label{eq:vandermonde_eg}
\end{align}
Note that $\det\left(\bfZ( D^{a_0}, D^{a_1}, \dots, D^{a_{v-1}} )\right) $ is a non-zero polynomial in $D$ as it is a Vandermonde matrix.

Furthermore, based on Definition \ref{defn:schur_poly}, $\mathcal{S}_{\mathbf{\lambda}} \left( D^{a_0}, D^{a_1}, \dots, D^{a_{v-1}} \right)$ consists of the sum of terms of the form $\left( D^{a_0} \right)^{t_0} \; \left( D^{a_1} \right)^{t_1} \; \dots \; \left( D^{a_{v-1}} \right)^{t_{v-1}}$ all of which have positive coefficients. Thus, it follows that $\mathcal{S}_{\mathbf{\lambda}} \left( D^{a_0}, D^{a_1}, \dots, D^{a_{v-1}} \right)$ is not the zero-polynomial.
\end{proof}

\begin{proof}[Proof of Corollary \ref{thm2_cor}]
To see the extension, we note that $\det (\bfR \circ \bfX(D))$ is a polynomial in $D$ whose coefficients in turn are multivariate polynomials in the elements of $\bfR$, i.e., $\{r_{i,j}\}, 0 \leq i,j \leq v-1$. Based on the proof above, it is clear that setting $\bfR$ to be a matrix of all-ones results in a nonsingular matrix. This implies that $\det (\bfR \circ \bfX(D))$ is not identically zero. Next, the elements of $\bfR$ are chosen i.i.d. from a continuous distribution. Therefore the probability that all the coefficients evaluate to zero over the random choice is also zero.
\end{proof}

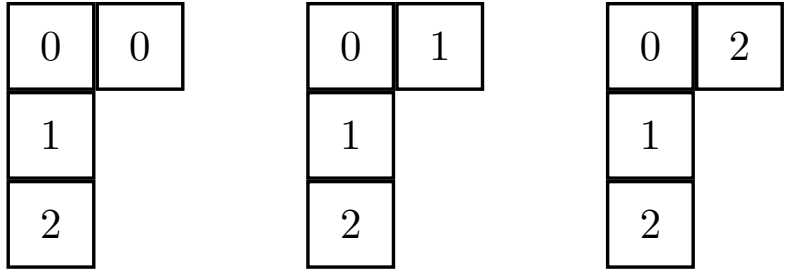
\begin{figure}[t]
\centering
\captionsetup{justification=centering}
\resizebox{0.65\linewidth}{!}{
\begin{tikzpicture}[auto, thick, node distance=2cm, >=triangle 45]
\draw
	node at (0,0)[]{}
	node [block] (block0){$0$}
    node [block, below = 0.0005 cm of block0] (block1) {$1$}
    node [block, below = 0.0005 cm of block1] (block2) {$2$}
	node [block, right = 0.0005 cm of block0] (block3) {$0$}

	node [block, right = 1 cm of block3] (block0){$0$}
    node [block, below = 0.0005 cm of block0] (block1) {$1$}
    node [block, below = 0.0005 cm of block1] (block2) {$2$}
	node [block, right = 0.0005 cm of block0] (block3) {$1$}
	
	node [block, right = 1 cm of block3] (block0){$0$}
    node [block, below = 0.0005 cm of block0] (block1) {$1$}
    node [block, below = 0.0005 cm of block1] (block2) {$2$}
	node [block, right = 0.0005 cm of block0] (block3) {$2$}
    ;
\end{tikzpicture}
}
\caption{\small Young tableaux of shape $\lambda = (2, 1, 1)$ leads to three different distribution for $T = \left\lbrace (2, 1, 1), (1, 2, 1), (1, 1, 2)\right\rbrace$ which helps to obtain $\mathcal{S}_{\lambda} \left( D, D^2, D^4 \right)$}
\label{schur}
\end{figure} 

\begin{example}[Illustration of Lemma 2]
Suppose that $v=3$ and consider the square submatrix,
\begin{align*}
\bfE = \begin{bmatrix}
    D^4 & D^8 & D^{16}\\
    D^2 & D^4 & D^8\\
	D & D^2 & D^4
	\end{bmatrix}\
\end{align*} where $\lambda_0 = 2, \lambda_1 = 1$ and $\lambda_2 = 1$, so $\lambda = (2, 1, 1)$. The determinant of $\bfE$ is given by
\begin{align*}
 \det (\bfE) \; & = \; \mathcal{S}_{\lambda} \left( D, D^2, D^4 \right)\; \times \det
\left( \begin{bmatrix}
    D^2 & D^4 & D^8 \\
    D & D^2 & D^4\\
	1 & 1 & 1
\end{bmatrix} \right) \; \\
 & = \; \mathcal{S}_{\lambda} \left( D, D^2, D^4 \right)\; \times \left[ \left( D - D^2 \right) \left(D^2 - D^4 \right) \left(D - D^4 \right) \right]
\end{align*} The Schur polynomial can be obtained from Fig. \ref{schur} as
\begin{align*}
\mathcal{S}_{\lambda} \left( D, D^2, D^4 \right) =   \left(D^{4}\right)^2 \left(D^{2}\right)^1\left(D\right)^1 + \left(D^{4}\right)^1\left(D^{2}\right)^2\left(D\right)^1 + \left(D^{4}\right)^1\left(D^{2}\right)^1\left(D\right)^2  = \;D^{11} + D^9 + D^8.
\end{align*}
\end{example}

\subsection{Example of peeling decoder}
\label{aux:peeling_eg}
\begin{example}
Consider Example \ref{matmatexample} for matrix-matrix multiplication, as shown in Fig. \ref{matmat} and suppose that workers $W0$ and $W1$ are stragglers. The goal of the master node is to recover all products of the form $\bfA^T_{\langle i_1,j_1 \rangle} \bfB_{\langle i_2,j_2 \rangle}$ for $i_1 \in [2], j_1 \in [4], i_2 \in [2], j_2 \in [3]$, hence we have total $2 \times 4 \times 2 \times 3 = 48$ unknowns. Note that we can directly obtain $4 \times 6 = 24$ unknowns from workers $W2$ and $W3$. So it remains to recover all unknowns of the form $\bfA^T_{\langle 0,j_1 \rangle} \bfB_{\langle i_2,j_2 \rangle}$ for $j_1 \in [4], i_2 \in [2], j_2 \in [3]$ from workers $W4$ and $W5$.

First, we concentrate on the first block product of $W5$, which helps to recover $\bfA^T_{\langle 0,0 \rangle} \bfB_{\langle 0,0 \rangle}$. Following this we examine the first block product of $W4$, which is $\left(\bfA_{\langle 0,0 \rangle} + \bfA_{\langle 1,0 \rangle} \right)^T \left(\bfB_{\langle 0,0 \rangle} + \bfB_{\langle 1,0 \rangle}\right)$; the only unknown here is $\bfA^T_{\langle 0,0 \rangle} \bfB_{\langle 1,0 \rangle}$ which can therefore be decoded. We can keep moving back and forth between $W4$ and $W5$ and it can be verified that we can recover all the block products $\bfA^T_{\langle 0,j_1 \rangle} \bfB_{\langle i_2,j_2 \rangle}$ in a similar fashion.
\end{example}

\newcommand{\tB}{\tilde{\bfB}}



\subsection{Proof of Theorem \ref{theorem:regalia_stuff}}
\label{sec:proof_regalia_stuff}

Let $\bar{b}$ be a vector of length $2q-1$, whose entries are indexed as $\bar{b}_\ell, -(q-1) \leq \ell \leq (q-1)$. A Toeplitz matrix of size $q\times q$, denoted by $\mathrm{Toeplitz(\bar{b})}$ is such that its $(i,j)$-th entry is given by $\bar{b}_{i-j}$ for $i \in [q], j \in [q]$. Thus, it is such that each diagonal is a constant from top-left to bottom-right.

Our proof of Theorem \ref{theorem:regalia_stuff} relies on a result from  \cite{gazzah2001asymptotic}.
Consider a $kq \times kq$ matrix $\tB$ that has Toeplitz blocks of size $q\times q$ with the $(i,j)$-th block specified by the $(2q-1)$-length vector $\bar{b}^{i,j}$. 
To be precise, for $i=0,1,\dots,(k-1), \ j=0,1,\dots,(k-1)$,
\[
(\tB)_{i,j} = \mathrm{Toeplitz}(\bar{b}^{i,j}).
\]

The result in \cite{gazzah2001asymptotic} shows that the minimum and maximum eigenvalues of such a matrix can be bounded by computing the minimum and maximum of the eigenvalues of the following ({\it much smaller}) $k \times k$ Fourier transform (FT) matrix $\bfB (\omega)$ over the frequency parameter $\omega$. The $(i,j)$-the entry of $\bfB (\omega)$ is defined by simply computing the Fourier transform of the corresponding vector $\bar{b}^{i,j}$, i.e.,
\[
\left( \bfB(\omega) \right)_{i,j}
= \sum_{\ell=-(q-1)}^{(q-1)} \bar{b}^{i,j}_\ell e^{-\textrm{i} \omega \ell}.
\]
We can now state the result.

\begin{lemma}[Theorem 3 of \cite{gazzah2001asymptotic}] ~\\

\begin{itemize}[wide, labelwidth=!, labelindent=0pt]
\item[(i)] For all $q$, the eigenvalues of $\tilde{\bfB}$ lie in
\[
\left[\underset{\omega \in [-\pi,\pi]}{\min} \lambda_{\min} \mathbf{B}(\omega) \; \; , \;  \underset{\omega \in [-\pi,\pi]}{\max} \lambda_{\max} \mathbf{B}(\omega) \right].
\]
\item[(ii)] Furthermore,
\begin{align}
\lim_{q\to\infty} \lambda_{\min} \left( \tilde{\bfB} \right) & =  \underset{\omega \in [-\pi, \pi]}{\text{min}}  \lambda_{\min} \left(  \mathbf{B}(\omega) \right) ; \\
\lim_{q\to\infty} \lambda_{\max} \left( \tilde{\bfB} \right) & =  \underset{\omega \in [-\pi, \pi]}{\text{max}}  \lambda_{\max} \left(  \mathbf{B}(\omega) \right).
\end{align}

\end{itemize}
\end{lemma}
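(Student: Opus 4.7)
The approach I would take is classical block-Toeplitz harmonic analysis, essentially an operator-theoretic generalization of Szeg\H{o}'s theorem for scalar Toeplitz matrices. The central identity is a Parseval-type representation of the quadratic form $x^* \tilde{\bfB} x$ in terms of the matrix symbol $\bfB(\omega)$. Partition any $x \in \mathbb{C}^{kq}$ into $k$ blocks $x_0, \dots, x_{k-1} \in \mathbb{C}^q$, and associate to each block the discrete-time Fourier polynomial $X_i(\omega) = \sum_{j=0}^{q-1} (x_i)_j e^{-\mathrm{i}\omega j}$. A direct computation using $\frac{1}{2\pi}\int_{-\pi}^{\pi} \bar b^{i,j}(\omega)\, e^{\mathrm{i}\omega n}\, d\omega = \bar b^{i,j}_n$ shows that
\[
x^* \tilde{\bfB}\, x \;=\; \frac{1}{2\pi} \int_{-\pi}^{\pi} X(\omega)^* \bfB(\omega)\, X(\omega)\, d\omega ,
\]
where $X(\omega) = [X_0(\omega), \dots, X_{k-1}(\omega)]^T$, and Parseval's identity gives $\|x\|^2 = \frac{1}{2\pi}\int_{-\pi}^{\pi} \|X(\omega)\|^2 d\omega$. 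This identity is the workhorse of the entire argument.

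For part (i), assuming $\tilde{\bfB}$ (and hence $\bfB(\omega)$) is Hermitian, the Rayleigh bound at each frequency gives $\lambda_{\min}(\bfB(\omega)) \|X(\omega)\|^2 \le X(\omega)^* \bfB(\omega) X(\omega) \le \lambda_{\max}(\bfB(\omega)) \|X(\omega)\|^2$. Integrating and applying Parseval immediately yields
\[
\min_{\omega} \lambda_{\min}(\bfB(\omega))\, \|x\|^2 \;\le\; x^* \tilde{\bfB}\, x \;\le\; \max_{\omega} \lambda_{\max}(\bfB(\omega))\, \|x\|^2 ,
\]
which by the min-max (Courant–Fischer) characterization confines every eigenvalue of $\tilde{\bfB}$ to the stated interval, uniformly in $q$.

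For part (ii), only one-sided bounds are left; I need to produce approximate eigenvectors that make the inequalities tight as $q \to \infty$. Let $\omega_\star \in [-\pi,\pi]$ attain $\min_\omega \lambda_{\min}(\bfB(\omega))$ and let $v \in \mathbb{C}^k$ be a unit eigenvector of $\bfB(\omega_\star)$ for that eigenvalue. Construct the test vector by setting $x_i = v_i \cdot \frac{1}{\sqrt{q}}\,[1, e^{\mathrm{i}\omega_\star}, \dots, e^{\mathrm{i}\omega_\star(q-1)}]^T$ so that $\|x\|=1$ and each $X_i(\omega) = v_i\cdot \frac{1}{\sqrt q}\,D_q(\omega-\omega_\star)$, where $D_q$ is a Dirichlet-kernel-type trigonometric polynomial whose mass $\frac{1}{2\pi}\int |D_q|^2/q\,d\omega = 1$ concentrates in a window of width $O(1/q)$ about the origin. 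Plugging this into the Parseval identity and using continuity of $\omega \mapsto \bfB(\omega)$ (each entry is a finite Fourier polynomial, hence continuous), one shows that $x^* \tilde{\bfB}\, x \to v^* \bfB(\omega_\star) v = \min_\omega \lambda_{\min}(\bfB(\omega))$ as $q \to \infty$. This forces the lower bound from (i) to be attained in the limit; the upper bound is symmetric.

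The main obstacle I anticipate is the concentration/continuity step at the end of part (ii): controlling the error $\frac{1}{2\pi}\int (X(\omega)^*\bfB(\omega) X(\omega) - X(\omega)^*\bfB(\omega_\star)X(\omega))\,d\omega$ and showing it vanishes. This requires a uniform modulus-of-continuity argument for $\bfB(\omega)$ combined with the fact that the $L^2$-mass of $D_q(\omega-\omega_\star)/\sqrt{q}$ outside any fixed neighborhood of $\omega_\star$ tends to zero. Edge cases where $\omega_\star = \pm \pi$, or where the minimum is attained on a set of positive measure, are handled by the same localization argument with minor modifications. Once this convergence is in hand, combining with the fixed-$q$ sandwich from (i) yields both limits and closes the proof.
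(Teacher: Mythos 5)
Your proposal is correct in substance, but it is worth noting that the paper does not prove this lemma at all: it is imported verbatim as Theorem 3 of the cited reference (Gazzah, Regalia and Delmas), and the paper's only contribution around it is verifying that $\tilde{\bfG}_\calI\tilde{\bfG}_\calI^T$ has the required block-Toeplitz structure so that the lemma applies. What you have written is a self-contained proof sketch of the imported result by the standard block Szeg\H{o} route: your Parseval identity is exactly consistent with the paper's conventions (with $(\mathrm{Toeplitz}(\bar b))_{m,n}=\bar b_{m-n}$ and $(\bfB(\omega))_{i,j}=\sum_\ell \bar b^{i,j}_\ell e^{-\mathrm{i}\omega\ell}$, taking $X_i(\omega)=\sum_j (x_i)_j e^{-\mathrm{i}\omega j}$ makes $x^*\tilde{\bfB}x=\frac{1}{2\pi}\int X^*\bfB(\omega)X\,d\omega$ come out with the right signs), the Rayleigh--Courant--Fischer argument gives (i) for every $q$, and the Fej\'er-type concentration of $|D_q(\omega-\omega_\star)|^2/q$ together with the sandwich from (i) gives both limits in (ii). The only point that deserves explicit care in your write-up is that, as stated in the paper, the generating vectors $\bar b^{i,j}$ have length $2q-1$ and hence the symbol $\bfB(\omega)$ nominally depends on $q$; the asymptotic claim (ii) is meaningful (and your continuity/localization step valid) when the underlying lag sequences are fixed and $\tilde{\bfB}$ is their $q$-th truncation, which is the setting of the cited theorem and holds in the paper's application since the nonzero lags $-a_{\tilde\ell}(b_i-b_j)$ do not grow with $q$. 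With that caveat made explicit, your argument would serve as a legitimate replacement for the external citation, at the cost of the technical bookkeeping you already identified; the paper's approach buys brevity by outsourcing exactly this work.
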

In other words, the behavior of the eigenvalues of $\tilde{\bfB}$ which is a $kq \times kq$ matrix can be studied instead by computing the eigenvalues of the $k \times k$ matrix $\mathbf{B}( \omega)$ and finding its minimum and maximum eigenvalues over the range $\omega \in [-\pi,\pi]$.

The next two lemmas below help prove that $\tilde{\bfG}_\calI \tilde{\bfG}_\calI^{T}$ has Toeplitz blocks.
\begin{figure*}[!t]
\begin{align}
\label{gtilde}
\centering
\tilde{\bfG} = \begin{bmatrix}
\bovermat{$k$ block-columns}{\bfI_q & 0 & \dots & 0 &} r_{00}\bovermat{$n-k$ block-columns}{\tilde{\bfD}^{a_0 b_{k-1};a_0 b_0} &    \dots & r_{(s-1)0} \tilde{\bfD}^{a_{s-1}b_{k-1};a_{s-1} b_0} &}\\
0 & \bfI_q & \dots & 0 & r_{10} \tilde{\bfD}^{a_0 b_{k-1};a_0 b_1} &    \dots & r_{1(s-1)}\tilde{\bfD}^{a_{s-1}b_{k-1};a_{s-1} b_1}\\
\vdots & \vdots & \ddots & \vdots & r_{20}\tilde{\bfD}^{a_0 b_{k-1};a_0 b_2} & \dots & r_{2(s-1)}\tilde{\bfD}^{a_{s-1}b_{k-1};a_{s-1} b_2}\\
0      &    0   &    0   & \bfI_q & r_{(k-1)0} \tilde{\bfD}^{a_0 b_{k-1};a_0 b_{k-1}} &  \dots & r_{(k-1)(s-1)} \tilde{\bfD}^{a_{s-1}b_{k-1};a_{s-1} b_{k-1}}
\end{bmatrix} .
\end{align}
\end{figure*}

Let $\bfU$ and $\bfL = \bfU^T$ denote square upper and lower shift matrices respectively, i.e., $\bfU$ is a $q\times q$ matrix such that
\begin{align*}
\bfU_{ij} = \begin{cases}
1 & \text{if~} j = i+1\\
0 & \text{~otherwise}.
\end{cases}
\end{align*}
Thus, for instance if $q=5$, then
\begin{align}
\bfU=\begin{bmatrix}
0 & 1 & 0 & 0 & 0 \\
0 & 0 & 1 & 0 & 0 \\
0 & 0 & 0 & 1 & 0 \\
0 & 0 & 0 & 0 & 1 \\
0 & 0 & 0 & 0 & 0
\end{bmatrix} . \label{eq:shift_matrix}
\end{align}

\begin{lemma}
\label{lemma:toeplitz_structure}
Let $h \geq \max(i,j)$. Then
\begin{align*}
(\tilde{\bfD}^{h;i}) (\tilde{\bfD}^{h;j})^T & = \begin{cases}
\bfU^{i-j} & \text{if~} i > j,\\
 \bfL^{j-i} & \text{if~} i \leq j.
\end{cases},
\end{align*}
Note that the matrices on the RHS above are Toeplitz.
\end{lemma}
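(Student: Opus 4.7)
The proof will be a direct matrix-entry computation. I would first recall the definition: $\tilde{\bfD}^{h;j} = [\mathbf{0}_{q \times j} \;\; \bfI_q \;\; \mathbf{0}_{q \times (h-j)}]$ is a $q \times (q+h)$ matrix, so its $(a,c)$-entry (for $a \in [q]$, $c \in [q+h]$) equals $1$ iff $c = j + a$, and is zero otherwise. Hence the product $(\tilde{\bfD}^{h;i})(\tilde{\bfD}^{h;j})^T$ is a $q \times q$ matrix whose $(a,b)$-entry equals
\[
\sum_{c=0}^{q+h-1} \tilde{\bfD}^{h;i}_{a,c}\, \tilde{\bfD}^{h;j}_{b,c},
\]
which is $1$ precisely when there exists $c$ with $c = i + a$ and $c = j + b$, and $0$ otherwise. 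Thus the $(a,b)$-entry equals $1$ iff $b - a = i - j$, and $0$ otherwise.

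Next, I would identify the resulting $\{0,1\}$ matrix with a power of a shift matrix. From the definition of $\bfU$, its $(a,b)$-entry is $1$ iff $b = a+1$, so by induction the $(a,b)$-entry of $\bfU^{k}$ is $1$ iff $b = a + k$ (and $a + k \leq q - 1$); symmetrically, the $(a,b)$-entry of $\bfL^{k} = (\bfU^T)^{k}$ is $1$ iff $a = b + k$. Matching these patterns against the condition $b - a = i - j$ gives the two cases of the lemma: when $i > j$ the product is $\bfU^{i-j}$, and when $i \leq j$ it is $\bfL^{j-i}$ (with $\bfL^{0} = \bfI_q$ covering the case $i = j$).

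The hypothesis $h \geq \max(i,j)$ is used only to ensure that both $\tilde{\bfD}^{h;i}$ and $\tilde{\bfD}^{h;j}$ are well-defined as $q \times (q+h)$ matrices, so their product is legitimate; no further use of $h$ is needed, since the resulting $q \times q$ matrix depends only on the difference $i - j$. Finally, the Toeplitz property claimed at the end of the statement is immediate from the fact that the $(a,b)$-entry depends only on $b - a$, which is the defining property of a Toeplitz matrix. There is no real obstacle here; the proof is a short bookkeeping argument and can be written in a few lines.
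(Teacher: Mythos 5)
Your proof is correct and takes essentially the same route as the paper: a direct verification that $(\tilde{\bfD}^{h;i})(\tilde{\bfD}^{h;j})^T$ equals the appropriate shift power, done entrywise in your version and blockwise in the paper's (which partitions the factors into $\mathbf{0}$ and $\bfI_q$ blocks and reads off the block structure of the product). Your entrywise formulation has the small advantage of making the final remark about the Toeplitz property transparent, since it exhibits the $(a,b)$-entry as a function of $b-a$ alone.
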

\begin{proof}
We only prove the case when $i > j$ as the other part is very similar. The product $(\tilde{\bfD}^{h;i}) (\tilde{\bfD}^{h;j})^T$ can be expressed as
\begin{align*}
    \begin{bmatrix}
    \mathbf{0}_{q \times i} & {\bfI_q} & \mathbf{0}_{q \times (h-i)}
    \end{bmatrix}
\times \begin{bmatrix}
    \mathbf{0}_{j \times q}\\\\ {\bfI_q}\\\\ \mathbf{0}_{(h-j) \times q}
    \end{bmatrix} \;  = \;  \begin{bmatrix} \mathbf{0}_{(q - (i-j)) \times (i-j)} & \bfI_{q-(i-j)} \\
\mathbf{0}_{(i-j)\times (i-j)} &  \mathbf{0}_{(i-j)\times (q -(i-j))}
    \end{bmatrix} \; = \;  \bfU^{i-j}.
\end{align*}
\end{proof}

\begin{lemma}
Let $\tilde{\bfG}_\ell$ denote the $\ell$-th block-column of $\tilde{\bfG}$. For  $\ell=0,1,\dots, k-1$,
\begin{align*}
 [\tilde{\bfG}_\ell \tilde{\bfG}_\ell^T]_{i,j} =
\begin{cases}
 \bfI_q & \text{~if~} i=j=\ell,\\
\mathbf{0} & \text{~otherwise.}
\end{cases}
\end{align*}
For $\ell=k+\tilde{\ell}$, $\tilde{\ell}=0,1,\dots, s-1$,
and for $i \ge j$
\begin{align*}
 [\tilde{\bfG}_\ell \tilde{\bfG}_\ell^T]_{i,j} =
\begin{cases}
r_{i \tilde{\ell}}^2 \bfI_q & \text{~if~} i=j,\\
r_{i \tilde{\ell}}r_{j \tilde{\ell}} \bfU^{a_{\tilde{\ell}}(b_i - b_j)} & \text{~if~} i>j.
\end{cases}
\end{align*}
Since the matrix is symmetric, specifying its entries for $i \ge j$ is sufficient.
\end{lemma}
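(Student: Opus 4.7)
The plan is to compute the block entries of the outer product $\tilde{\bfG}_\ell \tilde{\bfG}_\ell^T$ by first reading off the block structure of the $\ell$-th block-column of $\tilde{\bfG}$ from Definition \ref{defn:tilde_G}, and then invoking Lemma \ref{lemma:toeplitz_structure} block-wise to simplify each product. Throughout, I view $\tilde{\bfG}_\ell \tilde{\bfG}_\ell^T$ as a $k \times k$ array of $q \times q$ blocks whose $(i,j)$-entry equals the product of block-row $i$ of $\tilde{\bfG}_\ell$ with the transpose of block-row $j$ of $\tilde{\bfG}_\ell$.

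For a message block-column ($0 \le \ell \le k-1$), $\tilde{\bfG}_\ell$ contains the single nonzero block $\bfI_q$ in block-row $\ell$ and zero blocks elsewhere. Consequently, block $(i,j)$ of $\tilde{\bfG}_\ell \tilde{\bfG}_\ell^T$ reduces to $\bfI_q \bfI_q^T = \bfI_q$ when $i = j = \ell$ and is the zero block in every other case, which is exactly the first statement.

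For a parity block-column $\ell = k + \tilde\ell$ with $\tilde\ell \in \{0, 1, \dots, s-1\}$, block-row $i$ of $\tilde{\bfG}_\ell$ equals $r_{i\tilde\ell}\, \tilde{\bfD}^{a_{\tilde\ell} b_{k-1};\, a_{\tilde\ell} b_i}$. Therefore block $(i,j)$ of $\tilde{\bfG}_\ell \tilde{\bfG}_\ell^T$ is $r_{i\tilde\ell} r_{j\tilde\ell}\, \tilde{\bfD}^{a_{\tilde\ell} b_{k-1};\, a_{\tilde\ell} b_i}\, (\tilde{\bfD}^{a_{\tilde\ell} b_{k-1};\, a_{\tilde\ell} b_j})^T$. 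Since $(b_0, \dots, b_{k-1})$ is strictly increasing, $b_{k-1} \ge \max(b_i, b_j)$, so the two shift parameters $a_{\tilde\ell} b_i$ and $a_{\tilde\ell} b_j$ are both dominated by $h = a_{\tilde\ell} b_{k-1}$, which is precisely the hypothesis required to apply Lemma \ref{lemma:toeplitz_structure}. The lemma then yields $\bfU^{a_{\tilde\ell}(b_i - b_j)}$ for $i > j$ and $\bfI_q$ for $i = j$; pulling out the scalar prefactor $r_{i\tilde\ell} r_{j\tilde\ell}$ gives the second statement exactly as claimed.

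I do not anticipate any serious obstacle; the argument is direct bookkeeping built on top of Lemma \ref{lemma:toeplitz_structure}. The only mild issue worth checking is the boundary case $a_{\tilde\ell} = 0$ (which can occur only when the parity exponent vector begins at $0$): there the two shift matrices coincide and each is simply $\bfI_q$, but the stated formula remains valid with the natural convention $\bfU^0 = \bfI_q$, so no separate case analysis is needed.
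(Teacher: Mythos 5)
Your proposal is correct and follows exactly the paper's own argument: the paper's proof is the one-line observation that the result ``follows directly by using Lemma~\ref{lemma:toeplitz_structure} and the definition of $\tilde{\bfG}_\ell$,'' which is precisely the block-wise application you carry out. Your verification that $h = a_{\tilde\ell} b_{k-1}$ dominates both shift parameters (so Lemma~\ref{lemma:toeplitz_structure} applies) is a correct detail that the paper leaves implicit.
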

\begin{proof}
This follows directly by using Lemma \ref{lemma:toeplitz_structure} and the definition of $\tilde{\bfG}_\ell$.
\end{proof}

Furthermore, using the property that the sum of Toeplitz matrices is Toeplitz, we can conclude that for any subset $\calI \subset \{0, \dots, n-1\}$ such that $|\calI| = k$, we have that the matrix $\tilde{\bfG}_\calI \tilde{\bfG}_\calI^{T}$ is a matrix with Toeplitz blocks.

For ease of presentation let $\calI = \calI_1 \cup \calI_2$ where $\calI_1 \subseteq \{0, \dots, k-1\}$, $\calI_2 \subseteq \{k, \dots, n-1\}$ and $\calI_1 \cap \calI_2 = \emptyset$ and $\tilde{\ell} = \ell-k$ . Then, for $0 \leq i, j \leq k-1$  and $i \geq j$ we can express the $(i,j)$-th block of $(\tilde{\bfG}_\calI) (\tilde{\bfG}_\calI)^{T}$ as follows.

\begin{align}
[(\tilde{\bfG}_\calI) (\tilde{\bfG}_\calI)^{T}]_{i,j} = \begin{cases}
(\sum_{\ell \in \calI_2} r_{i \tilde{\ell}}^2 )\bfI_q + \mathds{1}_{i \in \calI_1} \bfI_q, & \text{~if $i=j$,}\\
\sum_{\ell \in \calI_2} r_{i \tilde{\ell}} r_{j \tilde{\ell}}\bfU^{a_{\tilde{\ell}} (b_i - b_j)} & \text{~if $i>j$.}
\end{cases} \label{eq:calT_specification}
\end{align}
where $\mathds{1}$ denotes the indicator function. By symmetry it suffices to specify $[(\tilde{\bfG}_\calI) (\tilde{\bfG}_\calI)^{T}]_{i,j}$ for $i \geq j$. Each of the blocks is of dimension $q \times q$.

\begin{proof}[Proof of Theorem \ref{theorem:regalia_stuff}]
We emphasize that our matrix $[(\tilde{\bfG}_\calI) (\tilde{\bfG}_\calI)^{T}]$ (see  (\ref{eq:calT_specification})) has Toeplitz blocks. Let $\tilde{\bfB} = (\tilde{\bfG}_\calI) (\tilde{\bfG}_\calI)^{T}$. Then we have

\begin{align}
\tilde{\bfB}_{i,j} \; = \;  [(\tilde{\bfG}_\calI) (\tilde{\bfG}_\calI)^{T}]_{i,j} \nonumber = \;  \begin{cases}
(\sum_{\ell \in \calI_2} r_{i \tilde{\ell}}^2 )\bfI_q + \mathds{1}_{i \in \calI_1} \bfI_q, & \text{~if $i=j$,}\\
\sum_{\ell \in \calI_2} r_{i \tilde{\ell}} r_{j \tilde{\ell}}\bfU^{a_{\tilde{\ell}} (b_i - b_j)} & \text{~if $i>j$.}
\end{cases}
\end{align}
where $\tilde{\ell} = \ell -k$. Observe $\bfU^a$ is a matrix with 1's on the $(a+1)$-th diagonal and zeros everywhere else. Thus, $\tilde{\bfB}_{i,j}$ is a Toeplitz matrix with the $( a_{\tilde{\ell}}(b_i - b_j) )$-th diagonal equal to $r_{i \tilde{\ell}} r_{j \tilde{\ell}}$. Therefore, the corresponding sequence $\bar{b}^{i,j}$ for $i > j$ is given by
\begin{align*}
\bar{b}^{i,j}_m = \begin{cases}
r_{i \tilde{\ell}} r_{j \tilde{\ell}} & \text{~if~} m = - a_{\tilde{\ell}} (b_i - b_j), \\
0 & \text{~otherwise.}
\end{cases}
\end{align*}

Thus, following the discussion above, we obtain
\begin{align*}
(\bfB(\omega))_{i,j}
= \sum_{\ell \in \calI_2} r_{i \tilde{\ell}} r_{j \tilde{\ell}} \exp \left( \textrm{i} \omega a_{\tilde{\ell}} (b_i - b_j) \right)
\end{align*}

The expressions above can equivalently be expressed as replacing $D$ with $e^{\textrm{i} \omega}$ and then computing the inner product of $\bfG_\calI(e^{j\omega})(i, :)$ with $(\bfG_\calI(e^{\textrm{i} \omega}) (j, :))^* $.
Therefore, we can compactly represent
\begin{align*}
\bfB(\omega)
&= \bfG_\calI(e^{\textrm{i} \omega}) \bfG_\calI(e^{\textrm{i} \omega})^*.
\end{align*}
This concludes the proof.
\end{proof}

\subsection{Search Time for Random Convolutional Coding}
\label{sec:more_num_exp}

We run an experiment to tabulate the time needed to find a good random matrix $\bfR$. We run $50$ trials to find the best $\bfR$ for $n = 13, 14, 15$ with $s = 2, 3, 4$. It should be noted that the choice of $\bfR$ depends on all $\binom{n}{s}$ choices of stragglers. Fig. \ref{strtime234} shows the corresponding time for different pairs of $n$ and $s$. From the figure, it can be seen that our system (a processor with CPU speed $3.5 GHz$ and $16 GB$ RAM) needs only around $8$ minutes to find a good choice of $\bfR$ for even $n = 15$ and $s = 4$. In other cases, the required amount of time is even lesser. This indicates that for a reasonable system size, we do not need to wait too long to obtain a good choice of $\bfR$ that ensures that the worst case condition number is bounded. And it should be noted that this is a one-time cost for designing the coding scheme for a system with $n$ worker nodes which is resilient to $s = n - k$ stragglers.

\begin{figure}[t]
\centering
\captionsetup{justification=centering}
\resizebox{0.65\linewidth}{!}{
\begin{tikzpicture}
\begin{axis}[
width=5in,
height=3.203in,
at={(2.6in,0.852in)},
major x tick style = transparent,
ybar=2*\pgflinewidth,
bar width=20pt,
ymajorgrids,
xmajorgrids,
xlabel style={font=\color{white!15!black}, font = \Large},
ylabel style={font=\color{white!15!black}, font = \Large},
ylabel={Required time (in seconds)},
symbolic x coords={{\LARGE s = 2},{\LARGE s = 3},{\LARGE s = 4}},
xtick = data,
scaled y ticks = false,
enlarge x limits= 0.3,
ymin=0,
ymax=550,
legend cell align=left,
legend style={at={(0.02,0.63)}, nodes={scale=1.6}, anchor=south west, legend cell align=left, align=left, draw=white!15!black}
    ]
    \addplot[style={fill=my1color,mark=none}]
            coordinates {({\LARGE s = 2}, 29.96) ({\LARGE s = 3},86.98) ({\LARGE s = 4}, 195.04)};
\addlegendentry{$n = 13$}
   \addplot[style={fill=mycolor1,mark=none}]
            coordinates {({\LARGE s = 2},48.98) ({\LARGE s = 3},135.49) ({\LARGE s = 4},307.30)};
\addlegendentry{$n = 14$}
    \addplot[style={fill=mycolor2,mark=none}]
             coordinates {({\LARGE s = 2},73.30) ({\LARGE s = 3},194.02) ({\LARGE s = 4},485.24)};
             \addlegendentry{$n = 15$}
    \end{axis}

\end{tikzpicture}%
}
\caption{Comparison of required time to find a good choice of $\bfR$ for different $n$ and $s$.}
\label{strtime234}
\end{figure}
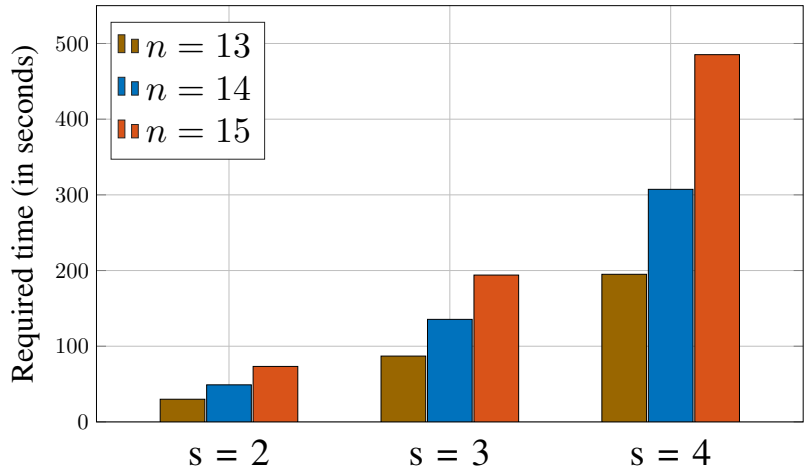

\ifCLASSOPTIONcaptionsoff
  \newpage
\fi

\end{document}